\PassOptionsToPackage{dvipsnames}{xcolor}
\documentclass[sigplan,screen,pbalance]{acmart}
\copyrightyear{2027}
\acmYear{2027}
\setcopyright{cc}
\setcctype{by}
\acmConference[ASPLOS '27]{Proceedings of the 32nd ACM International Conference on Architectural Support for Programming Languages and Operating Systems, Volume 1}{April 11--15, 2027}{Heraklion, Greece.}
\acmBooktitle{Proceedings of the 32nd ACM International Conference on Architectural Support for Programming Languages and Operating Systems, Volume 1 (ASPLOS '27), April 11--15, 2027, Heraklion, Greece}
\acmISBN{979-8-4007-3012-2/2027/04}
\acmDOI{10.1145/3845814.3850041}
\ccsdesc[500]{Software and its engineering~Compilers}
\ccsdesc[500]{Computer systems organization~Embedded systems}
\ccsdesc[300]{Hardware~Power estimation and optimization}

\keywords{Intermittent Computing, Checkpointing, Memory Allocation, Compilation, Energy Modeling, Mixed-Integer Linear Programming}

\usepackage{listings}
\usepackage{adjustbox}
\usepackage[inline]{enumitem}
\usepackage{subcaption}
\usepackage{keytheorems}
\usepackage{tikz}
\usepackage{xspace}
\usepackage[group-separator={,},group-minimum-digits=4]{siunitx}
\usetikzlibrary{arrows.meta, positioning, calc, fit, backgrounds}
\usepackage[capitalise,nameinlink]{cleveref}
\crefname{section}{}{}
\Crefname{section}{}{}
\creflabelformat{section}{\S#2#1#3}
\crefname{appendix}{Appendix}{Appendices}
\Crefname{appendix}{Appendix}{Appendices}
\crefname{supplement}{Supplement}{Supplements}
\Crefname{supplement}{Supplement}{Supplements}

\makeatletter
\let\section@prefix\@empty
\def\@seccntformat#1{%
  \ifcsname #1@prefix\endcsname\csname #1@prefix\endcsname\fi
  \csname the#1\endcsname\quad}
\newcommand{\startappendix}{%
  \appendix
  \def\section@prefix{Appendix~}%
  \crefalias{section}{appendix}%
  \crefalias{subsection}{appendix}%
  \crefalias{subsubsection}{appendix}}
\newcommand{\startsupplement}{%
  \setcounter{section}{0}%
  \def\section@prefix{Supplement~}%
  \renewcommand{\theHsection}{supplement.\Alph{section}}%
  \crefalias{section}{supplement}%
  \crefalias{subsection}{supplement}%
  \crefalias{subsubsection}{supplement}}
\makeatother

\usepackage{newfloat}
\DeclareFloatingEnvironment[name=Listing]{listing}
\newkeytheorem{theorem}[style=definition]
\newkeytheorem{lemma}[style=definition]
\newkeytheorem{definition}[style=acmdefinition, sibling=theorem]

\newtheoremstyle{constraint}%
  {\topsep}{\topsep}{}{}{\bfseries}{}{ }%
  {(C\thmnumber{#2})\thmnote{ #3}}
\theoremstyle{constraint}
\newtheorem{constraint}{}
\newcommand{\cmark}[1]{\textbf{(C\ref{#1})}}
\newcommand{\LiveIn}{\operatorname{LiveIn}}

\newcommand{\eligibles}{\mathcal{V}_{\mathrm{cand}}}
\newcommand{\ineligibles}{\mathcal{V}_{\mathrm{fixed}}}
\newcommand{\variables}{\mathcal{V}}
\newcommand{\defs}{\mathcal{D}}

\newcommand{\ebuffer}{E_{\mathrm{budget}}}
\newcommand{\eaccum}{e}

\newcommand{\deltaE}{\delta}
\newcommand{\deltaEread}{\delta_r}
\newcommand{\deltaEwrite}{\delta_w}

\newcommand{\Eread}{\rho_r}
\newcommand{\Ewrite}{\rho_w}

\newcommand{\Epro}{E_{\mathrm{pro}}}
\newcommand{\Eepi}{E_{\mathrm{epi}}}

\newcommand{\Erst}{E^{r}}
\newcommand{\Esv}{E^{s}}

\newcommand{\Ebase}{E^{0}}

\newcommand{\eblk}{\varepsilon}
\newcommand{\estart}{\varepsilon^{+}}
\newcommand{\eend}{\varepsilon^{-}}

\newcommand{\Svm}{S_{\mathrm{VM}}}

\newcommand{\size}{\operatorname{size}}
\newcommand{\numReads}{\operatorname{numReads}}
\newcommand{\numWrites}{\operatorname{numWrites}}

\newcommand{\pred}{\mathrm{pred}}
\renewcommand{\succ}{\mathrm{succ}}

\newcommand{\tool}{\textsc{Bao}\xspace}
\newcommand{\rockclimb}{\textsc{RockClimb}\xspace}
\newcommand{\schematic}{\textsc{Schematic}\xspace}
\newcommand{\schematicO}{\textsc{Schematic-O3}\xspace}

\title[\tool]{\tool: Automatic Region Placement and Memory Allocation for Intermittent Computing}

\author{Byeongjee Kang}
\orcid{0000-0002-2817-5123}
\affiliation{%
  \institution{Carnegie Mellon University}
  \city{Pittsburgh}
  \state{PA}
  \country{USA}}
\email{byeongjee@cmu.edu}

\author{Jan Hoffmann}
\orcid{0000-0001-8326-0788}
\affiliation{%
  \institution{Carnegie Mellon University}
  \city{Pittsburgh}
  \state{PA}
  \country{USA}}
\email{jhoffmann@cmu.edu}

\author{Limin Jia}
\orcid{0000-0002-8160-349X}
\affiliation{%
  \institution{Carnegie Mellon University}
  \city{Pittsburgh}
  \state{PA}
  \country{USA}}
\email{liminjia@andrew.cmu.edu}

\author{Brandon Lucia}
\orcid{0000-0002-8902-2518}
\affiliation{%
  \institution{Carnegie Mellon University}
  \city{Pittsburgh}
  \state{PA}
  \country{USA}}
\affiliation{%
  \institution{Efficient Computer}
  \city{Pittsburgh}
  \state{PA}
  \country{USA}}
\email{blucia@andrew.cmu.edu}

\author{Feras A.~Saad}
\orcid{0000-0002-0505-795X}
\affiliation{%
  \institution{Carnegie Mellon University}
  \city{Pittsburgh}
  \state{PA}
  \country{USA}}
\email{fsaad@cmu.edu}

\begin{document}

\begin{abstract}
Intermittent computing enables batteryless embedded devices to operate
in harsh environment,
but frequent power failures interrupt program
execution and require careful management of state across power cycles.
The core challenge is to guarantee both memory consistency and forward progress
while minimizing the energy overhead of checkpointing.
Recent compile-time approaches identify regions that fit in the energy buffer.
At runtime, the device waits and recharges between
the regions.
However, they still rely on greedy or path-local heuristics
that commit to local decisions and can miss globally lower-overhead boundary
placements.

To address this limitation, we present \tool,
a system that jointly finds optimal energy-aware region formation and memory
allocation decisions with formal correctness guarantees, by
formulating it
as a mixed-integer linear program.
We prove that any feasible solution guarantees energy safety
and implement our approach in LLVM.
Our evaluation on 13 benchmarks across 3 capacitor sizes shows that
\tool outperforms existing baselines,
achieving 10\% faster execution and 52\%
fewer region boundary hits on average compared to the best baseline.
\end{abstract}

\maketitle

\section{Introduction}
\label{sec:introduction}

\begin{figure*}[ht]
\centering
\begin{adjustbox}{width=\textwidth}
\begin{tikzpicture}[
  font=\small,
  stepbox/.style={
    rectangle, rounded corners=3pt, draw=black!50, fill=white,
    minimum width=2.2cm, minimum height=1.7cm, inner sep=2pt
  },
  badge/.style={
    circle, fill=RoyalBlue!80!black, text=white,
    font=\small\bfseries, inner sep=0pt, minimum size=15pt
  },
  arr/.style={-{Stealth[length=5pt,width=3.5pt]}, semithick, black!60, rounded corners=3pt},
  lbl/.style={font=\footnotesize, text=black!55, fill=white, inner sep=1pt},
  energy/.style={semithick, NavyBlue, dashed, rounded corners=3pt},
  freq/.style={semithick, BrickRed, densely dotted, rounded corners=3pt},
  chunked/.style={semithick, OliveGreen, densely dash dot, rounded corners=3pt},
]

\node[minimum width=1.7cm, minimum height=1.55cm,
      draw=black!40, fill=gray!5, inner sep=4pt] (src) at (0,0) {};
\fill[gray!15] ([shift={(-0.3,0)}]src.north east) -- (src.north east)
    -- ([shift={(0,-0.3)}]src.north east) -- cycle;
\draw[black!40] ([shift={(-0.3,0)}]src.north east) -- (src.north east)
    -- ([shift={(0,-0.3)}]src.north east);
\node[font=\footnotesize, align=center] at ([yshift=-2pt]src.center) {Source\\[-1pt]Code};

\node[stepbox] (s2) at (2.5, 0) {};
\node[badge] at ([shift={(5pt,-5pt)}]s2.north west) {2};
\begin{scope}[shift={($(s2.center)+(0,0.35)$)}]
  \draw[thick, black!70] (90:0.28) arc(90:10:0.28);
  \draw[thick, black!70] (-10:0.28) arc(-10:-90:0.28);
  \draw[thick, black!70, -{Stealth[length=3pt,width=2.5pt]}]
    (-110:0.28) arc(-110:-250:0.28);
  \foreach \a in {90,-10,-110} {
    \fill[BrickRed] (\a:0.28) circle(0.04);
  }
\end{scope}
\node[font=\footnotesize, anchor=south, align=center] at ([yshift=1pt]s2.south)
    {Chunk\\[-2pt]Loops};

\node[stepbox] (s3) at (6.2, 0) {};
\node[badge] at ([shift={(5pt,-5pt)}]s3.north west) {4};
\begin{scope}[shift={($(s3.center)+(0,0.35)$)}]
  \fill[black!70] (0,0.30) circle(0.06);
  \fill[black!70] (-0.22,0.08) circle(0.06);
  \fill[black!70] (0.22,0.08) circle(0.06);
  \fill[black!70] (0,-0.14) circle(0.06);
  \fill[black!70] (0,-0.36) circle(0.06);
  \draw[semithick, black!55] (0,0.24) -- (-0.19,0.13);
  \draw[semithick, black!55] (0,0.24) -- (0.19,0.13);
  \draw[semithick, black!55] (-0.19,0.03) -- (0,-0.08);
  \draw[semithick, black!55] (0.19,0.03) -- (0,-0.08);
  \draw[semithick, black!55] (0,-0.20) -- (0,-0.30);
\end{scope}
\node[font=\footnotesize, anchor=south, align=center] at ([yshift=1pt]s3.south)
    {Build\\[-2pt]Summarized CFG};

\node[stepbox] (s4) at (10.0, 0) {};
\node[badge] at ([shift={(5pt,-5pt)}]s4.north west) {5};
\begin{scope}[shift={($(s4.center)+(0,0.1)$)}, font=\tiny]
  \node[text=black!65, font=\fontsize{5}{6.5}\selectfont] at (0,0.30) {$r_b, m_{b,v} \!\in\! \{0,1\}$};
  \node[text=black!65, font=\fontsize{5}{6.5}\selectfont] at (0,0.06) {$e_b \!\in\! [0, \ebuffer]$};
  \node[text=black!65, font=\fontsize{5}{6.5}\selectfont] at (0,-0.18) {$\varepsilon_b = \cdots$};
  \node[text=RoyalBlue!70!black, font=\fontsize{5}{6.5}\selectfont] at (0,-0.42) {$\min \textstyle\sum_b F_b \varepsilon_b$};
\end{scope}
\node[font=\footnotesize, anchor=south, align=center] at ([yshift=1pt]s4.south)
    {Formulate MILP};

\node[stepbox] (s5) at (13.8, 0) {};
\node[badge] at ([shift={(5pt,-5pt)}]s5.north west) {6};
\node at ([yshift=0.35cm]s5.center) {\includegraphics[height=0.65cm]{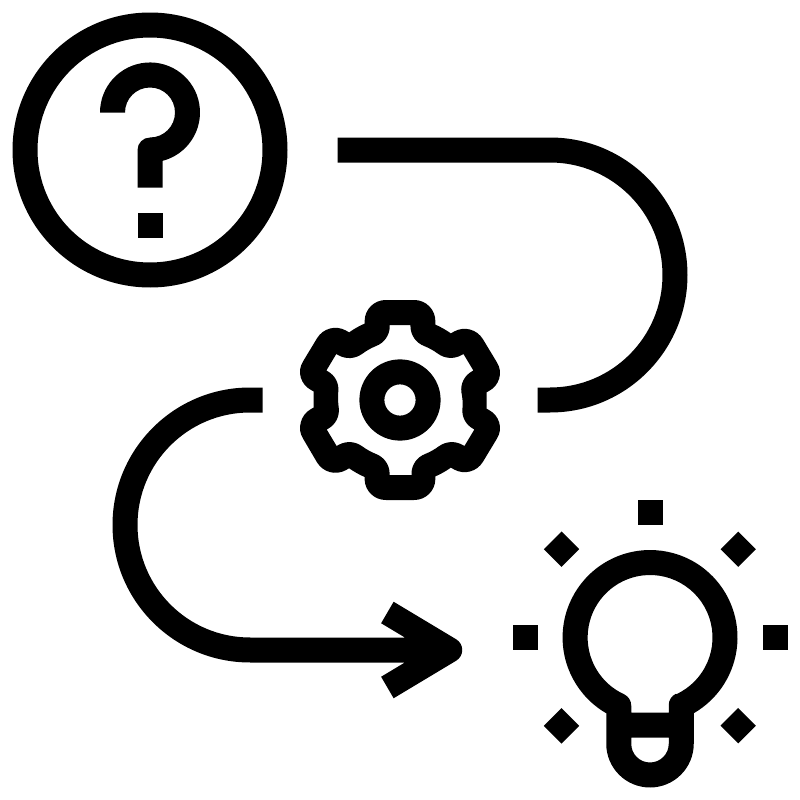}};
\node[font=\footnotesize, anchor=south, align=center] at ([yshift=1pt]s5.south)
    {Solve\\[-2pt]MILP};

\node[stepbox] (s1) at (0, -2.8) {};
\node[badge] at ([shift={(5pt,-5pt)}]s1.north west) {1};
\node at ([yshift=0.35cm]s1.center) {\includegraphics[height=0.65cm]{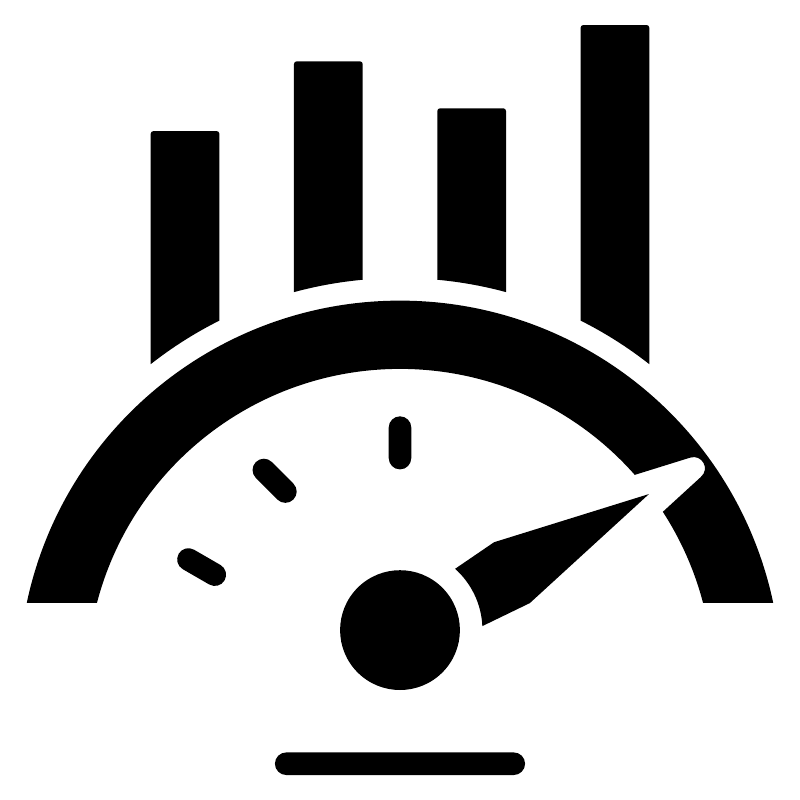}};
\node[font=\footnotesize, anchor=south, align=center] at ([yshift=1pt]s1.south)
    {Profile\\[-2pt]Energy};

\node[stepbox] (sfreq) at (4.25, -2.8) {};
\node[badge] at ([shift={(5pt,-5pt)}]sfreq.north west) {3};
\begin{scope}[shift={($(sfreq.center)+(0,0.35)$)}]
  \draw[semithick, black!30] (-0.35,-0.28) -- (-0.35,0.28);
  \draw[semithick, black!30] (-0.35,-0.28) -- (0.40,-0.28);
  \fill[RoyalBlue!40] (-0.24,-0.28) rectangle (-0.06,0.12);
  \fill[RoyalBlue!40] (0.02,-0.28) rectangle (0.20,0.22);
  \fill[RoyalBlue!40] (0.28,-0.28) rectangle (0.46,-0.04);
\end{scope}
\node[font=\footnotesize, anchor=south, align=center] at ([yshift=1pt]sfreq.south)
    {Estimate Block\\[-2pt]Frequencies};

\node[stepbox] (s6) at (11.5, -2.8) {};
\node[badge] at ([shift={(5pt,-5pt)}]s6.north west) {7};
\begin{scope}[shift={($(s6.center)+(0,0.35)$)}]
  \draw[semithick, black!40, rounded corners=1pt] (-0.45,-0.28) rectangle (0.45,0.30);
  \draw[semithick, black!40] (-0.45,0.17) -- (0.45,0.17);
  \fill[red!50] (-0.34,0.22) circle(0.035);
  \fill[ForestGreen!50] (-0.25,0.22) circle(0.035);
  \fill[RoyalBlue!40] (-0.16,0.22) circle(0.035);
  \draw[semithick, ForestGreen!60] (-0.32,0.07) -- (0.20,0.07);
  \draw[semithick, RoyalBlue!50] (-0.32,-0.01) -- (0.25,-0.01);
  \draw[semithick, ForestGreen!60] (-0.32,-0.10) -- (0.14,-0.10);
  \draw[semithick, RoyalBlue!50] (-0.32,-0.19) -- (0.22,-0.19);
\end{scope}
\node[font=\footnotesize, anchor=south, align=center] at ([yshift=1pt]s6.south)
    {Instrument\\[-2pt]Code};

\node[minimum width=1.7cm, minimum height=1.55cm,
      draw=black!40, fill=gray!5, inner sep=4pt] (out) at (13.8,-2.8) {};
\fill[gray!15] ([shift={(-0.3,0)}]out.north east) -- (out.north east)
    -- ([shift={(0,-0.3)}]out.north east) -- cycle;
\draw[black!40] ([shift={(-0.3,0)}]out.north east) -- (out.north east)
    -- ([shift={(0,-0.3)}]out.north east);
\node[font=\footnotesize, align=center] at ([yshift=-2pt]out.center) {Instrumented\\[-1pt]Code};

\draw[arr] (src.east) -- (s2.west);

\coordinate (ccfork) at ($(s2.east)+(0.65,0)$);
\draw[chunked] (s2.east) -- (ccfork);
\node[lbl, above=2pt, text=OliveGreen, align=center] at (ccfork) {Chunked\\[-1pt]Code};
\draw[arr, OliveGreen, densely dash dot] (ccfork) -- (s3.west);

\draw[arr] (s3.east) -- node[lbl, above, align=center] {Summarized\\[-1pt]CFG} (s4.west);

\draw[arr] (s4.east) -- node[lbl, above, align=center] {MILP\\[-1pt]Formulation} (s5.west);

\draw[arr] (s5.south) -- ++(0,-0.4) -| node[lbl, right, align=center, pos=0.25] {Optimal\\[-1pt]Solution} (s6.north);

\draw[arr] (s6.east) -- (out.west);

\draw[energy] (s1.east) -- ++(0.7,0) -- ++(0,1.5)
    -- ([xshift=-4pt]s4.south |- 0,-1.3);
\node[lbl, above, text=NavyBlue] at (7.8,-1.3) {Energy Profile};
\draw[arr, NavyBlue, dashed] (2.5,-1.3) -- (s2.south);
\draw[arr, NavyBlue, dashed] (6.2,-1.3) -- (s3.south);
\draw[arr, NavyBlue, dashed] ([xshift=-4pt]s4.south |- 0,-1.3) -- ([xshift=-4pt]s4.south);

\draw[freq] (sfreq.east) -- ++(0.4,0) -- ++(0,1.35)
    -- ([xshift=4pt]s4.south |- 0,-1.45);
\node[lbl, anchor=west, text=BrickRed] at (5.8,-2.3) {Block Frequencies};
\draw[arr, BrickRed, densely dotted] ([xshift=4pt]s4.south |- 0,-1.45) -- ([xshift=4pt]s4.south);

\draw[arr, OliveGreen, densely dash dot] (ccfork) -- (sfreq.north);
\draw[chunked] (ccfork |- 0,-1.6) -- (10.0,-1.6);
\draw[arr, OliveGreen, densely dash dot] (10.0,-1.6) -- (10.0,-2.8) -- (s6.west);

\end{tikzpicture}
\end{adjustbox}
\captionsetup{skip=10pt}
\caption{Overview of \tool's pipeline.
  Steps~1 and~3 are detailed in \cref{sec:parameter-estimation};
  Steps~2 and~4--7 in \cref{sec:algorithm}.}
\Description{Pipeline diagram showing the steps of automatic region-boundary placement: energy profiling, loop chunking, frequency estimation, summarized CFG construction, MILP formulation, MILP solving, and code instrumentation.}
\label{fig:overview}
\end{figure*}

Intermittent computing enables batteryless embedded devices to operate in
environments where replacing or recharging batteries is impractical, such as
implanted medical sensors, satellites, and remote environmental monitors.
These devices harvest energy from ambient sources
and store it in small capacitors.
Due to limited stored energy, program execution is
frequently interrupted by power failures.
It is challenging to ensure
correctness and energy efficiency in the presence of frequent power
failures and reboots~\cite{lucia2017intermittent}.

To guarantee \emph{forward progress}, prior
work has developed intermittent system support to save execution
state in checkpoints and restore the state upon
reboot from a power failure~\cite{cleancut, samoyed,
mementos, hibernus, hibernuspp, maeng2018adaptive, coati,
van2016intermittent, auto-tune, alpaca, chain, alpaca, lucia2015simpler}.
The content and placement of these checkpoints impact both correctness
and energy efficiency of the intermittent execution.  Checkpointing
the entire volatile memory (VM) and non-volatile memory (NVM) ensures correctness, but
consumes significant energy; while checkpointing only the VM state
may introduce memory consistency bugs~\cite{lucia2015simpler, formal-foundation}.
Frequent checkpointing helps ensure forward progress,
but the energy overhead of checkpointing and state restoration can
consume a substantial portion of a device’s limited energy
budget. Conversely, infrequent checkpointing risks exhausting
available energy before reaching the next checkpoint, undermining
progress.

Prior work investigates statically-placed
checkpoints~\cite{lucia2015simpler,van2016intermittent, auto-tune, alpaca, chain, cleancut, tiling,wario,lact} and
voltage-triggered, dynamically
placed checkpointing systems~\cite{mementos,hibernus,hibernuspp, harvos, samoyed, nvp-design, maeng2018adaptive}.
Some systems roll back to the most recently checkpointed state upon
reboot~\cite{van2016intermittent, alpaca, lucia2015simpler, maeng2018adaptive,
chain}; while others save the state right before power-off and are
thus \emph{rollback-free}~\cite{hibernus, hibernuspp, samoyed}.
To address both memory consistency and forward progress,
recent rollback-free systems, \rockclimb~\cite{rockclimb} and \schematic~\cite{schematic2024}, divide a program
into \emph{energy-aware} regions at compile time, each of which is guaranteed to complete within a
single charge cycle.
The device recharges between them, and state is saved and restored only at
region boundaries.
Their energy efficiency
depends critically on where region boundaries are placed, which are decided using heuristics.
\rockclimb~\cite{rockclimb} forms regions through a greedy traversal of the
control-flow graph.
\schematic~\cite{schematic2024} co-optimizes region boundaries with memory
allocation (i.e., whether each variable resides in VM or NVM), since the allocation affects both the cost of ordinary execution and the cost of
saving and restoring state at each boundary.
Both, however, commit to local decisions rather than optimizing
over the entire control-flow graph, which can miss lower-overhead
placements from a global view, or insert unnecessary boundaries.

Intermittent computing is a compile-once, run-many setting: batteryless
devices are deployed for long-lived operation in environments where updates
are difficult or impossible, so additional compile-time effort is worthwhile
when it lowers steady-state overhead and strengthens forward progress guarantees.
Moreover, post-deployment debugging is infeasible, which raises the bar for
correctness: the compiler should produce binaries
that are guaranteed to be correct under all possible execution scenarios.
These considerations motivate solving boundary placement \emph{optimally}
with \emph{formal correctness guarantees}, rather than heuristically.

\paragraph{This Work}

\emph{We present Boundary Allocation Optimizer (\tool),
a novel compilation pipeline for intermittent computing that formulates region-boundary placement and memory
allocation as a mixed-integer linear program (MILP).}
Our key idea is to encode energy-feasibility and
variable checkpointing constraints
directly in the optimization problem.
Given a control-flow graph with estimated basic block execution frequencies and
energy cost parameters, \tool produces a boundary placement and variable
allocation that minimizes expected energy overhead, and instruments the
program accordingly.
By construction, an optimal solution minimizes overhead
among all instrumentations representable by the MILP.
We further provide a formal cost semantics and prove that any feasible
solution guarantees no execution path exceeds the energy budget.
\Cref{fig:overview} illustrates the pipeline:
the compiler profiles energy costs, chunks loops, builds a summarized
control-flow graph, formulates and solves a MILP, and instruments the
program with optimal region boundaries.

We assume that programs do not use recursion or indirect calls and that
maximum loop trip counts are annotated. These assumptions are typical in embedded
software.
In our prototype, we further assume that the program is fully inlined
and therefore a single function. \Cref{sec:function-calls} describes how the
formulation can be extended to function calls.

We implement \tool as an LLVM IR-level compiler pass together with an energy
estimation pipeline that assigns conservative energy costs to LLVM basic blocks.
We also introduce a loop-chunking strategy that transforms large loops into
bounded chunks that can be treated as a single node in the MILP, improving the
quality of the instrumentation.

Our evaluation on 13 benchmarks across three capacitor sizes shows that
\tool reduces region boundary overhead and improves execution time over
both baselines while keeping MILP solve times within seconds.
For evaluation, we reimplement \rockclimb and \schematic as LLVM-based baselines for direct
comparison.
At 10\,\textmu{}F, compared to \schematic with LLVM \lstinline{O3}
optimizations, \tool is 10\% faster and encounters 52\% fewer region boundaries on average.
Under intermittent power on recorded energy-harvesting traces, \tool is
1.73$\times$--17.81$\times$ faster than the baselines.

\paragraph{Key Contributions}

The contributions of this paper are:

\begin{enumerate}[label=(\arabic*), wide=0pt, leftmargin=*]
\item We formulate region-boundary placement and memory
allocation as a mixed-integer linear program that
encodes energy-feasibility and memory-consistency constraints and
optimizes for expected energy consumption
(\cref{sec:overview,sec:algorithm}).

\item We introduce a formal cost semantics modeling energy consumption,
and use it prove that any feasible solution of the MILP guarantees energy
safety (\cref{sec:correctness-optimality}).

\item We develop a loop-chunking strategy that soundly summarizes
loop energy costs for integration into the MILP formulation
(\cref{sec:handling-loops}).

\item We implement an open-source LLVM-based compilation pipeline that
formulates and solves the MILP and instruments LLVM IR, along with an
energy profiling pipeline for deriving basic block-level energy estimates
from hardware measurements (\cref{sec:parameter-estimation}).

\item We reimplement \rockclimb and \schematic as LLVM
passes with bug fixes and evaluate all three systems on thirteen
benchmarks across three capacitor sizes and on recorded
energy-harvesting traces, showing that \tool reduces
region boundary overhead and execution time compared to both baselines
(\cref{sec:evaluation}).
\end{enumerate}

\section{Background and Motivation}
\label{sec:background}

We first review existing checkpointing strategies,
then we discuss the shortcomings of heuristic-based energy-aware region identification strategies to motivate \tool.

\subsection{Checkpointing Approaches}

\paragraph{Systems without forward progress guarantees}
Early systems such as DINO~\cite{lucia2015simpler} and
Ratchet~\cite{van2016intermittent} insert checkpoints at compile time
so that each
program segment between consecutive checkpoints is idempotent, preserving
memory consistency upon re-execution after a power failure.
Similarly, task-based programming models~\cite{alpaca, chain} require the
programmer to manually define tasks as atomic units of execution; the system
checkpoints only the task identity at each task boundary, and selectively
protects variables involved in write-after-read dependencies rather than saving
the full volatile state.
In both cases, when power fails, the system resumes from the most recent
checkpoint.

These approaches, however, guarantee neither forward progress nor efficient
execution under tight energy budgets.
Although techniques exist to reduce checkpoint frequency or
size~\cite{tiling,wario,lact}, they either delegate checkpoint placement
to the programmer~\cite{lucia2015simpler, alpaca, chain}, or do not attempt to
guarantee forward progress at all~\cite{van2016intermittent}.
Finding good placements manually is difficult because the search space is large
and optimal placement depends on hardware-specific energy characteristics that
vary across boards.
Tools exist to assist this process, but each has
limitations: CleanCut~\cite{cleancut} bisects oversized tasks without optimizing
boundary placement, potentially generating more tasks than necessary;
auto-tuning approaches~\cite{auto-tune} take a long time to find solutions,
as reported by Choi et al.~\cite{rockclimb}; and
energy debuggers~\cite{colin2017energy, fused} can only analyze paths visited
at run-time, making it difficult to guarantee forward progress on all possible
paths.

\paragraph{Forward progress via run-time checks}
To better support forward progress,
JIT-checkpointing approaches~\cite{hibernus, hibernuspp, nvp-design, coati}
use a hardware interrupt to reactively save volatile state when the voltage
crosses a predefined threshold.
These approaches require dedicated hardware support for voltage monitoring
interrupts and must reserve sufficient
energy for the checkpoint,
reducing the usable energy~\cite{ahmed2019efficient}.
Mementos~\cite{mementos}, HarvOS~\cite{harvos}, and
Chinchilla~\cite{maeng2018adaptive} remove the need for hardware
interrupt
support by combining compile-time instrumentation with run-time voltage checks,
at the cost of run-time overhead for evaluating checkpoint conditions.

\subsection{Energy-aware Region Formation}
\label{sec:rollback-free}

Recent works, \rockclimb~\cite{rockclimb} and \schematic~\cite{schematic2024},
 split a program at compile time into
\emph{regions}, each sized to fit within the device's single-charge energy
budget, and save and restore volatile program state at region boundaries.
Because the device fully replenishes its energy buffer before each region
begins, every region is guaranteed to complete within a single charge cycle,
eliminating re-execution.
These guarantees rest on a compile-time analysis that conservatively bounds
the energy cost of each region using worst-case energy consumption
techniques such as instruction-level~\cite{tiwari1994power, whole-system-worst-case-energy-consumption,
data-dependent-energy-modeling, fused}
or LLVM IR-level analysis~\cite{llvm-static-analysis}:
as long as every region's bound stays within the budget, forward progress
is ensured.

\rockclimb and \schematic differ in how they
decide where to place region boundaries.
\rockclimb takes a greedy approach: it seeds mandatory boundaries at function
call sites and loop headers, then traverses the CFG in topological order,
accumulating the per-basic-block cost from the most recent boundary and
inserting a new boundary whenever the accumulated cost would exceed the
energy budget before the next seeded boundary is reached.
\schematic replaces this greedy traversal with a per-path shortest-path
optimization: it analyzes execution paths from program entry to exit in
descending order of expected frequency and for each path builds a graph
whose nodes are the potential region boundaries along the path.
An edge connects two boundaries $c_1, c_2$ when the code segment between
them fits in the energy budget under some memory allocation, weighted by
that segment's total energy cost (restore at $c_1$, execution under the
chosen allocation, and save at $c_2$), so that the shortest path from entry
to exit yields the minimum-energy subset of boundaries together with the
matching variable allocation.
Once a path has been analyzed, its boundary and allocation decisions are
finalized; subsequent paths must respect them.

\rockclimb and \schematic achieve lower overhead compared to Chinchilla,
proving the usefulness of energy-aware region-formation algorithms.
As far as we know, \schematic is also the first approach to co-optimize memory
allocation with boundary placement: because the energy cost of a region depends
not only on which boundaries are active but also on whether each variable
resides in VM or NVM, \schematic selects the allocation for each path segment as
part of its shortest-path computation.
The only other system that automates the memory allocation decision is
Alfred~\cite{alfred}, which optimizes allocation to reduce energy consumption
but treats the checkpointing strategy as fixed input, and therefore cannot
exploit this coupling.

\subsection{Limitations of Heuristics-based Approaches}
\label{sec:limitations}

\begin{figure}[t]
\centering
\begin{tikzpicture}[
  node distance=0.6cm,
  block/.style={
    rectangle, draw=black!70, fill=gray!5,
    minimum width=2.8cm, minimum height=0.55cm,
    font=\footnotesize\ttfamily, align=left
  },
  arr/.style={-{Stealth[length=4pt,width=3pt]}, semithick, black!70},
  liveset/.style={font=\scriptsize, text=RoyalBlue!80!black, anchor=west},
  ckptlbl/.style={font=\scriptsize\bfseries, inner sep=2pt},
  ckptline/.style={densely dashed, semithick},
]

\node[block] (B1) {A: x = input()};
\node[block, below=0.35cm of B1] (B2) {B: buf{[}N{]} = compute(x)};
\node[block, below=0.35cm of B2] (B3) {C: y = sum(buf, N)};
\node[block, below=0.35cm of B3] (B4) {D: output(y)};

\draw[arr] (B1.south) -- (B2.north);
\draw[arr] (B2.south) -- (B3.north);
\draw[arr] (B3.south) -- (B4.north);

\node[liveset] at ([xshift=1.8cm]$(B1.south)!0.5!(B2.north)$) {live = \{x\}};
\node[liveset] at ([xshift=1.8cm]$(B2.south)!0.5!(B3.north)$) {live = \{buf{[}N{]}\}};
\node[liveset] at ([xshift=1.8cm]$(B3.south)!0.5!(B4.north)$) {live = \{y\}};

\draw[ckptline, BrickRed] ([xshift=-2.3cm]$(B2.south)!0.5!(B3.north)$)
  -- ([xshift=-0.3cm]$(B2.south)!0.5!(B3.north)$);
\node[ckptlbl, BrickRed, anchor=east]
  at ([xshift=-2.4cm]$(B2.south)!0.5!(B3.north)$) {greedy};

\draw[ckptline, ForestGreen] ([xshift=-2.3cm]$(B1.south)!0.5!(B2.north)$)
  -- ([xshift=-0.3cm]$(B1.south)!0.5!(B2.north)$);
\node[ckptlbl, ForestGreen, anchor=east]
  at ([xshift=-2.4cm]$(B1.south)!0.5!(B2.north)$) {optimal};

\end{tikzpicture}
\caption{Greedy boundary placement between B and C forces the large
  array \texttt{buf[N]} into NVM; placing the boundary between A and B
  avoids this cost.}
\Description{Control flow graph illustrating how greedy checkpoint placement can be suboptimal by forcing expensive checkpoints of large live variables.}
\label{fig:cfg-greedy-suboptimal}
\end{figure}

Current energy-aware region-formation approaches rely on heuristics rather than
whole-program optimization, which can lead to suboptimal boundary
placement.
Consider the four-block program in
\cref{fig:cfg-greedy-suboptimal} as an example.
It reads a scalar input \lstinline{x}, populates an array
\lstinline{buf[N]} via \lstinline{compute(x)}, sums the array into
\lstinline{y}, and outputs \lstinline{y}.
Suppose the energy budget is sufficient to execute any two adjacent
blocks but not three.

\paragraph{Greedy Placement}
\rockclimb unconditionally allocates all variables in NVM and greedily
places a boundary between B and C.
Because \lstinline{buf[N]} resides in NVM, blocks B and C must read and
write NVM for every access to the array, which is slower than accessing VM.
Allocating \lstinline{buf[N]} in VM would make these accesses cheaper,
but under this boundary placement \lstinline{buf[N]} is live at the end
of B and must be saved to and restored from NVM at the boundary, which
can be more expensive than the NVM access overhead it eliminates.

\paragraph{Placement-Allocation Coupling}
A different boundary placement avoids this tradeoff.
If the boundary is placed between A and B and \lstinline{buf[N]} is
allocated in VM, only the scalar \lstinline{x} is live at the boundary.
Thus, the checkpoint saves and restores one variable and both
\lstinline{compute(x)} and \lstinline{sum(buf, N)} access the array at
VM speed.
Reaching this solution requires jonitly considering boundary placement and
memory allocation: fixing either decision first forces a
suboptimal choice for the other.
\rockclimb cannot discover it because it fixes all variables to NVM.
\schematic can find this solution in cases like
\cref{fig:cfg-greedy-suboptimal}, as it co-optimizes allocation
with placement.

However, \schematic's heuristic does not guarantee good
solutions in the presence of complex control flow.
For example, its path-by-path commitment causes fragmentation across
shared CFG segments.
If analyzing a higher-priority path inserts a boundary on a CFG segment
that is later reused by a lower-priority path, the later path must
inherit that split even when its own best decomposition would keep the
segment within a single region.
Similarly, because each segment's allocation is committed before the
paths that share it are analyzed, a later path that traverses two
segments with different allocations can only reconcile them by
inserting a boundary. Changing either allocation is never revisited,
even when doing so would avoid the boundary.

Its non-revisiting structure also creates difficulties for
loop analysis.
To handle loops, \schematic uses conditional checkpoints: checkpoints
inserted at the loop latch that fire every $K$ iterations rather than
on every iteration.
In nested loops, an inner loop may first be analyzed under the
assumption that it requires conditional checkpointing, reserving energy
for that possibility in its cost summary.
As the \schematic paper specifies, per-block energy estimates only grow more
conservative, so the enclosing loop's analysis never revises this reserved energy.
The outer loop then overestimates the cost of the inner loop, multiplied by its
iteration, and may introduce region boundaries more than necessary.

\section{Overview}
\label{sec:overview}

We present \tool, a compilation pipeline that jointly optimizes
region-boundary placement and memory allocation via MILP.

\subsection{Assumptions}

\paragraph{Target execution model}
\tool's execution model assumes that each program is split into \emph{regions}.
At each region boundary, the system saves its modified volatile state to
non-volatile memory and waits for the device to fully recharge its capacitor.
When the capacitor is fully charged, the system restores the relevant state and
starts executing the next region.
Power failures are expected in region boundaries (mid waiting), but not at any
other points.

\paragraph{Target programs}
We assume that there are no recursive or indirect function
calls, and every loop is annotated with a known upper bound on the
number of iterations.
These restrictions are standard in WCET analysis and embedded
safety-critical coding guidelines~\cite{power10, wcec-survey},
as they ensure statically predictable execution.
As we describe in \cref{sec:handling-loops},
the loop annotations are used only to allow more aggressive optimizations,
and \tool works correctly even without them.

Our prototype inlines the target program
and instruments only the \lstinline{main} function.
Since there are no recursive or indirect calls, every user function can be
inlined, which is the case for all our benchmarks.
However, a few low-level runtime routines, such as the software-emulated
floating-point arithmetic are not inlined by LLVM.
\Cref{sec:function-calls} describes how \tool handles these calls.
That section also shows how the formulation can be extended to support more
general function calls, which we leave as future work.

\subsection{Mixed-Integer Linear Programming (MILP)}

\tool's goal is to determine where to place region boundaries and
how to allocate each variable (VM or NVM) so that every region fits
within the energy budget while minimizing expected energy consumption.
To achieve this, we use MILP, which
is an optimization problem that minimizes a linear objective
subject to linear constraints, where variables may be either integers
or real numbers.
It is a natural fit for our problem: boundary placement
and memory allocation are binary decisions, and the energy budget and overhead
are real numbers linear in those decisions.
\tool uses Gurobi~\cite{gurobi}.

\subsection{Compilation Pipeline}

\Cref{fig:overview} shows \tool's full compilation pipeline, built
around this MILP core.
The compiler first profiles each basic block's energy consumption on the target
board (Step~1).
Using the energy profile, it preprocesses loops so that the MILP formulation can
handle them effectively (Step~2; see \cref{sec:algorithm} for details).
Block execution frequencies are estimated from the preprocessed code
via profiling.
From the preprocessed program, the compiler builds a summarized control flow
graph (Step~4) and encodes the boundary placement and memory allocation problem
as a MILP (Step~5).
An off-the-shelf solver finds the optimal solution (Step~6), which specifies
where to place region boundaries and whether each variable should reside in VM
or NVM.
Finally, the compiler instruments the code with save and restore operations at
the determined boundaries (Step~7), producing the final instrumented code.

\subsection{Save and Restore Model}

While \tool
assumes that each region is guaranteed to complete within a single charge cycle,
a power failure may occur between any two regions.
the system must
save relevant variables before a region ends and properly
restore them before a new region begins.
Specifically, at each region boundary, we
\begin{enumerate*}[label=(\arabic*)]
    \item save all dirty VM variables to NVM before ending the current
    region, and
    \item restore all live-in variables from NVM to VM before starting the next
    region.
\end{enumerate*}
This model saves only variables that have been modified (dirty) within a region,
rather than unconditionally saving all registers as in
Schematic~\cite{schematic2024}.
Similarly, our restore model loads only variables that are live at the region
entry, rather than restoring all physical registers as in both \rockclimb and
Schematic.

Unlike \rockclimb's distributed checkpointing, where save logic is inserted
immediately after each variable's definition, we consolidate all saves at the
region boundary.
Distributed checkpointing is viable for \rockclimb because its regions do not
span loop iterations: loops are split such that each iteration forms its own
region, so each definition site within a region is reached at most once.
In contrast, our approach preserves loops within regions when they fit within the
energy budget, so a definition inside a loop body may be reached many times per
region.
Distributing saves in this setting would execute the save on every iteration,
making the cost proportional to the iteration count.

\subsection{Mid-Region Power Failures}

If \tool underestimates the energy consumption of a region,
a power failure can occur in the middle of the region,
a limitation shared with prior works~\cite{rockclimb, schematic2024}.
We could apply mitigations from \schematic~\cite{schematic2024}.
First, we can add a safety margin by shrinking the energy budget of each
region.
Second, the system can detect a mid-region power failure by checking whether
it restarted from the same boundary twice, and restore the program to
its initial state; if such failures persist, region placement can be
recomputed with a smaller capacitor size and deployed through an
over-the-air update.
We implement this detection with negligible overhead by recording the
last region boundary in NVM during normal execution, and comparing it with
the previously recorded boundary only on restart.
We do not implement the other mitigations, as no mid-region power failure
occurred in our experiments under real intermittent power
(\cref{sec:mid-region-failures}).

\subsection{Key Challenges}
Formulating boundary placement and memory allocation as an optimization
problem that minimizes expected energy consumption introduces several
challenges.
First, a naive MILP formulation for loops would place a region boundary
at every iteration, producing inefficient code.
We address this with \emph{loop chunking}, a preprocessing step that splits
loops into bounded inner loops, and \emph{CFG summarization}, which collapses
these inner loops into single nodes, abstracting away small cycles from the
formulation (\cref{sec:handling-loops}).
Second, tracking which variables must be saved at a region boundary is
mutually dependent with boundary placement: placing a boundary resets
dirtiness by saving
variables, while having more dirty
variables increases the cost of placing a boundary.
We resolve this circularity by encoding dirtiness propagation directly as
MILP constraints, so that the solver reasons about boundary placement and
save decisions simultaneously (\cref{sec:algorithm}).

\section{Algorithm}
\label{sec:algorithm}

This section describes how \tool formulates region-boundary placement and
memory allocation as a MILP.
After setting up the problem (\cref{sec:problem-setup}), we present
the key ideas
(\cref{sec:key-ideas}).
We then show how we instrument the program with decisions made by the MILP
solver (\cref{sec:instrumentation}), and state correctness and optimality
results (\cref{sec:correctness-optimality}).
Finally, we describe how loops are preprocessed to
reduce overheads (\cref{sec:handling-loops}).

\subsection{Problem Setup}
\label{sec:problem-setup}

The input to \tool's algorithm is a control flow graph $G = (B, E)$,
where $B$ is the set of basic blocks and
$E \subseteq B \times B$ is the set of directed edges.
We assume that there exists a unique entry block $b_0 \in B$
and the set of exit blocks $B_{\mathrm{exit}} \subseteq B$.

Each block $b$ has a set of defined variables $\defs(b)$
(either virtual register declaration or memory write)
and live-in variables
$\LiveIn(b)$, obtained from standard compiler analyses.
\tool's algorithm also takes input parameters listed at the top of
\cref{tab:milp-symbols}, estimated via profiling and measurement
(\cref{sec:parameter-estimation}).

$\variables$ denotes the set of program variables, partitioned into
\emph{candidates} ($\eligibles$) whose allocation site (VM or NVM) the solver
chooses, and \emph{fixed} variables ($\ineligibles$) that always reside in VM.
In our implementation, $\eligibles$ consists of the non-constant global
variables, and $\ineligibles$ consists of the SSA virtual-register values and
stack allocations.
SSA values
are mapped by the backend to
registers, and making them NVM candidates would forgo register allocation.
Stack allocations
do not have static addresses,
so redirecting their accesses to an NVM location requires additional work;
we leave supporting them as candidates to future work.

\begin{table}[t]
  \centering
  \caption{Constants and variables used in \tool's MILP formulation.
    Block index $b \in B$; candidate variable index $v \in \eligibles$.
    The full list of auxiliary variables and constraints is in
    \cref{sec:appendix-algorithm}.}
  \label{tab:milp-symbols}
  \small
  \setlength{\tabcolsep}{4pt}
  \begin{tabular}{@{}l p{0.7\columnwidth}@{}}
    \toprule
    \textbf{Symbol} & \textbf{Meaning} \\
    \midrule
    \multicolumn{2}{@{}l}{\emph{Inputs}} \\
    $\ebuffer \in \mathbb{R}_{\ge 0}$      & Per-region energy budget \\
    $F_b \in \mathbb{R}_{\ge 0}$           & Execution frequency of block $b$ \\
    $\Ebase_b \in \mathbb{R}_{\ge 0}$      & Energy to execute $b$ assuming all candidate variables are in VM \\
    $\deltaE_{b,v} \in \mathbb{R}_{\ge 0}$ & Extra energy when $v$ is accessed from NVM rather than VM in $b$. \\
    $\Erst_v \in \mathbb{R}_{\ge 0}$       & Cost to restore $v$ from NVM to VM\\
    $\Esv_v \in \mathbb{R}_{\ge 0}$        & Cost to save $v$ from VM to NVM \\
    $\Epro,\ \Eepi \in \mathbb{R}_{\ge 0}$ & Overhead of region start / end \\
    \midrule
    \multicolumn{2}{@{}l}{\emph{Decision variables}} \\
    $r_b \in \{0,1\}$            & A region boundary is placed at the start of $b$ \\
    $m_{b,v} \in \{0,1\}$        & $v$ is allocated in VM at $b$\\
    $s_{b,v} \in \{0,1\}$        & $v$ must be saved at the region boundary at the start of $b$ \\
    $\hat{r}_{b,v} \in \{0,1\}$  & $v$ must be restored at the region start at $b$ \\
    $d_{b,v} \in \{0,1\}$        & $v$ is written since the last region boundary at $b$ \\
    $\eaccum_b \in [0,\ebuffer]$ & Worst-case energy consumption accumulated in the current region up to $b$ \\
    \midrule
    \multicolumn{2}{@{}l}{\emph{Derived linear expressions}} \\
    $\eblk_b$   & $\Ebase_b + \sum_{v} \deltaE_{b,v}\,(1 - m_{b,v})$ (block execution cost)\\
    $\estart_b$ & $\Epro\, r_b + \sum_{v} \Erst_v\, \hat{r}_{b,v}$ (region start cost)\\
    $\eend_b$     & $\Eepi\, r_b + \sum_{v} \Esv_v\, s_{b,v}$ (region end cost)\\
    \bottomrule
  \end{tabular}
\end{table}

\subsection{Key Ideas}
\label{sec:key-ideas}

\paragraph{Objective}
The solver minimizes the expected energy consumption given memory placement
decisions and region boundary instrumentation,
where the decisions are encoded as MILP variables:
$r_b = 1$ if a boundary is placed at the start of block $b$, and
$m_{b,v} = 1$ if $v$ is allocated in VM at block $b$
(\cref{tab:milp-symbols}).
Formally, the solver minimizes a weighted sum over blocks,
where each block $b$ contributes a total cost
$\eblk_b + \estart_b + \eend_b$, weighted by its execution frequency $F_b$:
\begin{equation*}
  \min \sum_{b\in B} F_b \cdot (\eblk_b + \estart_b + \eend_b).
\end{equation*}
The three cost terms are derived linear expressions
(\cref{tab:milp-symbols}) whose values are determined by the
MILP decision variables:
$\eblk_b$ is the block execution cost, which rises as more candidate
variables are demoted to NVM (controlled by $m_{b,v}$);
$\estart_b$ and $\eend_b$ are the region-start and region-end costs,
paid only when a boundary is placed at the beginning of $b$ (i.e., $r_b = 1$), and zero
otherwise.
$\estart_b$ pays the fixed overhead $\Epro$ plus the cost of restoring
every live VM variable at the boundary ($\hat{r}_{b,v} = 1$ when
$r_b = 1$, $m_{b,v} = 1$, and $v \in \LiveIn(b)$), and $\eend_b$ pays
the fixed overhead $\Eepi$ plus the cost of saving every dirty VM
variable ($s_{b,v} = 1$, set by the dirtiness tracking described
below).
Since the weight $F_b$ is block $b$'s execution frequency,
the weighted sum is proportional to the expected energy a
run consumes.
Consequently, the solver naturally prioritizes cost reductions in frequently
executed blocks.

We minimize total energy consumption as the optimization objective.
Total energy consumption serves as a rough proxy for the
end-to-end execution time, which is the sum of active execution time
and the cumulative recharging time across all power cycles.
Active execution time tracks energy closely because per-instruction energy on
the MSP430 largely scales with the cycles executed.
Recharging time is likewise tied to energy consumption,
since it is the consumed energy divided by the average
harvesting power over that run.
This relationship is not exactly proportional, as harvesting power fluctuates,
but since harvesting power is outside our control,
total consumed energy serves as a proxy for total recharging time.
The choice of objective is in any case flexible, and a different cost
metric can be substituted without changing the formulation.

\paragraph{Energy Budget Enforcement}
Each region must complete within the energy budget $\ebuffer$.
We enforce this through constraints on the accumulated energy
$\eaccum_b$, summarized in \cref{tab:energy-constraints}.
\ref{c:accum-init-main} initializes $\eaccum_b$ at each region start;
\ref{c:accum-prop-main} propagates it along intra-region edges;
\ref{c:energy-bound-main} and \ref{c:energy-bound-exit-main} bound the total
energy at region boundaries and exit blocks, respectively.
Note that $\estart_b$ and $\eend_b$ are zero when $r_b = 0$.
Because \ref{c:accum-prop-main} and \ref{c:energy-bound-main} hold for every
edge in the CFG, any feasible solution guarantees that no region exceeds the
budget on any execution path.

\begin{table}[t]
  \centering
  \caption{\small Energy budget constraints on accumulated energy $\eaccum_b$.}
  \label{tab:energy-constraints}
  \small
  \makeatletter
  \newcommand{\ctag}[2]{(#1)\def\@currentlabel{#1}\label{#2}}
  \makeatother
  \begin{tabular}{@{}c l l@{}}
    \toprule
    & \textbf{Description} & \textbf{Constraint} \\
    \midrule
    \ctag{C1}{c:accum-init-main}
      & Reset at region start $b$
      & $\eaccum_b = \estart_b \quad \text{if } r_b = 1$ \\
    \ctag{C2}{c:accum-prop-main}
      & \begin{tabular}[t]{@{}l@{}}Propagate through \\ edge $(u,v)$ within region\end{tabular}
      & $\eaccum_v \ge \eaccum_u + \eblk_u \quad \text{if } r_v = 0$ \\
    \ctag{C3}{c:energy-bound-main}
      & Bound at every edge $(u,v)$
      & $\eaccum_u + \eblk_u + \eend_v \le \ebuffer$ \\
    \ctag{C4}{c:energy-bound-exit-main}
      & Bound at exit block $b$
      & $\eaccum_b + \eblk_b \le \ebuffer \quad \forall b \in B_{\mathrm{exit}}$ \\
    \bottomrule
  \end{tabular}
\end{table}

\paragraph{Dirtiness and Save Tracking}
Boundary placement and save decisions are mutually dependent: placing a
boundary resets dirtiness by saving all dirty VM variables, while having more
dirty variables in VM increases the cost of placing a boundary.
The solver must therefore decide both jointly, using two additional
decision variables (\cref{tab:milp-symbols}).
The dirtiness flag $d_{b,v}$ flows forward through the CFG within a region
and resets at each boundary;
$s_{b,v}$ is set when $v$ is dirty and in VM at any predecessor of the boundary
block $b$, contributing $\Esv_v$ to $\eend_{b}$.
The invariants are enforced by MILP constraints.

\paragraph{Allocation Constraints}
We introduce additional constraints on variable allocation
for performance and implementation simplicity.
First, a variable's allocation site (VM or NVM) may change only at region boundaries.
Allowing mid-region allocation changes complicates the implementation
and incurs additional runtime overhead;
\schematic avoids this as well.
Second, a variable's allocation must be the same across all regions that merge at a join point.
Without this constraint, join points would require conditional saving logic
to flush stale VM copies when a predecessor region accesses the variable directly from NVM.
This is not a significant limitation.
If different allocations would be beneficial, the solver can place a region
boundary before the join point to separate the regions.
This differs from \schematic, where an allocation mismatch between
already-committed segments forces a boundary (\cref{sec:limitations}).

We also implement a simpler variant, \lstinline{constant-alloc}, which makes
allocation globally constant per variable (replacing $m_{b,v}$ with a single
$m_v$) to improve scalability, reducing the number of MILP variables at the cost
of per-region flexibility.
Since our solver handles the finer per-region granularity within seconds on all
benchmarks, we retain it by default.
Even the coarse variant still automatically decides each variable's allocation
at compile time, improving over systems that require users to specify it manually.

\subsection{Instrumentation}
\label{sec:instrumentation}

A feasible MILP solution is not just a proof of safety; it directly
prescribes how to rewrite the program.
Given a feasible solution $\sigma$,
\tool constructs the
instrumented program $P(\sigma)$ by directly materializing the decisions encoded
in $\sigma$.

For boundary placement, when $r_b(\sigma)=1$, \tool inserts a region boundary
at the beginning of $b$.
This boundary ends the previous region and starts the next one.
Ending a region saves every dirty VM variable selected by
$s_{b,v}(\sigma)=1$ to NVM and executes a fixed region-end runtime stub.
Starting a region restores every live-in VM variable selected by
$\hat{r}_{b,v}(\sigma)=1$ from NVM to VM.
The common part of the region start, which restores the PC and stack
pointer, is implemented at the boot address; it then transfers control to
the region-specific part for the region being resumed.

For memory placement, block $b$ accesses variable $v$ from VM when
$m_{b,v}(\sigma)=1$ and from NVM otherwise.
For each $v \in \variables$ that may be allocated in VM, we
introduce a shadow VM variable $v^{\mathrm{shadow}}$, which serves as
the VM location of $v$ while its persistent copy resides in NVM.
When the solver places $v$ in VM for block $b$ (i.e.,
$m_{b,v}(\sigma)=1$), all accesses to $v$ in $b$ are redirected to
$v^{\mathrm{shadow}}$.
Because allocation changes are allowed only at region boundaries, each
variable's placement remains fixed throughout a region.

Memory placement requires handling accesses through pointers,
for which \tool relies on LLVM's alias analysis.
When an access through a pointer cannot be resolved to a unique variable by the
alias analysis, \tool conservatively excludes every variable the access may
alias from $\eligibles$ and keeps them in NVM.
This conservatism may forgo optimization but never affects correctness; in our
benchmarks, only two variables (in activity recognition)
fell back to NVM.

\subsection{Correctness and Optimality}
\label{sec:correctness-optimality}

\paragraph{Energy Budget Enforcement}

We design a cost semantics over execution traces to formally
prove energy safety.
The cost semantics, full MILP formulation, and all proofs are in
\cref{sec:appendix-cost-semantics,sec:appendix-algorithm}.
Briefly, an execution is a trace of events $\tau$, where each event is
either an MSP430 instruction or a memory access
$\mathsf{read}(\mu, v, n)$ or $\mathsf{write}(\mu, v, n)$, with
$\mu \in \{\mathrm{VM}, \mathrm{NVM}\}$ indicating whether the access
targets VM or NVM, $v$ the program variable being accessed, and $n$ the
number of bytes of $v$ that are read or written.
We treat memory reads and writes as separate events so that the model
captures the per-access cost on which memory allocation and save/restore
decisions depend.
The cost function $\operatorname{cost}(\tau)$ sums the per-event energy
costs along $\tau$.

Lemma~\ref{lem:energy-equality-cost-semantics} states the soundness
of the MILP's energy model: for any CFG path
$\pi$ inside a region, the costs $\eblk_b(\sigma)$, $\estart_b(\sigma)$, and $\eend_b(\sigma)$
that the MILP uses to track energy sum up to $\operatorname{cost}(\pi(\sigma))$,
the cost of the event trace executed along $\pi$ under $\sigma$'s memory
placement and region-boundary instrumentation.
Every energy bound the MILP proves on this sum is thus a bound on
the real energy consumed by the instrumented program.

\begin{lemma}[store=energyEqualityCostSemantics]
\label{lem:energy-equality-cost-semantics}
Let $\sigma$ be a feasible MILP solution.
For any CFG path $\pi = b_0 \to b_1 \to \dots \to b_k$ such that
$r_{b_0}(\sigma)=1$ and $r_{b_i}(\sigma)=0$ for all $1 \le i \le k$,
we have:

\(
\operatorname{cost}(\pi(\sigma))
=
\estart_{b_0}(\sigma)
+
\sum_{i=0}^{k} \eblk_{b_i}(\sigma).
\)

If $\pi' = b_0 \to b_1 \to \cdots \to b_k \to b_{k+1}$ and
$r_{b_{k+1}}(\sigma)=1$, then we additionally have:

\(
\operatorname{cost}(\pi'(\sigma))
=
\estart_{b_0}(\sigma)
+
\sum_{i=0}^{k} \eblk_{b_i}(\sigma)
+
\eend_{b_{k+1}}(\sigma).
\)
\end{lemma}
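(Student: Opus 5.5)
The argument rests on the compositionality of the cost semantics: $\operatorname{cost}$ sums per-event costs, so it is additive under trace concatenation. I will first fix how the instrumented trace $\pi(\sigma)$ decomposes. Since $r_{b_0}(\sigma)=1$, the trace begins with the region-start stub at $b_0$. This stub is the fixed prologue, followed by a restore $\mathsf{read}(\mathrm{NVM},v,\size(v))$/$\mathsf{write}(\mathrm{VM},v,\size(v))$ pair for each $v$ with $\hat{r}_{b_0,v}(\sigma)=1$. After the stub come the event traces of $b_0,\dots,b_k$, each executed under $\sigma$'s memory placement. Because $r_{b_i}(\sigma)=0$ for $i\ge1$, no stubs are inserted before $b_1,\dots,b_k$. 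This gives
\[
\operatorname{cost}(\pi(\sigma)) = \operatorname{cost}(\mathrm{start}_{b_0}) + \sum_{i=0}^{k}\operatorname{cost}(\mathrm{blk}_{b_i}(\sigma)).
\]
For $\pi'$, the only addition is the region-end stub at $b_{k+1}$: the epilogue plus the saves of each $v$ with $s_{b_{k+1},v}(\sigma)=1$. The restore half of $b_{k+1}$'s boundary belongs to the next region and is excluded.

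Next, I match each piece with its MILP expression. For the start stub, the cost equals $\Epro$ plus $\Erst_v$ for each restored variable, which is $\estart_{b_0}(\sigma)$ since $r_{b_0}=1$. This is essentially the definition of $\Epro$ and $\Erst_v$ as the measured costs of these event sequences; I take the appendix's parameter definitions as fixing this by construction. The end stub gives $\eend_{b_{k+1}}(\sigma)$ the same way, using $r_{b_{k+1}}=1$ and the definitions of $\Eepi$ and $\Esv_v$.

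The substantive step is the block cost $\operatorname{cost}(\mathrm{blk}_b(\sigma))=\Ebase_b+\sum_v\deltaE_{b,v}(1-m_{b,v}(\sigma))$. I will split the trace of $b$ into instruction events and memory-access events. Instruction events and accesses to fixed variables do not depend on $\sigma$. For a candidate $v$, the instrumentation redirects every access in $b$ to $v^{\mathrm{shadow}}$ when $m_{b,v}=1$ and leaves it on NVM otherwise. So the only change in the trace is swapping the tag $\mu$ from VM to NVM on the accesses to $v$, with the same $v$ and $n$. Summing, the cost difference for $v$ is $\numReads_b(v)\cdot(\text{NVM}-\text{VM read cost})+\numWrites_b(v)\cdot(\text{NVM}-\text{VM write cost})$, which is how $\deltaE_{b,v}$ (via $\deltaEread,\deltaEwrite$) is defined. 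These contributions are independent across variables, so they add linearly; this linearity is the reason $\eblk_b$ is a linear expression. With all candidates in VM the cost is $\Ebase_b$ by definition.

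I expect the main obstacle to be justifying that the redirection is purely a relabeling. It must not insert extra events such as address computation, and this depends on how $\pi(\sigma)$ is defined in the appendix semantics. The other delicate point is the boundary at $b_{k+1}$: the save must be charged to $\eend_{b_{k+1}}$ rather than to any block, and it must use $s_{b_{k+1},v}$ as the instrumentation prescribes. Feasibility of $\sigma$ enters only in making these decision variables well-defined $0/1$ values consistent with the instrumentation. The energy inequalities are not needed for this equality.
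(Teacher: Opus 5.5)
Your proposal is correct and follows the paper's proof. Cost is additive over the concatenation $\tau^{+}_{b_0}(\sigma)\cdot\tau^{\mathrm{blk}}_{b_0}(\sigma)\cdots\tau^{\mathrm{blk}}_{b_k}(\sigma)$ (plus $\tau^{-}_{b_{k+1}}(\sigma)$ for $\pi'$); the stub costs equal $\estart_{b_0}$ and $\eend_{b_{k+1}}$ by the definitions of $\Epro,\Erst_v,\Eepi,\Esv_v$; and each block cost comes from relocating VM accesses to NVM one variable at a time. The obstacle you anticipate is settled by definition, since the paper sets $\tau^{\mathrm{blk}}_b(\sigma)\triangleq\phi_{S_b(\sigma)}(\tau^{\mathrm{VM}}_b)$, and your per-variable linearity is its induction-on-traces relocation lemma, stated with per-byte $\deltaEread,\deltaEwrite$ over bytes read and written rather than access counts.
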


With Lemma~\ref{lem:energy-equality-cost-semantics} in hand,
Theorem~\ref{thm:region-energy-bound}
shows that the cost of any in-region path
never exceeds $\ebuffer$.
This follows from the energy accumulation constraints
(\ref{c:accum-init-main}--\ref{c:energy-bound-exit-main}):
any CFG path within a region accumulates energy monotonically via
\ref{c:accum-prop-main}, and \ref{c:energy-bound-main} and
\ref{c:energy-bound-exit-main} ensure that the total (including region-end
costs) never exceeds $\ebuffer$.

\begin{theorem}[store=regionEnergyBound]
\label{thm:region-energy-bound}
  Let $\sigma$ be a feasible MILP solution
  and $\pi$ be a CFG path of $P(\sigma)$ that starts at a region boundary.
  If $\pi$ reaches the next region boundary, or if $\pi$ reaches program exit
  without crossing another region boundary, then
  $\operatorname{cost}(\pi(\sigma)) \le \ebuffer$.
\end{theorem}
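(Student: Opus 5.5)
Our plan is to reduce the statement to Lemma~\ref{lem:energy-equality-cost-semantics} and then bound the resulting MILP sum using the accumulation constraints \ref{c:accum-init-main}--\ref{c:energy-bound-exit-main}.

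Write the path as $\pi = b_0 \to b_1 \to \cdots \to b_k$ (possibly extended by $b_{k+1}$), where $r_{b_0}(\sigma)=1$ and $r_{b_i}(\sigma)=0$ for $1\le i\le k$. There are two cases, matching the two hypotheses of the theorem:
\begin{enumerate*}[label=(\roman*)]
\item $\pi$ ends by entering a block $b_{k+1}$ with $r_{b_{k+1}}(\sigma)=1$;
\item $b_k \in B_{\mathrm{exit}}$ and no further boundary is crossed.
\end{enumerate*}
In case (ii) the first equation of Lemma~\ref{lem:energy-equality-cost-semantics} gives $\operatorname{cost}(\pi(\sigma)) = \estart_{b_0}(\sigma) + \sum_{i=0}^k \eblk_{b_i}(\sigma)$. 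In case (i) the second equation adds the term $\eend_{b_{k+1}}(\sigma)$.

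The core step is an invariant, proved by induction on $i$:
\[
\eaccum_{b_i}(\sigma) \;\ge\; \estart_{b_0}(\sigma) + \sum_{j=0}^{i-1} \eblk_{b_j}(\sigma) \qquad (0 \le i \le k).
\]
For the base case, \ref{c:accum-init-main} with $r_{b_0}=1$ gives $\eaccum_{b_0} = \estart_{b_0}$. For the inductive step, $(b_{i-1},b_i)\in E$ and $r_{b_i}=0$, so \ref{c:accum-prop-main} applies and gives $\eaccum_{b_i} \ge \eaccum_{b_{i-1}} + \eblk_{b_{i-1}}$. Combining this with the inductive hypothesis yields the invariant at $i$.

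To close the two cases:
\begin{itemize}
\item In case (i), apply \ref{c:energy-bound-main} to the edge $(b_k,b_{k+1})$: $\eaccum_{b_k} + \eblk_{b_k} + \eend_{b_{k+1}} \le \ebuffer$. Together with the invariant at $i=k$, this bounds the lemma's sum by $\ebuffer$.
\item In case (ii), apply \ref{c:energy-bound-exit-main} at $b_k$ in the same way.
\end{itemize}

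The main obstacle is not the arithmetic but the \emph{bookkeeping} that connects the theorem's informal notion of ``a CFG path of $P(\sigma)$ that starts at a region boundary'' to the lemma's hypotheses.
\begin{itemize}
\item We must argue that such a path, read in the original CFG $G$, has exactly the shape the lemma requires: its first block has $r=1$ and no intermediate block has $r=1$. This should follow because instrumentation inserts boundaries precisely at blocks with $r_b(\sigma)=1$.
\item We must pin down what ``reaches the next boundary'' charges. The save and epilogue at $b_{k+1}$ are included via $\eend_{b_{k+1}}$, but the restore at $b_{k+1}$ belongs to the next region. This matches the lemma's second equation.
\item Degenerate cases need care: $k=0$; $b_0$ itself an exit block; and the program entry, which we treat as a region start, presumably by the appendix's convention that $r_{\text{entry}}=1$.
\end{itemize}
Finally, because $\eaccum$ is only lower-bounded along paths (\ref{c:accum-prop-main} is an inequality), the argument must use $\ge$ throughout. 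That direction is exactly the one needed, so the proof is sound on every path, including reconvergent ones.
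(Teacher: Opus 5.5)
Your proposal is correct and follows the paper's own route. The paper rewrites $\operatorname{cost}(\pi(\sigma))$ via Lemma~\ref{lem:energy-equality-cost-semantics}. It then proves your invariant as Lemma~\ref{lem:accum-path}, with the base case from \ref{c:accum-init-main} and the step from \ref{c:accum-prop-main}, and closes the two cases with \ref{c:energy-bound-main} on the edge $(b_k,b_{k+1})$ and \ref{c:energy-bound-exit-main} at the exit block (Lemma~\ref{lem:region-energy-bound}); you inline these lemmas rather than stating them separately.
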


\paragraph{Optimality}
Since the solver returns a solution $\sigma^\ast$ that is guaranteed to be
optimal, the instrumented program $P(\sigma^\ast)$ minimizes expected energy overhead among
all instrumentations representable by our formulation.
This class consists of all instrumentations that place region boundaries at the
beginning of basic blocks, save live dirty VM variables at region end, restore
live VM variables at region start, and satisfy the energy budget on every CFG
path.

\subsection{Handling Loops}
\label{sec:handling-loops}

The MILP formulation handles loops without special treatment.
A cycle in the CFG forces the accumulated energy $\eaccum_b$ to grow
along~\ref{c:accum-prop-main}, eventually
violating~\ref{c:energy-bound-main}.
The solver resolves this by placing at least one region boundary inside the
loop.

This means every loop iteration contains a region boundary.
For loops whose bodies cost much less than $\ebuffer$, checkpoint overhead
dominates execution time.
We mitigate this with two techniques: \emph{CFG summarization} collapses
loops that fit within the budget into a single summary node, and
\emph{loop chunking} transforms loops that do not fit so that they become
eligible for summarization.

\paragraph{CFG Summarization}
We avoid inserting a region boundary inside a loop if the entire loop
can be executed within the energy budget.
Such a loop is collapsed into a \emph{summary node} that is visible to
the MILP as a single node, with energy cost equal to the per-iteration
cost times the trip count.
The maximum trip count is
provided as an annotation.
Block frequencies are adjusted so that the summary node carries the
original loop frequency, and the back-edge is removed, reducing the
number of cycles the MILP must reason about.

\paragraph{Loop Chunking}
Loop chunking splits loops that
may exceed
the energy budget
 into an outer loop and a bounded inner
loop that fits within the budget, making it eligible for summarization.
Given a loop with worst-case per-iteration energy $E_{\mathrm{iter}}$,
we compute the maximum number of iterations $K$ that fit in one energy
cycle, considering the overhead introduced by chunking, before the MILP
is formulated.
If $K$ covers the entire loop, no transformation is needed.
When $K \ge 2$, we split the loop into an outer loop and a $K$-bounded
inner loop, which is handled by CFG summarization.

\tool supports two chunking strategies:
when the loop has a canonical induction variable, a single exit, and a
constant trip count $N$, we rewrite the loop bounds directly
(\cref{fig:exit-rewrite}).
For loops without a canonical induction variable or with unknown trip
counts (e.g., \lstinline{while} loops), we insert a
modulo-$K$ counter that forces the loop to exit to the outer loop every
$K$ iterations, without modifying
the original exit condition
(\cref{fig:chunk-counter}).
This incurs an additional counter increment and comparison per iteration,
but applies to all loops in simplified form (obtained by the \lstinline{loop-simplify} pass in LLVM).
In both cases, the inner loop runs at most $K$ iterations
and the outer loop iterates $\lceil N/K \rceil$ times, where $N$ is the trip
count.

\begin{figure}[t]
  \centering
  \tikzset{
    loopchunkbox/.style={
      rectangle,
      rounded corners=2pt,
      draw=black!65,
      fill=gray!4,
      line width=0.4pt,
      inner xsep=4pt,
      inner ysep=3pt,
      font=\scriptsize\ttfamily,
      align=left
    },
    loopchunkarr/.style={
      -{Stealth[length=5pt,width=4pt]},
      line width=0.6pt,
      draw=RoyalBlue!75!black,
      shorten <=1pt,
      shorten >=1pt
    },
    loopsumbox/.style={
      rounded corners=1.5pt,
      draw=RoyalBlue!75!black,
      dashed,
      line width=0.5pt,
      inner xsep=3pt,
      inner ysep=2pt
    }
  }
  \begin{subfigure}[t]{0.48\columnwidth}
    \centering
    \begin{tikzpicture}
      \node[loopchunkbox] (beforeA) {%
        \begin{tabular}{@{}l@{}}
          {\bfseries for} (i=0; i<N; i++)\\
          \hspace*{1em}body(i);
        \end{tabular}%
      };
      \node[loopchunkbox, below=12pt of beforeA] (afterA) {%
        \begin{tabular}{@{}l@{}}
          {\bfseries for} (o=0; o<N; o+=K) \{\\[-1pt]
          \begin{tikzpicture}[baseline=(current bounding box.center)]
            \node[loopsumbox] (sumA) {%
              \begin{tabular}{@{}l@{}}
                \hspace*{1em}lim = min(o+K, N);\\
                \hspace*{1em}{\bfseries for} (i=o; i<lim; i++)\\
                \hspace*{2em}body(i);
              \end{tabular}%
            };
          \end{tikzpicture}\\[-1pt]
          \begin{tabular}{@{}l@{\hspace{0.5em}}l@{}}
            \}; &
            {\scriptsize\rmfamily\bfseries\color{RoyalBlue!75!black}summarized}
          \end{tabular}
        \end{tabular}%
      };
      \draw[loopchunkarr] (beforeA.south) -- (afterA.north);
    \end{tikzpicture}
    \caption{constant trip count}
    \label{fig:exit-rewrite}
  \end{subfigure}
  \hfill
  \begin{subfigure}[t]{0.48\columnwidth}
    \centering
    \begin{tikzpicture}
      \node[loopchunkbox] (beforeB) {%
        \begin{tabular}{@{}l@{}}
          {\bfseries while} (cond) \{ body(); \}
        \end{tabular}%
      };
      \node[loopchunkbox, below=12pt of beforeB] (afterB) {%
        \begin{tabular}{@{}l@{}}
          {\bfseries while} (true) \{\\[-1pt]
          \begin{tikzpicture}[baseline=(current bounding box.center)]
            \node[loopsumbox] (sumB) {%
              \begin{tabular}{@{}l@{}}
                \hspace*{1em}{\bfseries for} (c=0; c<K; c++) \{\\
                \hspace*{2em}{\bfseries if} (!cond) {\bfseries goto} done;\\
                \hspace*{2em}body();\\
                \hspace*{1em}\}
              \end{tabular}%
            };
          \end{tikzpicture}\\[-1pt]
          \begin{tabular}{@{}l@{\hspace{0.5em}}l@{}}
            \} &
            {\scriptsize\rmfamily\bfseries\color{RoyalBlue!75!black}summarized}
          \end{tabular}\\
          done:
        \end{tabular}%
      };
      \draw[loopchunkarr] (beforeB.south) -- (afterB.north);
    \end{tikzpicture}
    \caption{unknown trip count}
    \label{fig:chunk-counter}
  \end{subfigure}

  \caption{Loop chunking transforms a loop into an outer loop and a
    $K$-bounded inner loop.}
  \Description{Diagrams showing the loop chunking transformation with trip count and chunk counter strategies.}
  \label{fig:loop-chunking}
\end{figure}

The pass processes nested loops from outermost to innermost.
If an outer loop is successfully chunked, its inner loops
are skipped, as each chunk already fits within the energy budget.
Otherwise, the pass recurses into the sub-loops.
When computing $E_{\mathrm{iter}}$ for a loop containing sub-loops,
each sub-loop's energy is conservatively estimated as its worst-case
per-iteration energy multiplied by its trip count.

While this incurs more overhead than loop unrolling, it is more general as it
supports arbitrary loop structures with unknown trip counts.
Also, loop unrolling is limited by program size bloat.
Note that our chunking strategy is semantically equivalent to the conditional
checkpointing in \schematic, but we avoid it as it would require
ad-hoc complexities in the MILP formulation.
Loop chunking and CFG summarization keep the MILP formulation unchanged, taking
advantage of its flexibility and safety guarantees for free.

\subsection{Handling Function Calls}
\label{sec:function-calls}

Our formulation so far assumes that all functions are inlined into a single
procedure to enable globally optimal placement.
Handling function calls is still important, however, as some functions are not
inlined, and per-function compilation is necessary to scale to larger programs.

A function whose worst-case energy fits within the budget needs no
boundary inside it; we call it \emph{boundary-free}.
\tool neither runs MILP on nor instruments
such a function,
and instead treats each call
to it
as a single instruction
whose cost is the function's worst-case energy.
This also avoids unnecessary cost for callers,
as placing boundaries inside such a function would
otherwise force every call site to pay for the cost,
even when the caller does not need it.

In our evaluation, most calls that remain after inlining are boundary-free.
Notable exceptions are \lstinline{memcpy} and \lstinline{memset}, which we
currently treat as boundary-free by assuming the transfer size is small enough
to fit within the energy budget (\cref{sec:parameter-estimation}).
Ideally, they would be handled as boundary-capable functions, as described
below.

A function whose worst-case energy does not fit within
the budget is \emph{boundary-capable}.
Such a function
needs region boundaries inside it,
so \tool must solve its MILP.
The result must then be composed with the MILP solutions of its callers.
We outline this extension here and leave its implementation to future work.
The extension would solve each such function's MILP and produce three summary
constants: the worst-case energy from entry to the first region boundary,
from the last region boundary to return,
and from entry to return along paths that cross no boundary.
The caller's MILP would then use these constants to bound the energy of regions
that include calls to already-solved functions, without reasoning about the
callees' internal structure.
Since there are no recursive or indirect calls (\cref{sec:overview}), the call
graph is a DAG, so each callee could be processed before its callers.

Beyond energy, a power failure inside a callee may erase the caller's volatile
state, so \tool must ensure it is saved and restored.
First, dirty shadow variables have to be committed to NVM before a
boundary-capable call,
so each function's MILP can assume all candidate variables are current in NVM at entry.
Second, the caller's return address has to be saved to NVM before each
boundary-capable call and restored when the callee returns, since a
power failure inside the callee erases the stack in SRAM.
Both costs are reflected in the caller's MILP.
Spilled variables that are live across the call need no special mechanism,
as the caller's MILP already accounts for saving and restoring them.

Because every callee summary is a conservative bound, this extension
preserves safety.
While global optimality is not guaranteed, each function is
solved optimally in isolation.

\section{Parameter Estimation}
\label{sec:parameter-estimation}

Our algorithm requires energy estimates for each program segment in order to
determine region-boundary placement.
We obtain these estimates in two stages:
first, we construct an instruction-level energy model for
MSP430 assembly (\cref{sec:energy-profiling-assembly});
then, we use this model to estimate the energy consumption of each LLVM IR basic
block (\cref{sec:energy-profiling-ir}).
We also estimate basic block execution frequencies
for the MILP objective (\cref{sec:frequency-estimation}).

\subsection{Estimating Energy Consumption of Assembly}
\label{sec:energy-profiling-assembly}

\paragraph{Energy Model}
We construct an instruction-level energy model for MSP430 assembly programs
following the approach of Tiwari et al.~\cite{tiwari1994power},
which assigns a fixed cost to each instruction type.
Each instruction is characterized by its
$(\mathit{opcode}, \mathit{addrmode}_1, \dots, \mathit{addrmode}_n)$ tuple.
The cost of a basic block is then the sum of
the costs of its instructions.

As noted in \cref{sec:function-calls}, we treat
\lstinline{memcpy} and \lstinline{memset} as boundary-free in our benchmarks.
This simplifies our current implementation
and avoids unnecessary region boundaries inside these functions.
Their energy consumption grows with the \lstinline{size} argument
and can exceed the energy budget in general,
but every call in our benchmarks has a small bounded size,
and most of them are compile-time constants.
We model each call with a linear cost function
$a \cdot \mathit{size} + b$,
whose parameters $a$ and $b$ are estimated jointly
with the per-instruction energy costs.
When the \lstinline{size} argument is not known at compile time,
we use a user-provided conservative upper bound (64 for our benchmarks).
In either case, the cost stays below the energy budget.
Supporting unbounded sizes would require treating these calls as
boundary-capable (\cref{sec:function-calls}).

\paragraph{Estimation Method}

We estimate the per-instruction energy costs by solving a linear program
that finds the smallest costs such that predicted energy
always exceeds measured energy.
Each program for training executes a known sequence of instructions, and we
measure its total energy consumption using the profiling setup described below.
To obtain the exact instruction counts for each program, we implement an
assembly interpreter, which replays each benchmark and records the
number of executions of each instruction type.
Given $m$ programs, let $\mathbf{a}_j \in \mathbb{N}^k$ be the vector of
instruction counts for program $j$ and $E_j$ the measured energy.  Note that
$\mathbf{a}_j$ includes not only the target instruction but also the jump,
counter increment, and comparison instructions that implement the enclosing
loop.
We solve
\begin{equation*}
  \min_{\mathbf{e} \ge 0,\, \mathbf{s} \ge 0} \mbox{$\sum_{j=1}^{m} s_j$} \quad \text{subject to} \quad A\mathbf{e} = \mathbf{E} + \mathbf{s},
\end{equation*}
where $A = [\mathbf{a}_1, \dots, \mathbf{a}_m]^\top$
and $\mathbf{e} \in \mathbb{R}^k$ is the vector of per-instruction energy costs,
and $s_j$ is the overestimation slack for benchmark $j$.
This formulation ensures that the predicted energy of every program
exceeds its measured energy (i.e., $\mathbf{s} \ge 0$),
and finds the tightest such upper bound
by minimizing the total overestimation $\sum_j s_j$.

\paragraph{Profiling Setup}

We power the MSP430 board with the Otii Ace Pro
at a fixed \qty{3.3}{\volt} supply voltage
and sample its current draw at 50\,ksp/s.
We use a GPI pin to measure the execution interval of each program and
integrate the resulting power trace over this interval to obtain its energy
consumption.
Each benchmark program repeats a single target instruction in a nested loop
to accumulate a large number of executions per measurement,
reducing the
noise.
Instructions that cannot be repeated naively
(e.g., \lstinline{br}, \lstinline{push})
use specialized programs that repeat them in a controlled manner.
For intrinsic functions, we measure worst-case execution
by supplying inputs that maximize the number of iterations;
for memory functions (\lstinline{memcpy}, \lstinline{memset}),
we vary the \lstinline{size} argument to fit the linear model above.

\paragraph{Validation Results}
We validate our energy model on four synthetic microbenchmarks
(\cref{sec:appendix-microbenchmarks}),
each repeating a small basic block so that its energy consumption becomes measurable.
The benchmarks cover progressively more complex addressing modes:
register-only arithmetic ($-0.46\%$ error),
indexed load with computation ($+0.07\%$),
indexed load--compute--store ($-0.76\%$),
and pointer walking with auto-increment ($-1.19\%$).
The mean absolute error across the four benchmarks is $0.62\%$
(\cref{tab:microbenchmark-error}).
The
underestimation is primarily due to the data-switching
effect~\cite{ascia2001instruction}, i.e., the additional energy consumed
when operand bit patterns change between consecutive instructions.
We leave mitigating this effect to future work,
which is orthogonal to our placement algorithm.
This error does not affect our evaluation results (\cref{sec:evaluation}),
because we use the same energy model for all systems, and we evaluate on
continuous power.

\subsection{IR-Level Energy Estimation}
\label{sec:energy-profiling-ir}

Because machine-level instrumentation may interfere with already-applied optimizations such as register allocation, \tool and most compiler-assisted tools in intermittent computing
\cite{schematic2024,cleancut,maeng2018adaptive,erata2023etap} instrument at the LLVM IR level.
Thus, we need to map the per-instruction energy costs from the previous subsection, to estimate energy at the LLVM IR
level~\cite{lattner2004llvm}.

We perform an additional compilation of the program
to assembly with debug information enabled,
following the approach of \cite{llvm-static-analysis}.
The DWARF debug information
yields a mapping from each IR
block to its corresponding group of assembly instructions.
The energy cost of each LLVM IR basic block is then the sum of the
per-instruction costs of its corresponding assembly instructions,
as determined by the model in the previous subsection.
Unlike instruction-level mappings between IR and assembly,
a block-level mapping is sufficient for our algorithm
and is more robust against backend transformations.
Some LLVM IR basic blocks have no corresponding assembly block (e.g., blocks
containing only PHI nodes or unconditional jumps); we assign them zero
energy cost, as they correspond to no-ops in practice.

A limitation of this approach is that inserting instrumentation may
perturb backend code generation.
As far as we know, existing approaches at the LLVM IR level either
use a less accurate IR-level energy model~\cite{llvm-static-analysis} or share
the same limitation.
This can be mitigated by recompiling and re-deriving costs from the final
assembly after instrumentation, and re-solving MILP if any region exceeds the
budget.
We leave this to future work.

\subsection{Frequency Estimation}
\label{sec:frequency-estimation}
Our MILP objective (\cref{sec:algorithm}) weights the cost
of each basic block by its execution frequency $F_b$,
so that the solver prioritizes reducing overhead
in frequently executed blocks.
To estimate the execution frequency of each basic block, we compile the program
with an instrumentation pass that inserts a counter for each basic block and run
the instrumented program on a host machine.
The benchmarks we evaluate are non-deterministic.
For programs with input-dependent or non-deterministic behavior, one can either
run the program multiple times with inputs sampled from a representative
distribution, or use symbolic execution techniques~\cite{erata2023etap}.


\section{Evaluation}
\label{sec:evaluation}
We evaluate \tool on a set of embedded benchmarks and compare it
against existing energy-aware region-formation approaches.
We run two sets of experiments and measure execution time: one under continuous
power and the other while replaying recorded RFID energy-harvesting traces.
We also report \tool's instrumentation overhead, its sensitivity to trip-count
annotations, and its MILP solving time.

\subsection{Evaluation Setup}
\label{sec:evaluation-setup}

\subsubsection{Systems Under Comparison}

We compare \tool against existing energy-aware region-formation approaches,
\rockclimb~\cite{rockclimb} and \schematic~\cite{schematic2024}.
We implement all three systems using the LLVM compiler infrastructure
at the \lstinline{O3} optimization level.
We also evaluate \schematic at \lstinline{O0}
to match its reference implementation.
All systems share the same energy estimation model
introduced in \cref{sec:parameter-estimation}.

\paragraph{\tool}
We implement \tool as an LLVM pass. \tool operates at basic block granularity
and does not support splitting basic blocks.
This is not a practical limitation:
\schematic explicitly assumes that every basic block fits in the energy budget,
and in practice all basic blocks are much smaller than the energy budget,
so we do not lose control compared to
instruction-level placement in \rockclimb.

\paragraph{\rockclimb}
We reimplement \rockclimb as an LLVM MIR-level (Machine Intermediate Representation) pass, as no open-source implementation is available.
We chose MIR because \rockclimb requires instruction-level cycle counting
and identification of physical registers for distributed checkpointing,
neither of which is available at the LLVM IR level.
\rockclimb also requires loop unrolling to mitigate the overhead of region boundaries
within loops. We use a maximum unrolling factor of 9,
because it was the largest unrolling factor that we could use
to compile and link the program without exceeding the code size limit.\footnote{In the default linker script, the maximum code size is \qty{48}{\kilo\byte}.
Even with a modified linker script, the hard limit is \qty{64}{\kilo\byte}
because LLVM does not support the MSP430X 20-bit addressing mode.
}

\paragraph{\schematic}
We reimplement \schematic as an LLVM IR pass,
because the reference implementation has limitations
that prevent a direct comparison:
\begin{enumerate*}[label=(\arabic*)]
    \item it produces simulation-only IR
    that cannot be directly compiled to run on hardware,
    \item it is hardcoded for \lstinline{O0} optimization
    because the algorithm requires all variables
    to reside on the stack (suppressing \lstinline{mem2reg}), and
    \item it is implemented as ad-hoc Python scripts
    that
    modify LLVM IR textually,
    rather than as a standard LLVM pass,
    making it difficult to extend beyond \lstinline{O0}.
\end{enumerate*}
To isolate the effect of \schematic's algorithm from the effect
of disabled optimizations, we also evaluate \schematic at \lstinline{O3}
(hereafter \schematicO), where its algorithm decides on the placement of
unpromoted stack variables and global variables.
We also fix soundness bugs in the reference implementation
\cref{sec:schematic-bug} for a fair comparison.

\subsubsection{Platform and Configuration}

Our target platform is the MSP430FR5994, which has \qty{256}{\kilo\byte} of FRAM
(NVM) and \qty{4}{\kilo\byte} of SRAM (VM), running at \qty{16}{\mega\hertz}.
We evaluate using energy budgets computed from three capacitor sizes:
\qty{5}{\micro\farad}, \qty{10}{\micro\farad}, and \qty{50}{\micro\farad}.
The energy budget for each configuration is computed as $E = \frac{1}{2} C
(V_{\max}^2 - V_{\min}^2)$, where $V_{\max} = \qty{3.6}{\volt}$ and $V_{\min} = \qty{1.8}{\volt}$
represent the maximum and the minimum working voltage of the board, respectively.

\subsubsection{Benchmarks}
We evaluate thirteen benchmarks: activity recognition, AES,
bitcount, ChaCha20, CRC, cuckoo filter, Dijkstra, Poly1305, quicksort, RSA,
sensor fusion, SHA-256, and stringsearch.\footnote{We modified
stringsearch to store its patterns in a character array rather than an array
of pointers, because our current profiling toolchain runs the program on the
host, whose pointer size differs from the MSP430's. This is a limitation of
our prototype tooling, not of the algorithm.}
They span cryptography, sensing and classification, and general-purpose
data-processing kernels.
All are drawn from UlSWaP-Bench~\cite{ulswap-bench} except cuckoo filter, a
standard intermittent-computing benchmark~\cite{alpaca,maeng2018adaptive}.
Most omitted UlSWaP-Bench programs exceed the code/data address range
supported by LLVM's MSP430 backend.
We also omit fft and basicmath because their math-library calls
(\lstinline{sin}/\lstinline{cos}) exceed the energy budgets and require
boundary-capable function support (\cref{sec:function-calls}).
We configure input sizes and repetition counts so each benchmark spans multiple
power cycles yet completes within a few seconds, keeping evaluation time
manageable without changing program structure.

We annotated loops with trip-count bounds
and inlined all functions to expose the full program to each system's analysis.
It is safe to do so as none of the benchmarks have recursive calls.
On eight of the thirteen benchmarks the compiler already inlines every
function at \lstinline{O3}. On the remaining five it grows code size by
4--126\%, at most \qty{4.2}{\kilo\byte}.

\begin{figure*}[t]
    \centering
    \captionsetup[subfigure]{skip=0pt}
    \begin{subfigure}{\textwidth}
        \centering
        \includegraphics[width=\linewidth]{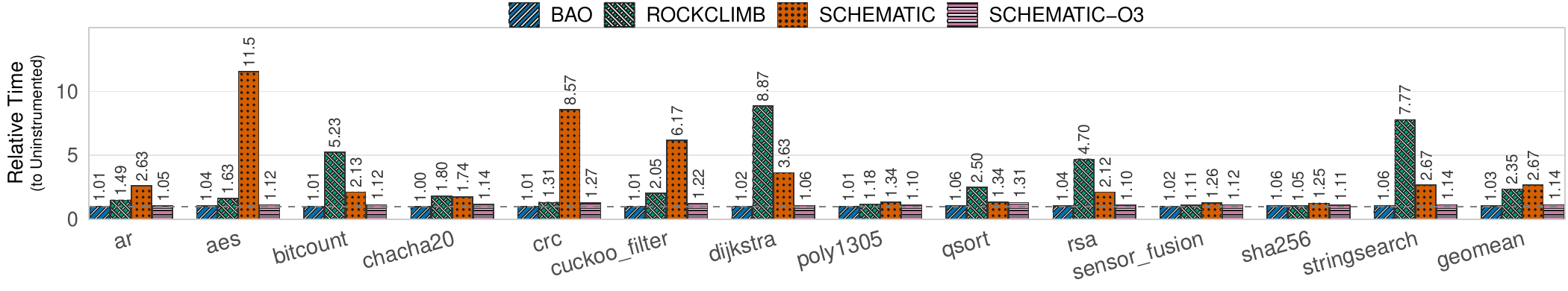}
        \caption{Normalized execution time with a 10\,\textmu{}F capacitor.}
        \Description{Bar chart comparing normalized execution time across benchmarks with a 10 microfarad capacitor.}
        \label{fig:execution-time-10uF}
    \end{subfigure}
    \begin{subfigure}{\textwidth}
        \centering
        \includegraphics[width=\linewidth]{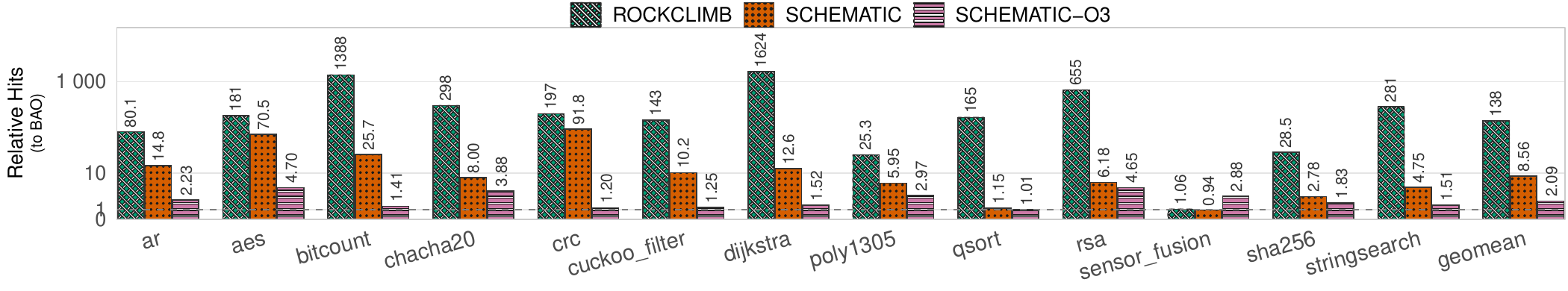}
        \caption{Normalized runtime region boundary hits with a 10\,\textmu{}F capacitor.}
        \Description{Bar chart comparing normalized runtime region boundary hits across benchmarks with a 10 microfarad capacitor.}
        \label{fig:runtime-region-boundary-calls-10uF}
    \end{subfigure}
    \vspace{-10pt}
    \captionsetup{skip=0pt}
    \caption{Normalized execution time and runtime region boundary hits with a 10\,\textmu{}F capacitor.}
\end{figure*}

\subsection{Execution Time and Region Boundary Hits}
\label{sec:continuous-power}
We measure the execution time and the number of region boundary hits at run-time
under continuous power.
The measured execution time includes saving and restoring state but excludes
power-off and recharge time.
Under intermittent power, execution time can be approximated as
$T_{\mathrm{expected}} = T_{\mathrm{base}} + R \cdot \bar{T}_{\mathrm{wait}}$,
where $T_{\mathrm{base}}$ is the measured execution time,
$R$ is the number of region boundary hits at run-time, and
$\bar{T}_{\mathrm{wait}}$ is the average additional delay per boundary hit due to
recharging and any power-off and recovery.
$\bar{T}_{\mathrm{wait}}$ depends on energy-harvesting conditions.
For a given $\bar{T}_{\mathrm{wait}}$, reducing both $T_{\mathrm{base}}$ and $R$
reduces expected execution time.

\Cref{fig:execution-time-10uF} shows execution time at
\qty{10}{\micro\farad}, normalized to uninstrumented \lstinline{O3} code.
\Cref{fig:runtime-region-boundary-calls-10uF} shows the number of region
boundary hits at \qty{10}{\micro\farad}, normalized to \tool.
Results for the other capacitor sizes are in \cref{sec:appendix-evaluation}.
On average, \tool outperforms all baselines in both execution time and
region boundary hits, and the trend is consistent across capacitor sizes:
at \qty{5}{\micro\farad}, the geometric mean (geomean) execution time of
\rockclimb, \schematic, and \schematicO
relative to \tool is
2.17$\times$, 2.72$\times$, and 1.11$\times$, respectively;
at \qty{10}{\micro\farad}, it is
2.29$\times$, 2.60$\times$, and 1.11$\times$;
at \qty{50}{\micro\farad}, it is
2.34$\times$, 2.56$\times$, and 1.07$\times$.
We discuss the \qty{10}{\micro\farad} results in detail below.

The faithful \schematic reimplementation at \lstinline{O0} is
on average 2.60$\times$ slower than \tool and incurs 8.56$\times$ more region
boundary hits.
This overhead is primarily due to the overall slowdown of the program:
without compiler optimizations, the program consumes more energy overall,
requiring more region boundaries.
While \schematic has its own memory allocation strategy, it can at best use SRAM in place of
FRAM and cannot utilize physical registers.
Furthermore, suppressing \lstinline{mem2reg} also disables all other compiler
optimizations that depend on virtual registers.
In fact, uninstrumented execution at \lstinline{O0} is
2.56$\times$ slower than uninstrumented execution at \lstinline{O3},
even when the stack is in SRAM.

Compared to \rockclimb, \tool
reduces execution time by
56\%
 and incurs 99\% fewer region boundary hits,
primarily due to the handling of loops.
As reported by its authors, \rockclimb's loop unrolling does not apply to
loops with unknown trip counts,
and the unrolling factor is limited for practical reasons, e.g., code size.
In such cases, \rockclimb inserts more region boundaries than necessary.
In contrast, \tool operates on a summarized CFG and avoids this overhead.
When we tested with a larger unrolling factor, 64,
on a small, loop-heavy benchmark, CRC,
\rockclimb's execution time was 19\% lower than with the factor of 9,
making it faster than \schematicO, though still 5\% slower than \tool.

\schematicO is the most direct baseline,
as it isolates the effect of the placement algorithm.
\tool is 10\% faster and produces 52\% fewer region boundary hits on average.
The main reason is that \schematic's energy accounting for loops is conservative.
Because the conditional checkpoint fires every $K$ iterations,
\schematic must reserve $(K{-}1) \times E_{\mathrm{iter}}$ of energy
(where $E_{\mathrm{iter}}$ is the per-iteration energy cost)
both before and after the loop to account for the worst case
in which up to $K{-}1$ iterations execute before the next checkpoint.
In nested loops, this
propagates to the enclosing loop,
forcing \schematic to place more region boundaries
than necessary.

Sensor fusion is the only benchmark where \tool does not
incur the fewest region boundary hits.
This is solely an artifact of energy estimation:
at \lstinline{O3}, the backend tail-duplicates the software floating-point calls
that dominate the hot loop's energy, and our energy profiler maps all of the
duplicated machine code back to the same IR basic block,
leading to overestimation of such a loop's energy.
This affects every system compiled at \lstinline{O3}; \schematicO incurs
3$\times$ more hits than \schematic at \lstinline{O0}.

\tool's advantage over \schematicO in region boundary hits is smallest on
quicksort.
This is due to the benchmark:
a single iteration of the outer sorting loop consumes 87\% of the energy
budget, so any placement must hit at least one region boundary per outer
iteration, and both systems are close to this lower bound.
\tool is nevertheless 19\% faster than \schematicO on quicksort,
because it saves and restores variables at virtual register granularity
rather than saving and restoring the entire set of physical registers at each
region boundary.

\subsection{Execution Time under Intermittent Power}
\label{sec:intermittent-power}

\subsubsection{Trace Replay Setup}

\paragraph{Hardware Setup}
We run the instrumented programs under intermittent power on an
MSP430FR5994, using an Otii Ace Pro that replays recorded energy-harvesting
voltage traces.
Since the Otii is a stiff voltage source,
we place a \qty{560}{\ohm}
series resistor to limit the charging current, and a Schottky diode to block
current back-feed into Otii.
We test on the board's intrinsic supply capacitance:
the decoupling capacitors on the target power rail total
\qty{11.5}{\micro\farad}~\cite{slau678}, which we also verified empirically.

\paragraph{Trace Replay}
We derive our traces from the ten RFID energy-harvesting traces distributed with
Mementos's trace-driven simulator~\cite{mementos}.
We replay each trace at $10\times$ speed to expose each run to frequent voltage
drops.
The raw recordings stay above or below the target's operating voltage for a
second or more at a time, so a real-time replay would leave most runs with only
a handful of voltage fluctuations---close to what our continuous-power
experiments already cover.
Then, we downsample each trace
from \qty{1}{\kilo\hertz} to \qty{50}{\hertz}, the highest rate at
which
the Otii replays reliably; the voltage drops survive
downsampling.
Each trace is clipped to \qty{3.6}{\volt}, the maximum operating
voltage of the board, and repeated back to back so that the replay outlasts
every benchmark.
Trace 3 never reaches the board's operating voltage range; we scale it
up before clipping until its 99th-percentile voltage reaches \qty{3.6}{\volt}.
\Cref{fig:harvesting-traces} shows the preprocessed traces.

\begin{figure}
    \centering
    \includegraphics[width=\linewidth]{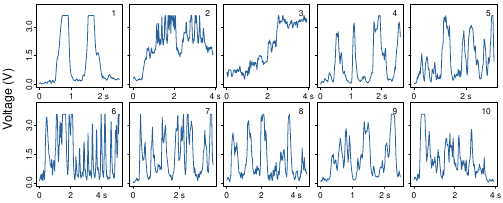}
    \caption{The energy-harvesting traces used in experiments.}
    \Description{Grid of ten voltage-versus-time plots showing the preprocessed energy-harvesting traces used for replay.}
    \label{fig:harvesting-traces}
\end{figure}

\paragraph{Measurement}
We measure execution time from
GPIO pulses emitted at benchmark start and completion,
captured with a Saleae logic analyzer.
The measured time now includes capacitor recharge time, both at region
boundaries and after power failures.
We also record how many boundary hits required waiting to recharge, how many
recoveries from power failure occurred, and whether the target ever lost power
mid-region.

\begin{figure*}[t]
    \centering
    \includegraphics[width=\textwidth]{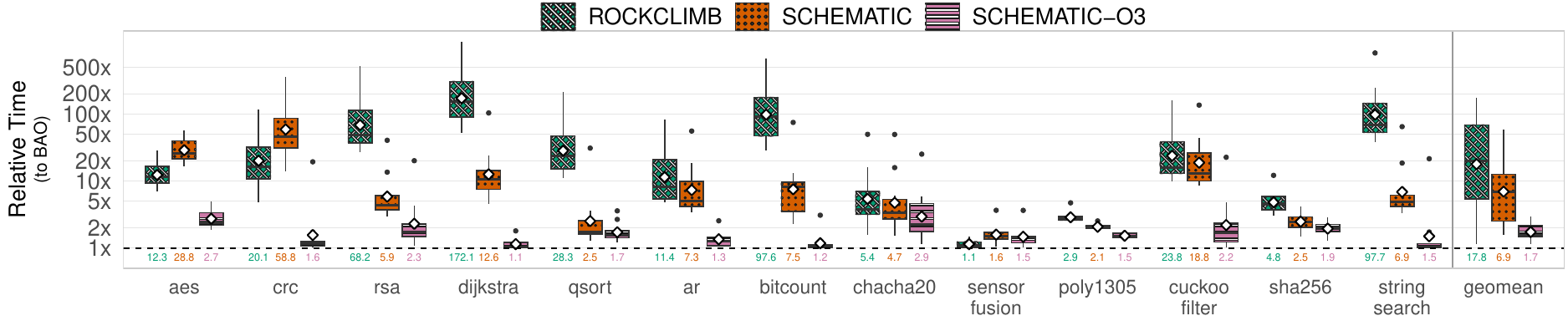}
    \caption{Execution time under intermittent power, normalized to \tool's
    run on the same trace and shown on a log scale. Each box summarizes the
    ten traces; the number below each box is the geomean.}
    \Description{Box plot comparing execution time under intermittent power, normalized to Bao, across benchmarks over ten energy-harvesting traces.}
    \label{fig:intermittent-execution-time}
    \vspace{-.25cm}
\end{figure*}

\begin{table}
  \centering
  \caption{Share (\%) of time under intermittent power
  spent on execution ($T_{\mathrm{exec}}$), checkpointing ($T_{\mathrm{ckpt}}$), and
  waiting for energy ($T_{\mathrm{wait}}$). A wait includes boot and
  recovery if power fails during it. Rows may not sum to 100 due to rounding.}
  \label{tab:intermittent-decomposition}
  \footnotesize
  \setlength{\tabcolsep}{3pt}
  \begin{tabular*}{\linewidth}{l@{\extracolsep{\fill}}rrrr}
    \toprule
                        & \tool & \rockclimb & \schematic & \schematicO \\ \midrule
    $T_{\mathrm{exec}}$ & 8.2   & 1.0        & 1.5        & 4.9 \\
    $T_{\mathrm{ckpt}}$ & 0.6   & 1.7        & 1.7        & 1.2 \\
    $T_{\mathrm{wait}}$ & 91.2  & 97.3       & 96.9       & 93.9 \\
    \bottomrule
  \end{tabular*}
  \vspace{-.25cm}
\end{table}

\subsubsection{Results}
\label{sec:mid-region-failures}
All runs completed with the correct result.
Across all runs, \tool waits at 7--152 boundaries per run (mean 23) and
experiences at most 24 power failures (mean 0.8).
Under the same traces, \rockclimb, \schematic, and \schematicO wait at 302,
125, and 38 boundaries per run, and experience 11.7, 5.1, and 1.3
power failures on average.
Power failures are rarer than waits for every system, as they
happen only when the supply voltage stays low long enough that leakage
during the wait depletes the capacitor.

\Cref{fig:intermittent-execution-time} shows
execution-time ratios relative to \tool
across the ten traces for
each benchmark, with geomeans annotated.
\tool-generated code is faster than every baseline on every run.
The geomean speedups over \rockclimb, \schematic, and \schematicO
are 17.8$\times$, 6.9$\times$, and 1.7$\times$, respectively.
\Cref{tab:intermittent-decomposition} breaks execution time down by
the measured wait time ($T_{\mathrm{wait}}$), the uninstrumented
program's time under continuous power ($T_{\mathrm{exec}}$), and the
instrumented program's extra time under continuous power
($T_{\mathrm{ckpt}}$), including each boundary's save, restore, and voltage check.
Waiting takes 91\% of \tool's time and 94--97\% of the baselines', so the
differences come largely from wait time, which grows
with the number of waits.

No run of any system lost power in the middle of a region,
indicating that the shared energy model is sufficiently conservative
for the tested harvesting conditions.
Part of this margin is by design---the energy estimation deliberately
over-approximates (\cref{sec:parameter-estimation})---but part comes
from the profiling setup.
The energy parameters are measured at a fixed \qty{3.3}{\volt} supply,
but the MCU
can operate at voltages well below \qty{3.3}{\volt},
where each instruction consumes less energy.
Developing a tighter, voltage-aware energy model
is future work.

\subsection{Instrumentation and Chunking Overhead}
\label{sec:chunking-overhead}
To quantify the overhead of \tool's program transformations,
we compare against an uninstrumented O3 baseline under continuous power,
with the 10\,\textmu{}F capacitor.
We measure two builds:
a chunk-only build, which applies loop chunking but inserts no instrumentation,
and the full \tool build, which additionally saves and restores state at
region boundaries.
Loop chunking alone increases execution time by 1.6\% (geomean)
(7.9\% maximum, on stringsearch).
Full \tool instrumentation increases execution time by 2.6\% (geomean)
(6.3\% maximum, on stringsearch), as shown in \cref{fig:execution-time-10uF}.
Since runtime overhead is inherent to any intermittent execution
system, this is an acceptable cost for \tool's correctness guarantees.

\subsection{Impact of Maximum Trip Count Annotations}
\label{sec:trip-count}
As discussed in \cref{sec:handling-loops}, \tool works without
  maximum trip count annotations for loops,
  at the cost of generating less efficient boundary placement.
To quantify this cost, we
recompile our benchmarks
without annotations, using the \SI{10}{\micro\farad} capacitor.
All 13 benchmarks compile successfully, and tests confirm the generated code
remains correct.
For 8 of the 13 benchmarks, \tool infers every loop bound statically and
therefore produces results identical to those with trip count annotations.
The remaining five benchmarks see execution time increase by
2.15$\times$ (geomean) (5.11$\times$ max, on Dijkstra)
under continuous power, with region boundary hits increasing by
51$\times$ (geomean).
The most significant source of overhead is that without inner-loop
trip count bounds, outer loops require a region boundary at every
iteration.

\begin{table}[t]
\centering
\caption{MILP compilation statistics of \tool with the \qty{10}{\micro\farad} capacitor.
Blocks and edges refer to those in the summarized CFG.
All solves reached an optimal solution.}
\label{tab:milp-compile-stats}
\begin{adjustbox}{max width=\linewidth}
\begin{tabular}{lrrrrr}
\toprule
\textbf{Benchmark} & \textbf{Blocks }& \textbf{Edges }& \textbf{Vars.} & \textbf{Constr.} & \textbf{Sol. Time (s)} \\
\midrule
ar             & \num{8} & \num{9} & \num{533} & \num{1986} & \num{0.025} \\
aes            & \num{9} & \num{10} & \num{2475} & \num{10033} & \num{0.027} \\
bitcount       & \num{36} & \num{51} & \num{5750} & \num{25057} & \num{0.086} \\
chacha20       & \num{4} & \num{4} & \num{495} & \num{1907} & \num{0.004} \\
crc            & \num{6} & \num{6} & \num{117} & \num{414} & \num{0.003} \\
cuckoo\_filter & \num{7} & \num{8} & \num{432} & \num{1710} & \num{0.005} \\
dijkstra       & \num{9} & \num{12} & \num{975} & \num{3937} & \num{0.021} \\
poly1305       & \num{4} & \num{4} & \num{559} & \num{2135} & \num{0.004} \\
qsort          & \num{12} & \num{15} & \num{324} & \num{1264} & \num{0.006} \\
rsa            & \num{26} & \num{33} & \num{4980} & \num{20658} & \num{0.105} \\
sensor\_fusion & \num{4} & \num{4} & \num{431} & \num{1696} & \num{0.004} \\
sha256         & \num{14} & \num{18} & \num{4083} & \num{16818} & \num{0.079} \\
stringsearch   & \num{18} & \num{24} & \num{955} & \num{3869} & \num{0.225} \\
\bottomrule
\end{tabular}
\end{adjustbox}
\end{table}

\subsection{MILP Solving Time}
\label{sec:milp-solving-time}

\subsubsection{Compilation Statistics}
\Cref{tab:milp-compile-stats} shows the MILP problem sizes and solving
times of \tool with the \qty{10}{\micro\farad} capacitor,
and \cref{tab:milp-compile-stats-appendix} in \cref{sec:appendix-evaluation}
reports all three capacitor sizes.
We set no time limit, and every solve terminated with a guaranteed
optimal solution.
Larger energy budgets allow CFG summarization to merge more blocks,
reducing MILP problem size and solving time.
Every benchmark solves within
\qty{0.23}{\second} (mean \qty{46}{\milli\second}) and
\qty{0.37}{\second} (mean \qty{49}{\milli\second}),
at \qty{10}{\micro\farad} and \qty{50}{\micro\farad}, respectively.
At \qty{5}{\micro\farad}, all but sensor fusion solve within
\qty{0.24}{\second} (mean \qty{59}{\milli\second}).
Sensor fusion is a hard-coded decision tree. After inlining, its MILP problem has \num{24449} variables and takes
\qty{77}{\second} to solve.
However, a capacitor this small is impractical, and with a
larger budget CFG summarization reduces the problem size
(\cref{tab:milp-compile-stats}).
This is faster than
tuning-based approaches~\cite{auto-tune, cleancut,
maeng2018adaptive}, which take hours as measured by Choi et al.~\cite{rockclimb}.

\subsubsection{Speed Ups}
\label{sec:speeding-up-milp}
We also evaluated the effect of using the \lstinline{constant-alloc} mode on \tool's MILP solve time,
using the same benchmarks.
Across the 39 benchmark--capacitor pairs, the
\lstinline{constant-alloc} mode reduced the geomean solve time by
\SI{25.0}{\percent} at \SI{5}{\micro\farad},
\SI{23.8}{\percent} at \SI{10}{\micro\farad}, and
\SI{28.8}{\percent} at \SI{50}{\micro\farad}.
The restriction changed the boundary placement or the memory allocation on 9 pairs.
Measurements show that these nine configurations
ran at most \SI{3.5}{\percent} slower than the per-region solution
(cuckoo filter at \SI{5}{\micro\farad}).
These results suggest that \lstinline{constant-alloc} offers a practical
trade-off, reducing solve time with little impact on solution quality.

\section{Conclusion and Future Work}
This paper presented \tool, a compilation pipeline for
intermittent computing that uses MILP to jointly optimize
region boundaries and memory allocation.
By encoding energy feasibility and checkpointing decisions directly in
the optimization, \tool replaces heuristics with a global formulation
whose feasible solutions provably guarantee energy safety and whose optimal
solution minimizes expected energy consumption.
Our evaluation shows that this approach reduces overheads relative to prior
systems.

\tool's clean formulation also makes it easy to extend.
For example, freshness and temporal consistency could be supported by
constraining $r_b = 0$ on every path between blocks that must execute without a
power failure, which tools such as Ocelot~\cite{ocelot} can derive from user
annotations.
Another direction is to let time-critical regions resume after recharging only
enough energy to complete the region rather than waiting for a full recharge,
which could be achieved by making $\ebuffer$ a variable bounded by the device's
energy capacity.

\section*{Acknowledgements}

This material is based upon work supported by the
\grantsponsor{NSF}{National Science Foundation}{https://doi.org/10.13039/100000001}
under Grant Nos.~\grantnum{NSF}{2525102} and~\grantnum{NSF}{2311983},
and by the
\grantsponsor{ONR}{Office of Naval Research}{https://www.onr.navy.mil/}
under Award No.~\grantnum{ONR}{N00014-2412297}.
Byeongjee Kang was supported in part by a fellowship from the
Ilju Academy and Culture Foundation.
Any opinions, findings, and conclusions or recommendations
in this material are those of the authors and do not necessarily
reflect the views of the funding agencies.

The authors thank the referees for their helpful feedback.
Claude Code (Fable 5.1) and Codex (GPT-5.6 Sol) were used in this work to
help fix grammar, proofread the text for clarity, develop the system
implementation and evaluation pipeline, and generate figures (R scripts)
and diagrams (\LaTeX{} TikZ).

\startappendix
\section{Artifact Appendix}
\label{sec:appendix-artifact}

\subsection{Abstract}

The artifact consists of two components:
(1) an energy modeling framework that measures the energy consumption of an
MSP430FR5994 microcontroller and derives its energy model (\S\ref{sec:parameter-estimation}), and
(2) a compilation framework, implemented as LLVM passes, that compiles a C program
into an MSP430 binary with region boundaries placed using the energy model from (1).
The compilation framework includes \tool and our reimplementations of the
baselines, \rockclimb and \schematic.
All components are open source.
We evaluate on the benchmarks included in the compilation framework repository,
most of which are taken from UlSWaP-Bench.

The two components can be run separately.
Reviewers can run (2) with our pre-measured energy model parameters without running (1).
In that case, correct execution under intermittent power is not guaranteed,
because parameters measured on one board may not match another board.
The full evaluation requires an MSP430 LaunchPad, measurement instruments,
and software licenses (see \S\ref{sec:ae-description}).
Without the hardware, reviewers can still run the compilation phase and obtain
instrumented binaries, provided they have a Gurobi license.

\subsection{Artifact check-list (meta-information)}

{\small
\begin{itemize}
  \item {\bf Program:}
  We use our annotated versions of benchmarks, mostly taken from UlSWaP-Bench.
  It's included in the artifact.
  \item {\bf Compilation:} Mostly LLVM-based pass. Included.
  \item {\bf Binary:} Binaries are not included, and users must build the artifact from source.
  \item {\bf Run-time environment:} macOS or Linux with Docker, uv, Saleae Logic 2, Gurobi, and Otii Automation Toolbox (the last two require licenses).
  \item {\bf Hardware:} MSP430FR5994 LaunchPad, Otii Ace Pro,
  Otii Switchboard, Saleae Logic 8,
  a Schottky diode, a \qty{560}{\ohm} resistor, and a macOS or Linux machine.
  \item {\bf Output:} Expected result included in the artifact.
  \item {\bf Experiments:} Python scripts (run with uv);
  compilation runs in a Docker container.
  \item {\bf How much disk space required (approximately)?:}
  About 21\,GB
  \item {\bf How much time is needed to complete experiments (approximately)?:}
  A few hours
  \item {\bf Publicly available?:} Yes
  \item {\bf Code licenses?:} MIT
  \item {\bf Archived (provide DOI)?:}
  Yes (\href{https://doi.org/10.5281/zenodo.22929039}{10.5281/zenodo.22929039},
  \href{https://doi.org/10.5281/zenodo.22931636}{10.5281/zenodo.22931636})
\end{itemize}
}

\subsection{Description}
\label{sec:ae-description}

\subsubsection{How to access}

Both components are publicly available on GitHub and archived on Zenodo:
\begin{itemize}
  \item Energy modeling framework:\\
        \url{https://github.com/byeongjee/msp430-energy-model}\\
        DOI: \href{https://doi.org/10.5281/zenodo.22929039}{10.5281/zenodo.22929039}
  \item Compilation framework:\\
        \url{https://github.com/byeongjee/bao}\\
        DOI: \href{https://doi.org/10.5281/zenodo.22931636}{10.5281/zenodo.22931636}
\end{itemize}
The Zenodo archives contain the evaluated versions,
and the GitHub repositories contain the latest versions.
Clone both repositories:
\begin{lstlisting}
git clone \
https://github.com/byeongjee/msp430-energy-model
git clone https://github.com/byeongjee/bao
\end{lstlisting}

\subsubsection{Hardware dependencies}

The full evaluation requires the following hardware:
\begin{itemize}
  \item MSP430FR5994 LaunchPad: the target device.
  \item Otii Ace Pro: measures power consumption.
  \item Otii Switchboard: automatically switches the LaunchPad between
        the debugger and the Otii Ace Pro.
  \item Saleae Logic 8: detects GPIO signals to measure execution time.
  \item Miscellaneous parts: a Schottky diode, a \qty{560}{\ohm} resistor, and cables.
\end{itemize}
Programs can be compiled on any x86-64 or ARM64 machine that can build LLVM%
\footnote{\url{https://llvm.org/docs/GettingStarted.html}}
and run Gurobi.%
\footnote{\url{https://support.gurobi.com/hc/en-us/articles/8172407217041-What-hardware-should-I-select-when-running-Gurobi}}
An M2 MacBook Pro is sufficient.

\subsubsection{Software dependencies}

The following software requires separate licenses:
\begin{itemize}
  \item Otii Automation Toolbox: controls the Otii Ace Pro from scripts.
  \item Gurobi: solves the MILP. Since the compiler runs in Docker,
        a Web License Service (WLS) license is required;
        the free academic WLS license suffices.
\end{itemize}
The following software is freely available but must be installed and started manually:
\begin{itemize}
  \item Saleae Logic 2:\footnote{\url{https://www.saleae.com/downloads}}
        our Python scripts control the Saleae Logic 8 through this software.
        Because it has no headless mode, it must be started and set up
        manually before running the scripts.
        See the README files in the repositories for details.
\end{itemize}
All other software (including msp430-gcc, mspdebug, and LLVM) is installed
by the Dockerfile in the compilation framework repository, which requires Docker.

\subsubsection{Data sets}

All data sets are in the compilation framework repository:
\begin{itemize}
  \item Benchmark programs: \lstinline{benchmarks/intermittent}.
  \item Energy-harvesting traces replayed in \S\ref{sec:intermittent-power}:\\
        \lstinline{benchmarks/traces}.
  \item Energy model parameters:\\
        \lstinline{benchmarks/assembly_params.json}.
        This is a sample measured on our board.
        Users are expected to generate parameters for their own board
        with the energy modeling framework.
\end{itemize}

\subsection{Installation}

Build the Docker image of the compilation framework and install the Python
dependencies with uv:
\begin{lstlisting}
cd bao
docker build -t bao .
export GRB_LICENSE_FILE=/path/to/gurobi.lic
export OTII_SERVER_BIN=/path/to/otii_server
uv sync --extra saleae --extra otii
\end{lstlisting}
For the energy modeling framework, run \lstinline{uv sync} in its repository.
Its Docker image is built on first use.
The wiring of the hardware and the setup of Saleae Logic 2 and Otii
are described in the README files of both repositories.

\subsection{Experiment Workflow}

The workflow consists of two steps:
\begin{enumerate}
  \item \emph{Energy modeling} (optional).
        Train the energy model parameters with \lstinline{uv run pem train}.
        The exact commands are given in the README of the energy modeling framework.
  \item \emph{Compilation and measurement}.
        Each experiment in the paper is a single \lstinline{uv run ckpt} command,
        listed in \S\ref{sec:ae-evaluation}.
\end{enumerate}

Readers without the hardware may run the following command
to compile all benchmarks without running them:
\begin{lstlisting}
uv run ckpt bench all --cap 5,10,50 \
  --save-build build
\end{lstlisting}
This still requires a Gurobi license.

\subsection{Evaluation and Expected Results}
\label{sec:ae-evaluation}

Each experiment in the paper can be reproduced with a single command
in the compilation framework repository.
Each command writes its results under \lstinline{results/},
where our raw results used in the paper are already stored.

The following experiments use direct wiring:
\begin{itemize}
  \item Execution time and region boundary hits (\S\ref{sec:continuous-power}, \S\ref{sec:milp-solving-time}):\\
        \lstinline{uv run ckpt bench all --cap 5,10,50}\\
        Results: \lstinline{milp.csv}, \lstinline{rockclimb.csv},
        \lstinline{schematic.csv},\\ \lstinline{schematicO3.csv},
        and \lstinline{uninstrumented.csv}.
  \item \rockclimb with unroll factor 64 on crc (\S\ref{sec:continuous-power}):\\
        \lstinline{uv run ckpt bench rockclimb-unroll64}\\
        Results: \lstinline{rockclimb_crc_unroll64.csv}.
  \item Execution time for the intermittent-power decomposition (\S\ref{sec:intermittent-power}):\\
        \lstinline{uv run ckpt bench decomposition}\\
        Results: \lstinline{intermittent/continuous/}.
  \item Loop chunking overhead (\S\ref{sec:chunking-overhead}):\\
        \lstinline{uv run ckpt bench chunking-overhead}\\
        Results: \lstinline{chunking_overhead/}.
  \item Trip count annotations (\S\ref{sec:trip-count}):\\
        \lstinline{uv run ckpt bench trip-count}\\
        Results: \lstinline{trip_count/}.
  \item Constant memory allocation (\S\ref{sec:speeding-up-milp}):\\
        \lstinline{uv run ckpt bench milp-coarse}\\
        Results: \lstinline{milp_coarse/}.
  \item Inlining (\S\ref{sec:evaluation-setup}, compile only, no board needed):\\
        \lstinline{uv run ckpt analyze inlining}\\
        Results: \lstinline{inlining.csv}.
\end{itemize}
The following experiment uses harvester wiring:
\begin{itemize}
  \item Execution under intermittent power (\S\ref{sec:intermittent-power}):\\
        \lstinline{uv run ckpt intermittent all}\\
        Results: \lstinline{intermittent/}.
\end{itemize}
The energy modeling framework stores the trained parameters in
\lstinline{benchmarks/assembly_params.json}.

\subsection{Notes}

Gurobi is deterministic on the same machine,
but its results may differ across machines.%
\footnote{\url{https://support.gurobi.com/hc/en-us/articles/360031636051-Is-Gurobi-deterministic}}
Therefore, the region-boundary placements of \tool may be slightly different from ours,
though the optimal objective value should remain the same.
Our results were obtained on an M2 MacBook Pro (32\,GB RAM) running macOS 26.6.2.

\bibliographystyle{ACM-Reference-Format}
\bibliography{references}

\AtEndDocument{%
\startsupplement
\section{Cost Semantics for Energy Accounting}
\label{sec:appendix-cost-semantics}

We provide a cost semantics for the energy consumption of program traces under a
given VM/NVM placement decision.

\paragraph{Events and traces.}
Let $\mathcal{A}$ be the set of events
\[
a ::= \mathsf{inst}(\iota)
\mid \mathsf{read}(\mu, v, n)
\mid \mathsf{write}(\mu, v, n),
\]
where $\iota$ ranges over MSP430 instruction classes other than the memory
read and write instructions \lstinline{mov} and \lstinline{memcpy}, which are
instead modeled by $\mathsf{read}$ and $\mathsf{write}$ events.
As described in \cref{sec:energy-profiling-assembly}, each instruction
class $\iota$ is a tuple
\[(\mathit{opcode}, \mathit{addrmode}_1, \dots, \mathit{addrmode}_n),\]
where $n \in \{0, 1, 2\}$ corresponds to nullary, unary, and binary
instructions, respectively.
For a memory access event,
$\mu \in \{\mathrm{VM}, \mathrm{NVM}\}$ indicates the type of memory that the
access targets, where $\mathrm{VM}$ denotes volatile memory and $\mathrm{NVM}$
denotes non-volatile memory;
$v \in \variables$ is the program variable being accessed; and
$n \in \mathbb{N}$ is the number of bytes accessed.
For example, an instruction \lstinline{mov &x, r5} that loads $n$ bytes of a
variable $x$ residing in NVM into a register is modeled by the event
$\mathsf{read}(\mathrm{NVM}, x, n)$, and \lstinline{mov r5, &x} where $x$
resides in VM is modeled by $\mathsf{write}(\mathrm{VM}, x, n)$.
Event traces are generated by the grammar
\[
\tau ::= \epsilon \mid a \cdot \tau,
\]
where $\epsilon$ is the empty trace and $\cdot$ denotes concatenation.
Let $\mathcal{T}$ be the set of traces generated by this grammar.
By convention, we also write $a$ for the singleton trace $a \cdot \epsilon$.
We use $\tau_1 \cdot \tau_2$ for concatenation of traces and omit the dot when
the meaning is clear.

\paragraph{Cost function.}
We define
\[
\operatorname{cost} : \mathcal{T} \to \mathbb{R}_{\ge 0}.
\]
The per-event cost $\operatorname{cost}(a)$ is given as an input parameter for
every event $a \in \mathcal{A}$, including both $\mathsf{inst}(\iota)$ events
and the memory access events $\mathsf{read}(\mu, v, n)$ and
$\mathsf{write}(\mu, v, n)$.
For instruction events, $\operatorname{cost}(\mathsf{inst}(\iota))$ comes from
the instruction-level model of
\cref{sec:energy-profiling-assembly}.
We further assume that the cost function has the following property:
there exist constants
$\deltaEread, \deltaEwrite \in \mathbb{R}_{\ge 0}$ such that for every
variable $v \in \variables$ and every $n \in \mathbb{N}$,
\begin{align*}
  \operatorname{cost}(\mathsf{read}(\mathrm{NVM}, v, n))
  &= \operatorname{cost}(\mathsf{read}(\mathrm{VM}, v, n)) + \deltaEread \cdot n, \\
  \operatorname{cost}(\mathsf{write}(\mathrm{NVM}, v, n))
  &= \operatorname{cost}(\mathsf{write}(\mathrm{VM}, v, n)) + \deltaEwrite \cdot n.
\end{align*}
Intuitively, the extra energy incurred by placing a variable in NVM rather
than VM is proportional to the number of bytes being read or written.
The constants $\deltaEread$ and $\deltaEwrite$ are also given as input
parameters.

We define $\operatorname{cost}$ recursively by
\[
\operatorname{cost}(\epsilon) \triangleq 0.
\]
\[
\operatorname{cost}(a \cdot \tau)
\triangleq
\operatorname{cost}(a) + \operatorname{cost}(\tau).
\]
Lemma~\ref{lem:cost-concat} and Lemma~\ref{lem:cost-finite-concat} record the
resulting additivity properties.

\begin{lemma}
\label{lem:cost-concat}
For any traces $\tau_1$ and $\tau_2$,
\[
\operatorname{cost}(\tau_1 \cdot \tau_2)
=
\operatorname{cost}(\tau_1) + \operatorname{cost}(\tau_2).
\]
\end{lemma}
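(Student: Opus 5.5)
We argue by structural induction on the first trace $\tau_1$, following the grammar $\tau ::= \epsilon \mid a \cdot \tau$ and using only the two defining equations of $\operatorname{cost}$. Concatenation $\tau_1 \cdot \tau_2$ is read as the usual append on traces, defined by $\epsilon \cdot \tau_2 = \tau_2$ and $(a \cdot \tau) \cdot \tau_2 = a \cdot (\tau \cdot \tau_2)$. We state these two equations explicitly, because the paper overloads $\cdot$ for both prefixing an event and appending traces. They are exactly the facts the induction needs.

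\emph{Base case.} Take $\tau_1 = \epsilon$. Then $\operatorname{cost}(\epsilon \cdot \tau_2) = \operatorname{cost}(\tau_2)$. Since $\operatorname{cost}(\epsilon) = 0$, this equals $\operatorname{cost}(\epsilon) + \operatorname{cost}(\tau_2)$.

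\emph{Inductive step.} Take $\tau_1 = a \cdot \tau$ and assume the claim holds for $\tau$ and every $\tau_2$. We compute
\begin{align*}
\operatorname{cost}((a\cdot\tau)\cdot\tau_2)
&= \operatorname{cost}(a \cdot (\tau\cdot\tau_2))
 = \operatorname{cost}(a) + \operatorname{cost}(\tau\cdot\tau_2) \\
&= \operatorname{cost}(a) + \operatorname{cost}(\tau) + \operatorname{cost}(\tau_2)
 = \operatorname{cost}(a\cdot\tau) + \operatorname{cost}(\tau_2).
\end{align*}
The first equality is the append equation. The second and fourth are the recursive clause of $\operatorname{cost}$. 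The third is the induction hypothesis, followed by associativity of addition in $\mathbb{R}_{\ge 0}$.

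There is no real obstacle in this argument. The one point needing care is to fix the meaning of trace concatenation precisely, as the recursive append above, so that the first step of the inductive case is justified. The per-event cost assumptions, including the $\deltaEread,\deltaEwrite$ property, play no role here.
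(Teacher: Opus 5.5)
Your proof is correct and is the argument the paper leaves implicit: the paper states this lemma without proof, as a direct consequence of the recursive definition of $\operatorname{cost}$, and structural induction on $\tau_1$ is the standard way to justify it. Spelling out the append equations $\epsilon \cdot \tau_2 = \tau_2$ and $(a \cdot \tau) \cdot \tau_2 = a \cdot (\tau \cdot \tau_2)$ is a worthwhile precision, since the paper never formally defines concatenation.
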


\begin{lemma}
\label{lem:cost-finite-concat}
For any traces $\tau_1,\ldots,\tau_k$,
\[
\operatorname{cost}(\tau_1 \cdot \tau_2 \cdots \tau_k)
=
\sum_{i=1}^{k} \operatorname{cost}(\tau_i).
\]
\end{lemma}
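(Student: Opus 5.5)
We prove the statement by induction on $k$, taking Lemma~\ref{lem:cost-concat} as the only non-trivial ingredient. Concatenation of traces is associative: both $(\tau_1\cdot\tau_2)\cdot\tau_3$ and $\tau_1\cdot(\tau_2\cdot\tau_3)$ denote the same sequence of events. So the expression $\tau_1 \cdot \tau_2 \cdots \tau_k$ is unambiguous, and we may read it as $(\tau_1 \cdots \tau_{k-1}) \cdot \tau_k$. If one wants to be fully formal, associativity follows by a routine structural induction on the first argument using the grammar $\tau ::= \epsilon \mid a\cdot\tau$. For $k = 0$ (if allowed), the empty concatenation is $\epsilon$, and $\operatorname{cost}(\epsilon) = 0$ equals the empty sum by definition.

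\textbf{Base case $k=1$.} Both sides equal $\operatorname{cost}(\tau_1)$, so there is nothing to prove. (For $k=2$ the claim is exactly Lemma~\ref{lem:cost-concat}.)

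\textbf{Inductive step.} Assume the claim holds for $k-1 \ge 1$ traces. Set $\tau' = \tau_1 \cdots \tau_{k-1}$. Lemma~\ref{lem:cost-concat} gives
\[
\operatorname{cost}(\tau'\cdot\tau_k) = \operatorname{cost}(\tau') + \operatorname{cost}(\tau_k).
\]
The induction hypothesis rewrites $\operatorname{cost}(\tau')$ as $\sum_{i=1}^{k-1}\operatorname{cost}(\tau_i)$. Adding $\operatorname{cost}(\tau_k)$ yields $\sum_{i=1}^{k}\operatorname{cost}(\tau_i)$, which completes the induction.

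The only point requiring care is the implicit reliance on associativity of $\cdot$ to regroup $\tau_1\cdots\tau_k$ as $\tau'\cdot\tau_k$. Lemma~\ref{lem:cost-concat} itself follows by structural induction on $\tau_1$ using the two defining equations of $\operatorname{cost}$.
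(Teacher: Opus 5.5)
Your proof is correct and is the argument the paper has in mind. The paper gives no explicit proof: it presents this lemma, together with Lemma~\ref{lem:cost-concat}, as a direct consequence of the recursive definition of $\operatorname{cost}$, and your induction on $k$ via Lemma~\ref{lem:cost-concat}, regrouping by associativity of concatenation, is exactly the routine argument it leaves out.
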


For a finite sequence of traces $(\tau_i)_{i \in I}$, we write
\[
\prod_{i \in I} \tau_i
\]
for their concatenation in sequence order.
If $I = \varnothing$, this product is defined to be $\epsilon$.

\begin{definition}[Instrumentation templates]
Let $\tau_{\mathrm{pro}}$ be the fixed trace emitted by the region-start
runtime stub.
This trace captures only the region-start bookkeeping that the runtime
performs regardless of which variables are live, and is independent of the
MILP solution.
The per-variable restores that reload variables from NVM into VM after
recovering from a power failure are accounted for separately below, and are
not included in $\tau_{\mathrm{pro}}$.
Let $\tau_{\mathrm{epi}}$ be the fixed trace emitted by the region-end
runtime stub, which includes saving the program counter and stack pointer
to NVM so that execution can resume after a power failure.
As with $\tau_{\mathrm{pro}}$, the per-variable saves that commit live
variables back to NVM are accounted for separately below and are not
included in $\tau_{\mathrm{epi}}$.
For each variable $v \in \variables$,
let $\tau^{r}_{v}$ be the trace emitted to restore $v$ from NVM to VM, and let
$\tau^{s}_{v}$ be the trace emitted to save $v$ from VM to NVM.
We define
\[
\Epro \triangleq \operatorname{cost}(\tau_{\mathrm{pro}}),
\qquad
\Eepi \triangleq \operatorname{cost}(\tau_{\mathrm{epi}}),
\]
\[
\Erst_v \triangleq \operatorname{cost}(\tau^{r}_{v}),
\qquad
\Esv_v \triangleq \operatorname{cost}(\tau^{s}_{v}).
\]
$\Epro$ and $\Eepi$ are the fixed energy costs of the region-start and
region-end runtime stubs, while $\Erst_v$ and $\Esv_v$ are the per-variable
energy costs of restoring and saving $v$.
These four quantities are the energy parameters supplied as inputs to the
MILP solver in \cref{sec:appendix-algorithm}.
\end{definition}

\begin{definition}[All-VM block trace]
For each basic block $b \in B$,
let $\tau^{\mathrm{VM}}_{b}$ be the event trace of executing $b$ under the
reference placement in which every variable access uses VM.
We define
\[
\Ebase_b \triangleq \operatorname{cost}(\tau^{\mathrm{VM}}_{b}).
\]
\end{definition}

\begin{definition}[Bytes read and written]
For any trace $\tau$ and variable $v$,
let
\begin{align*}
\operatorname{bytesRead}_{\tau}(v)
&\triangleq
\sum_{\mathsf{read}(\mu, v, n) \in \tau} n,
\\
\operatorname{bytesWritten}_{\tau}(v)
&\triangleq
\sum_{\mathsf{write}(\mu, v, n) \in \tau} n.
\end{align*}
\end{definition}

We write
$\operatorname{vmOnly}_{\tau}(v)$ to mean that, for every $n \in \mathbb{N}$,
neither $\mathsf{read}(\mathrm{NVM}, v, n)$ nor
$\mathsf{write}(\mathrm{NVM}, v, n)$ occurs in $\tau$.

\begin{definition}[Relocating one variable]
For each variable $v \in \variables$,
define the relocation operator $\phi_v$ on events by
\begin{align*}
  \phi_v(\mathsf{read}(\mathrm{VM}, u, n))
  &=
  \begin{cases}
    \mathsf{read}(\mathrm{NVM}, u, n) & \text{if } u = v, \\
    \mathsf{read}(\mathrm{VM}, u, n) & \text{if } u \ne v,
  \end{cases} \\
  \phi_v(\mathsf{write}(\mathrm{VM}, u, n))
  &=
  \begin{cases}
    \mathsf{write}(\mathrm{NVM}, u, n) & \text{if } u = v, \\
    \mathsf{write}(\mathrm{VM}, u, n) & \text{if } u \ne v,
  \end{cases}
\end{align*}
and let $\phi_v$ act as the identity on all other events.
We extend $\phi_v$ to traces by structural recursion:
\[
\phi_v(\epsilon) \triangleq \epsilon.
\]
\[
\phi_v(a \cdot \tau)
\triangleq
\phi_v(a) \cdot \phi_v(\tau).
\]
\end{definition}
Intuitively, $\phi_v$ changes exactly the VM accesses to $v$ into NVM
accesses, leaving every other event unchanged.

\begin{lemma}
\label{lem:single-relocation-cost}
Let $\tau$ be a trace and let $v \in \variables$.
If $\operatorname{vmOnly}_{\tau}(v)$, then
\begin{align*}
\operatorname{cost}(\phi_v(\tau))
&=
\operatorname{cost}(\tau) \\
&\quad
+ \deltaEread \cdot \operatorname{bytesRead}_{\tau}(v) \\
&\quad
+ \deltaEwrite \cdot \operatorname{bytesWritten}_{\tau}(v).
\end{align*}
\end{lemma}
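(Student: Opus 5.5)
We prove Lemma~\ref{lem:single-relocation-cost} by structural induction on $\tau$, following the recursive definitions of $\operatorname{cost}$, $\phi_v$, $\operatorname{bytesRead}_{\tau}(v)$ and $\operatorname{bytesWritten}_{\tau}(v)$. First we note that $\operatorname{vmOnly}$ is inherited by suffixes: if $\operatorname{vmOnly}_{a\cdot\tau}(v)$, then $\operatorname{vmOnly}_{\tau}(v)$ holds, and $a$ is not an NVM access to $v$. We also read the two byte-count sums as multiset sums over event occurrences. This gives $\operatorname{bytesRead}_{a\cdot\tau}(v)=\operatorname{bytesRead}_{a}(v)+\operatorname{bytesRead}_{\tau}(v)$, and the same for writes. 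Here $\operatorname{bytesRead}_{a}(v)$ is $n$ if $a=\mathsf{read}(\mu,v,n)$ and $0$ otherwise.

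\emph{Base case.} For $\tau=\epsilon$ we have $\phi_v(\epsilon)=\epsilon$. Both costs are $0$ and both byte counts are $0$, so the identity holds.

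\emph{Inductive step.} For $\tau=a\cdot\tau'$, we have $\phi_v(a\cdot\tau')=\phi_v(a)\cdot\phi_v(\tau')$. By the definition of $\operatorname{cost}$, $\operatorname{cost}(\phi_v(a\cdot\tau'))=\operatorname{cost}(\phi_v(a))+\operatorname{cost}(\phi_v(\tau'))$. The induction hypothesis applies to $\tau'$ because of the inheritance noted above. It therefore suffices to prove the single-event identity
\[
\operatorname{cost}(\phi_v(a))=\operatorname{cost}(a)+\deltaEread\cdot\operatorname{bytesRead}_{a}(v)+\deltaEwrite\cdot\operatorname{bytesWritten}_{a}(v),
\]
and then add it to the hypothesis using the additivity of the byte counts.

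\emph{Single-event identity.} We split into cases on $a$.
\begin{enumerate}
\item $a=\mathsf{read}(\mathrm{VM},v,n)$. Then $\phi_v(a)=\mathsf{read}(\mathrm{NVM},v,n)$. The assumed cost property gives exactly the extra term $\deltaEread\cdot n$, and $\operatorname{bytesRead}_{a}(v)=n$.
\item $a=\mathsf{write}(\mathrm{VM},v,n)$. This is symmetric to the previous case, with $\deltaEwrite$ in place of $\deltaEread$.
\item $a$ is an $\mathsf{inst}$ event, or an access to some $u\ne v$. Then $\phi_v(a)=a$ and both byte counts of $a$ for $v$ are $0$, so the identity is trivial.
\item $a=\mathsf{read}(\mathrm{NVM},v,n)$ or $a=\mathsf{write}(\mathrm{NVM},v,n)$. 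This is excluded by $\operatorname{vmOnly}$.
\end{enumerate}

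The last case is the one real subtlety and the reason the hypothesis is needed. Since $\phi_v$ fixes NVM accesses, such an event would contribute $n$ bytes to the count without adding any extra cost, and the formula would fail.

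The remaining care point is bookkeeping: the definition of the byte counts uses set-like notation ``$\in\tau$''. We interpret it as a sum over occurrences, so repeated identical events each count, which makes the counts additive under concatenation.
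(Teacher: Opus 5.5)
Your proposal is correct and takes the same route as the paper. It uses structural induction on $\tau$, inherits $\operatorname{vmOnly}$ to the tail, and case-splits on the head event (VM read of $v$, VM write of $v$, anything else), with NVM accesses to $v$ excluded by the hypothesis; your reading of the byte counts as sums over event occurrences is an assumption the paper makes implicitly.
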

\begin{proof}
By induction on $\tau$.
If $\tau = \epsilon$, then $\phi_v(\tau)=\epsilon$ and
\begin{align*}
\operatorname{bytesRead}_{\epsilon}(v) &= 0, \\
\operatorname{bytesWritten}_{\epsilon}(v) &= 0,
\end{align*}
so the claim is immediate.
Now let $\tau = a \cdot \tau'$ and assume
$\operatorname{vmOnly}_{a \cdot \tau'}(v)$.
Then $\operatorname{vmOnly}_{\tau'}(v)$, so by the induction hypothesis,
\begin{align*}
\operatorname{cost}(\phi_v(\tau'))
&=
\operatorname{cost}(\tau') \\
&\quad
+ \deltaEread \cdot \operatorname{bytesRead}_{\tau'}(v) \\
&\quad
+ \deltaEwrite \cdot \operatorname{bytesWritten}_{\tau'}(v).
\end{align*}
We distinguish three cases.
If $a$ does not access $v$, then $\phi_v(a)=a$ and
\begin{align*}
\operatorname{bytesRead}_{a \cdot \tau'}(v)
&=
\operatorname{bytesRead}_{\tau'}(v), \\
\operatorname{bytesWritten}_{a \cdot \tau'}(v)
&=
\operatorname{bytesWritten}_{\tau'}(v).
\end{align*}
Hence
\begin{align*}
  \operatorname{cost}(\phi_v(a \cdot \tau'))
  &= \operatorname{cost}(\phi_v(a)) + \operatorname{cost}(\phi_v(\tau')) \\
  &= \operatorname{cost}(a \cdot \tau') \\
  &\quad + \deltaEread \cdot \operatorname{bytesRead}_{a \cdot \tau'}(v) \\
  &\quad + \deltaEwrite \cdot \operatorname{bytesWritten}_{a \cdot \tau'}(v).
\end{align*}
If $a = \mathsf{read}(\mathrm{VM}, v, n)$, then
$\phi_v(a)=\mathsf{read}(\mathrm{NVM}, v, n)$ and
\[
\operatorname{cost}(\phi_v(a))=\operatorname{cost}(a)+\deltaEread \cdot n.
\]
Moreover,
\begin{align*}
\operatorname{bytesRead}_{a \cdot \tau'}(v)
&=
n+\operatorname{bytesRead}_{\tau'}(v), \\
\operatorname{bytesWritten}_{a \cdot \tau'}(v)
&=
\operatorname{bytesWritten}_{\tau'}(v).
\end{align*}
Therefore
\begin{align*}
  \operatorname{cost}(\phi_v(a \cdot \tau'))
  &= \operatorname{cost}(\phi_v(a)) + \operatorname{cost}(\phi_v(\tau')) \\
  &= \operatorname{cost}(a \cdot \tau') \\
  &\quad + \deltaEread \cdot \operatorname{bytesRead}_{a \cdot \tau'}(v) \\
  &\quad + \deltaEwrite \cdot \operatorname{bytesWritten}_{a \cdot \tau'}(v).
\end{align*}
If $a = \mathsf{write}(\mathrm{VM}, v, n)$, the argument is analogous:
$\phi_v(a)=\mathsf{write}(\mathrm{NVM}, v, n)$,
\[
\operatorname{cost}(\phi_v(a))=\operatorname{cost}(a)+\deltaEwrite \cdot n,
\]
and
\begin{align*}
\operatorname{bytesRead}_{a \cdot \tau'}(v)
&=
\operatorname{bytesRead}_{\tau'}(v), \\
\operatorname{bytesWritten}_{a \cdot \tau'}(v)
&=
n+\operatorname{bytesWritten}_{\tau'}(v).
\end{align*}
Substituting these equalities yields the claim.
These are the only possibilities because $\operatorname{vmOnly}_{a \cdot \tau'}(v)$
rules out accesses of the form $\mathsf{read}(\mathrm{NVM}, v, n)$ and
$\mathsf{write}(\mathrm{NVM}, v, n)$.
\end{proof}

\begin{definition}[Relocating a set of variables]
Fix any total order $\prec$ on $\variables$ once and for all.
For a finite set $S = \{v_1,\ldots,v_m\} \subseteq \variables$ with
$v_1 \prec \cdots \prec v_m$,
define
\[
\phi_S \triangleq \phi_{v_m} \circ \cdots \circ \phi_{v_1}.
\]
Thus, $\phi_S$ is obtained by relocating one variable at a time.
For a finite set $S \subseteq \variables$, write
\[
\operatorname{vmOnly}_{\tau}(S)
\triangleq
\forall v \in S.\ \operatorname{vmOnly}_{\tau}(v).
\]
\end{definition}

\begin{lemma}
\label{lem:set-relocation-cost}
Let $\tau$ be a trace and let $S \subseteq \variables$ be finite.
If $\operatorname{vmOnly}_{\tau}(S)$, then
\begin{align*}
\operatorname{cost}(\phi_S(\tau))
&=
\operatorname{cost}(\tau) \\
&\quad
+ \sum_{v \in S} \deltaEread \cdot \operatorname{bytesRead}_{\tau}(v) \\
&\quad
+ \sum_{v \in S} \deltaEwrite \cdot \operatorname{bytesWritten}_{\tau}(v).
\end{align*}
\end{lemma}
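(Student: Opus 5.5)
The plan is to argue by induction on $|S|$, peeling off the largest variable in the order $\prec$ and invoking Lemma~\ref{lem:single-relocation-cost} once per step. For $S = \varnothing$, $\phi_S$ is the identity (the empty composition) and both sums vanish, so there is nothing to prove. For the inductive step, write $S = S' \cup \{v_m\}$, where $v_m$ is the $\prec$-maximal element of $S$. By definition, $\phi_S = \phi_{v_m} \circ \phi_{S'}$. The induction hypothesis applies to $\tau$ and $S'$, because $\operatorname{vmOnly}_{\tau}(S')$ follows from $\operatorname{vmOnly}_{\tau}(S)$.

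To apply Lemma~\ref{lem:single-relocation-cost} to the trace $\tau' \triangleq \phi_{S'}(\tau)$ with variable $v_m$, I need three auxiliary facts about how $\phi_u$ acts on traces. I would prove each by structural induction on traces, using the event-level definition of $\phi_u$:
\begin{enumerate*}[label=(\roman*)]
\item if $u \ne w$, then $\phi_u$ leaves every access to $w$ unchanged, so $\operatorname{vmOnly}_{\phi_u(\tau)}(w) \iff \operatorname{vmOnly}_{\tau}(w)$;
\item $\phi_u$ preserves the variable and byte count of every read and write event, so $\operatorname{bytesRead}_{\phi_u(\tau)}(w) = \operatorname{bytesRead}_{\tau}(w)$ for every $w$, and likewise for $\operatorname{bytesWritten}$;
\item these two facts extend to compositions $\phi_{S'}$ by a trivial induction over the composition.
\end{enumerate*}
Since $v_m \notin S'$, fact (i) gives $\operatorname{vmOnly}_{\tau'}(v_m)$, and fact (ii) gives $\operatorname{bytesRead}_{\tau'}(v_m) = \operatorname{bytesRead}_{\tau}(v_m)$, and similarly for writes.

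Chaining these results gives
\begin{align*}
\operatorname{cost}(\phi_S(\tau))
&= \operatorname{cost}(\tau') + \deltaEread\operatorname{bytesRead}_{\tau}(v_m) + \deltaEwrite\operatorname{bytesWritten}_{\tau}(v_m),
\end{align*}
where the equality uses Lemma~\ref{lem:single-relocation-cost} together with the byte-count invariance. Substituting the induction hypothesis for $\operatorname{cost}(\tau')$ then merges the $v_m$ terms into the sums over $S'$, yielding the sums over $S$.

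The main obstacle, minor as it is, lies in fact (i). The side condition $\operatorname{vmOnly}$ must survive the earlier relocations, and this holds only because the elements of $S$ are distinct, so that $\phi_{S'}$ never converts an access to $v_m$ into an NVM access. Both (i) and (ii) are routine case analyses on the three event forms, mirroring the case split in the proof of Lemma~\ref{lem:single-relocation-cost}.
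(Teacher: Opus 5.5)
Your proof is correct and matches the paper's argument. The paper unrolls $\phi_S$ into the chain $\tau_i = \phi_{v_i}(\tau_{i-1})$, observes that earlier relocations preserve $\operatorname{vmOnly}$ and the byte counts of $v_i$, applies Lemma~\ref{lem:single-relocation-cost} at each step, and sums; your induction on $|S|$ is the same telescoping argument, and your facts (i)--(ii) spell out by structural induction the invariance the paper asserts in one sentence.
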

\begin{proof}
Let $\tau_0 \triangleq \tau$ and, for $i=1,\ldots,m$, let
\[
\tau_i \triangleq \phi_{v_i}(\tau_{i-1}).
\]
Then $\tau_m = \phi_S(\tau)$.
Fix $i \in \{1,\ldots,m\}$.
Since $\operatorname{vmOnly}_{\tau}(S)$ holds and each operator
$\phi_{v_j}$ with $j<i$ changes only accesses to $v_j$, we have
\begin{align*}
\operatorname{vmOnly}_{\tau_{i-1}}(v_i), \\
\operatorname{bytesRead}_{\tau_{i-1}}(v_i)
&=
\operatorname{bytesRead}_{\tau}(v_i), \\
\operatorname{bytesWritten}_{\tau_{i-1}}(v_i)
&=
\operatorname{bytesWritten}_{\tau}(v_i).
\end{align*}
Applying Lemma~\ref{lem:single-relocation-cost} to $(\tau_{i-1}, v_i)$ yields
\begin{align*}
\operatorname{cost}(\tau_i)
\;=\;
\operatorname{cost}(\tau_{i-1}) \\
&\quad
+ \deltaEread \cdot \operatorname{bytesRead}_{\tau}(v_i) \\
&\quad
+ \deltaEwrite \cdot \operatorname{bytesWritten}_{\tau}(v_i).
\end{align*}
Summing these equalities over \(i=1,\ldots,m\) gives the result.
\end{proof}

\begin{definition}[MILP block cost]
We define the refined per-variable NVM penalty by
\begin{align*}
\deltaE_{b,v}
&\triangleq
\deltaEread \cdot \operatorname{bytesRead}_{\tau^{\mathrm{VM}}_{b}}(v) \\
&\quad
+ \deltaEwrite \cdot \operatorname{bytesWritten}_{\tau^{\mathrm{VM}}_{b}}(v).
\end{align*}
This is the exact extra model cost incurred by relocating all accesses to $v$
in block $b$ from VM to NVM.
Under the simplifying assumption that every access to $v$ reads or writes the
entire object,
this reduces to the shorthand used in \cref{sec:appendix-algorithm}:
\begin{align*}
\deltaE_{b,v}
&=
\deltaEread \cdot \size(v) \cdot \numReads(b,v) \\
&\quad
+ \deltaEwrite \cdot \size(v) \cdot \numWrites(b,v).
\end{align*}

Given a feasible MILP solution $\sigma$,
let
\[
S_b(\sigma) \triangleq \{v \in \variables \mid m_{b,v}(\sigma)=0\}
\]
be the set of variables placed in NVM while executing $b$.
Define the execution trace of block $b$ under $\sigma$ by
\[
\tau^{\mathrm{blk}}_{b}(\sigma)
\triangleq
\phi_{S_b(\sigma)}(\tau^{\mathrm{VM}}_{b}).
\]
\end{definition}

\begin{lemma}
\label{lem:block-cost-relocation}
For every feasible MILP solution $\sigma$ and every block $b \in B$,
\begin{align*}
\operatorname{cost}(\tau^{\mathrm{blk}}_{b}(\sigma))
&=
\Ebase_b + \sum_{v \in \variables} \deltaE_{b,v}(1-m_{b,v}(\sigma)) \\
&=
\eblk_b(\sigma).
\end{align*}
\end{lemma}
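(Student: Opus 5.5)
The plan is to reduce the claim directly to Lemma~\ref{lem:set-relocation-cost}, applied to the all-VM reference trace $\tau^{\mathrm{VM}}_{b}$ with the relocation set $S = S_b(\sigma)$. First, $S_b(\sigma)$ is finite, because $\variables$ is the finite set of program variables of the CFG. Second, the hypothesis $\operatorname{vmOnly}_{\tau^{\mathrm{VM}}_{b}}(S_b(\sigma))$ holds by definition: $\tau^{\mathrm{VM}}_{b}$ is the trace of $b$ under the placement in which every access uses VM, so no event $\mathsf{read}(\mathrm{NVM}, v, n)$ or $\mathsf{write}(\mathrm{NVM}, v, n)$ occurs in it for any $v$. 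In particular this holds for every $v \in S_b(\sigma)$.

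Lemma~\ref{lem:set-relocation-cost} together with the definition $\tau^{\mathrm{blk}}_{b}(\sigma) = \phi_{S_b(\sigma)}(\tau^{\mathrm{VM}}_{b})$ then gives
\[
\operatorname{cost}(\tau^{\mathrm{blk}}_{b}(\sigma))
= \operatorname{cost}(\tau^{\mathrm{VM}}_{b})
+ \sum_{v \in S_b(\sigma)} \bigl(\deltaEread \operatorname{bytesRead}_{\tau^{\mathrm{VM}}_{b}}(v) + \deltaEwrite \operatorname{bytesWritten}_{\tau^{\mathrm{VM}}_{b}}(v)\bigr).
\]
The first term is $\Ebase_b$ by the definition of the all-VM block trace. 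Each summand is exactly $\deltaE_{b,v}$ by the definition of the refined NVM penalty. It remains to turn the sum over $S_b(\sigma)$ into a sum over all of $\variables$ weighted by $(1-m_{b,v}(\sigma))$. Since $m_{b,v}(\sigma) \in \{0,1\}$, the weight $1-m_{b,v}(\sigma)$ is the indicator of $v \in S_b(\sigma)$. Hence
\[
\sum_{v \in S_b(\sigma)} \deltaE_{b,v} = \sum_{v \in \variables} \deltaE_{b,v}\,(1-m_{b,v}(\sigma)).
\]
This proves the first equality. The second equality is just the definition of the derived expression $\eblk_b$ evaluated at $\sigma$.

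The only delicate point is reconciling index sets. \Cref{tab:milp-symbols} sums over candidate variables, and fixed variables always reside in VM. I would therefore note that for $v \in \ineligibles$ we have $m_{b,v}(\sigma)=1$ (or equivalently that such $v$ never lie in $S_b(\sigma)$), so these terms vanish in both sums. Everything else is bookkeeping: integrality of $m_{b,v}$ and the vmOnly hypothesis are the two facts that need to be stated explicitly.
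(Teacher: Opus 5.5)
Your proposal is correct and follows essentially the same route as the paper: apply Lemma~\ref{lem:set-relocation-cost} to $\tau^{\mathrm{VM}}_{b}$ with $S = S_b(\sigma)$, then use the definitions of $\Ebase_b$, $\deltaE_{b,v}$, and $S_b(\sigma)$ to rewrite the sum as $\sum_{v \in \variables} \deltaE_{b,v}(1-m_{b,v}(\sigma)) = \eblk_b(\sigma)$. The paper leaves implicit the points you spell out: finiteness of $S_b(\sigma)$, integrality of $m_{b,v}$, and fixed variables having $m_{b,v}=1$.
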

\begin{proof}
By definition of $\tau^{\mathrm{VM}}_{b}$, every variable access in
$\tau^{\mathrm{VM}}_{b}$ uses VM.
Hence Lemma~\ref{lem:set-relocation-cost} applies, giving
\begin{align*}
\operatorname{cost}(\tau^{\mathrm{blk}}_{b}(\sigma))
&=
\operatorname{cost}(\tau^{\mathrm{VM}}_{b}) + \sum_{v \in S_b(\sigma)} \deltaE_{b,v}.
\end{align*}
By the definitions of $\Ebase_b$, $\deltaE_{b,v}$, and $S_b(\sigma)$, this
becomes
\[
\Ebase_b
+
\sum_{v \in \variables} \deltaE_{b,v}(1-m_{b,v}(\sigma))
=
\eblk_b(\sigma).
\]
\end{proof}

\begin{definition}[Region-boundary traces]
For a block $b$,
define the region-start trace
\[
\tau^{+}_{b}(\sigma)
\triangleq
\begin{cases}
\tau_{\mathrm{pro}}
\cdot
\displaystyle\prod_{\substack{v \in \variables \\ \hat{r}_{b,v}(\sigma)=1}}
\tau^{r}_{v}
& \text{if } r_b(\sigma)=1,\\[0.8em]
\epsilon
& \text{if } r_b(\sigma)=0,
\end{cases}
\]
and the region-end trace
\[
\tau^{-}_{b}(\sigma)
\triangleq
\begin{cases}
\displaystyle\prod_{\substack{v \in \variables \\ s_{b,v}(\sigma)=1}}
\tau^{s}_{v}
\cdot
\tau_{\mathrm{epi}}
& \text{if } r_b(\sigma)=1,\\[0.8em]
\epsilon
& \text{if } r_b(\sigma)=0.
\end{cases}
\]
\end{definition}
This exactly matches with our instrumentation in
\cref{sec:instrumentation}.

\begin{lemma}
\label{lem:boundary-template-cost}
For every feasible MILP solution $\sigma$ and every block $b \in B$,
\begin{align*}
\operatorname{cost}(\tau^{+}_{b}(\sigma)) &= \estart_b(\sigma), \\
\operatorname{cost}(\tau^{-}_{b}(\sigma)) &= \eend_b(\sigma).
\end{align*}
where $\estart_b(\sigma)$ and $\eend_b(\sigma)$ are defined in
\cref{sec:appendix-algorithm}.
\end{lemma}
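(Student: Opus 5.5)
The plan is a case split on $r_b(\sigma)$, followed by a direct computation using the additivity of $\operatorname{cost}$ (Lemma~\ref{lem:cost-concat} and Lemma~\ref{lem:cost-finite-concat}). The two derived expressions are $\estart_b = \Epro\, r_b + \sum_v \Erst_v\, \hat{r}_{b,v}$ and $\eend_b = \Eepi\, r_b + \sum_v \Esv_v\, s_{b,v}$. Both are linear in the decision variables, so each should match the cost of a concatenation of fixed templates term by term.

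\textbf{Case $r_b(\sigma)=1$.}
\begin{itemize}
\item By definition, $\tau^{+}_{b}(\sigma)$ is $\tau_{\mathrm{pro}}$ followed by the finite product of $\tau^{r}_{v}$ over those $v$ with $\hat{r}_{b,v}(\sigma)=1$.
\item Lemma~\ref{lem:cost-finite-concat} gives
\[
\operatorname{cost}(\tau^{+}_{b}(\sigma)) = \operatorname{cost}(\tau_{\mathrm{pro}}) + \sum_{\hat{r}_{b,v}(\sigma)=1}\operatorname{cost}(\tau^{r}_{v}).
\]
\item The instrumentation-template definitions turn this into $\Epro + \sum_{v} \Erst_v\, \hat{r}_{b,v}(\sigma)$. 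Here we use that $\hat{r}_{b,v}\in\{0,1\}$, so summing over the indicator set equals the weighted sum. Since $r_b(\sigma)=1$, this is exactly $\estart_b(\sigma)$.
\item The region-end trace is handled symmetrically: the saves $\tau^{s}_{v}$ with $s_{b,v}(\sigma)=1$ come first, then $\tau_{\mathrm{epi}}$. Its cost is $\sum_v \Esv_v\, s_{b,v}(\sigma) + \Eepi = \eend_b(\sigma)$. The order of concatenation does not matter because the sum is commutative.
\end{itemize}

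\textbf{Case $r_b(\sigma)=0$.} Both traces are $\epsilon$, so their cost is $0$. The fixed terms $\Epro\, r_b$ and $\Eepi\, r_b$ vanish immediately. The remaining sums vanish only if $\hat{r}_{b,v}(\sigma)=0$ and $s_{b,v}(\sigma)=0$ for all $v$ whenever $r_b(\sigma)=0$.

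This last point is the one genuine obstacle. It is not a property of the trace definitions but of feasibility. I would invoke the linking constraints of the full MILP in \cref{sec:appendix-algorithm}, namely $\hat{r}_{b,v}\le r_b$ and $s_{b,v}\le r_b$. They are described informally in \cref{sec:key-ideas}: $\hat{r}_{b,v}=1$ only when $r_b=1$, and saves happen only at region boundaries. Because $\sigma$ is feasible, these constraints force both sums to $0$, so $\estart_b(\sigma)=\eend_b(\sigma)=0$, matching the cost of $\epsilon$.

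If the appendix states these constraints in a different but equivalent form, for example as part of a product linearization $\hat{r}_{b,v}\le r_b,\ \hat{r}_{b,v}\le m_{b,v}$, I would cite the corresponding upper-bound constraint. Everything else is routine bookkeeping via the cost lemmas.
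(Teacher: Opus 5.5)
Your proof is correct and follows the paper's own argument: a case split on $r_b(\sigma)$, with Lemma~\ref{lem:cost-finite-concat} and the template definitions handling $r_b(\sigma)=1$. For $r_b(\sigma)=0$ the paper simply asserts that $\estart_b(\sigma)$ and $\eend_b(\sigma)$ are zero. Your explicit appeal to feasibility fills in the step it leaves implicit: $\hat{r}_{b,v}=r_b\wedge m_{b,v}\wedge L_{b,v}$ gives $\hat{r}_{b,v}\le r_b$, and the cut $s_{b,v}\le r_b$ in the save constraint gives the other bound. That cut is genuinely needed here, even though the paper calls it an LP tightening cut that is redundant for integer solutions.
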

\begin{proof}
If $r_b(\sigma)=0$, then
$\tau^{+}_{b}(\sigma)=\tau^{-}_{b}(\sigma)=\epsilon$,
and both $\estart_b(\sigma)$ and $\eend_b(\sigma)$ are zero.
If $r_b(\sigma)=1$, then Lemma~\ref{lem:cost-finite-concat} gives
\begin{align*}
\operatorname{cost}(\tau^{+}_{b}(\sigma))
&=
\operatorname{cost}(\tau_{\mathrm{pro}})
  + \sum_{\substack{v \in \variables \\ \hat{r}_{b,v}(\sigma)=1}}
    \operatorname{cost}(\tau^{r}_{v}) \\
&=
\Epro + \sum_{v \in \variables} \Erst_v \cdot \hat{r}_{b,v}(\sigma) \\
&=
\estart_b(\sigma),
\end{align*}
and similarly
\begin{align*}
\operatorname{cost}(\tau^{-}_{b}(\sigma))
&=
\sum_{\substack{v \in \variables \\ s_{b,v}(\sigma)=1}}
\operatorname{cost}(\tau^{s}_{v})
  + \operatorname{cost}(\tau_{\mathrm{epi}}) \\
&=
\sum_{v \in \variables} \Esv_v \cdot s_{b,v}(\sigma) + \Eepi \\
&=
\eend_b(\sigma).
\end{align*}
\end{proof}

\begin{definition}[Path semantics]
Let $\pi = b_0 \to b_1 \to \cdots \to b_k$ be a CFG path such that
$r_{b_0}(\sigma)=1$ and $r_{b_i}(\sigma)=0$ for all $1 \le i \le k$.
Define its execution trace under $\sigma$ by
\[
\tau_{\pi,\sigma}
\triangleq
\tau^{+}_{b_0}(\sigma)
\cdot
\tau^{\mathrm{blk}}_{b_0}(\sigma)
\cdot
\tau^{\mathrm{blk}}_{b_1}(\sigma)
\cdots
\tau^{\mathrm{blk}}_{b_k}(\sigma).
\]
We then define
\[
\operatorname{cost}(\pi(\sigma))
\triangleq
\operatorname{cost}(\tau_{\pi,\sigma}).
\]

If $\pi' = b_0 \to b_1 \to \cdots \to b_k \to b_{k+1}$ and
$r_{b_{k+1}}(\sigma)=1$,
define
\[
\tau_{\pi',\sigma}
\triangleq
\tau^{+}_{b_0}(\sigma)
\cdot
\tau^{\mathrm{blk}}_{b_0}(\sigma)
\cdot
\tau^{\mathrm{blk}}_{b_1}(\sigma)
\cdots
\tau^{\mathrm{blk}}_{b_k}(\sigma)
\cdot
\tau^{-}_{b_{k+1}}(\sigma),
\]
and again let
$\operatorname{cost}(\pi'(\sigma)) \triangleq \operatorname{cost}(\tau_{\pi',\sigma})$.
\end{definition}

\getkeytheorem{energyEqualityCostSemantics}
\begin{proof}
By definition of $\tau_{\pi,\sigma}$ and Lemma~\ref{lem:cost-finite-concat},
\[
\operatorname{cost}(\pi(\sigma))
=
\operatorname{cost}(\tau^{+}_{b_0}(\sigma))
+
\sum_{i=0}^{k}
\operatorname{cost}(\tau^{\mathrm{blk}}_{b_i}(\sigma)).
\]
Applying Lemma~\ref{lem:boundary-template-cost} and
Lemma~\ref{lem:block-cost-relocation} yields
\[
\operatorname{cost}(\pi(\sigma))
=
\estart_{b_0}(\sigma)
+
\sum_{i=0}^{k} \eblk_{b_i}(\sigma).
\]
The second equality is identical, except that
$\tau_{\pi',\sigma}$ contains the additional suffix
$\tau^{-}_{b_{k+1}}(\sigma)$, whose cost is
$\eend_{b_{k+1}}(\sigma)$ by Lemma~\ref{lem:boundary-template-cost}.
\end{proof}

\section{MILP Formulation}
\label{sec:appendix-algorithm}

\captionsetup[subtable]{skip=0pt}

\subsection{Definitions}

We assume that a program $P$ is
represented as a
control flow graph $G_P=(B,E)$ defined at the level of SSA-based IR,
where $B$ is the set of basic blocks and $E\subseteq B\times B$ is the set of directed edges.
We assume that $b_0\in B$ is the unique entry block
and $B_{\mathrm{exit}}\subseteq B$ is the set of exit blocks.
We use $\variables$ to denote the set of program variables,
including global variables, stack slots (e.g., \lstinline{alloca}),
and virtual registers.
We do not support heap allocations,
which are rare in energy-constrained embedded systems.

The set of program variables is partitioned into the following sets:
$\variables = \eligibles \cup \ineligibles$.
$\eligibles$ is the set of variables
that \tool can determine their allocation sites (VM or NVM),
and $\ineligibles$ is the set of variables that
are always in VM and not considered for placement.

\begin{table}[t]
\centering
\caption{Notation used in the MILP formulation.}
\Description{Table listing hardware and variable attribute symbols used in the MILP formulation.}
\label{fig:notation}
\begin{subtable}[t]{\linewidth}
\centering
\caption{Memory analysis}
\label{fig:notation-memory}
\begin{tabular}{@{}p{0.25\linewidth}p{0.7\linewidth}@{}}
\toprule
\textbf{Symbol} & \textbf{Description} \\
\midrule
$\LiveIn(b) \subseteq \variables$
  & Variables that are live at the entry of $b$. \\
$\defs(b) \subseteq \variables$
  & Variables that may be defined in $b$:
    $v \in \defs(b)$ if $v$ is a virtual register defined in $b$,
    or a stack slot or global variable written to in $b$. \\
$D_{b,v} \in \{0,1\}$
  & Definition indicator: $1$ if $v \in \defs(b)$, $0$ otherwise. \\
$L_{b,v} \in \{0,1\}$
  & Live indicator: $1$ if $v \in \LiveIn(b)$, $0$ otherwise. \\
\bottomrule
\end{tabular}
\end{subtable}

\vspace{0.5em}

\begin{subtable}[t]{\linewidth}
\centering
\caption{Hardware and variable attributes}
\label{fig:notation-hw}
\begin{tabular}{@{}p{0.25\linewidth}p{0.7\linewidth}@{}}
\toprule
\textbf{Symbol} & \textbf{Description} \\
\midrule
$\Svm$
  & Usable VM capacity in bytes. \\
$\size(v)$
  & Size of $v$ in bytes. \\
\bottomrule
\end{tabular}
\end{subtable}
\end{table}

\begin{table}[t]
\centering
\caption{Input parameters for the MILP formulation.}
\Description{Tables listing structural and profiling input parameters for the MILP formulation.}
\label{fig:params}
\begin{subtable}[t]{\linewidth}
\centering
\caption{Energy parameters}
\label{fig:params-energy}
\begin{tabular}{@{}p{0.25\linewidth}p{0.7\linewidth}@{}}
\toprule
\textbf{Symbol} & \textbf{Description} \\
\midrule
$\ebuffer$
  & Buffer energy budget. \\
$\Ebase_b$
  & Energy to execute block $b$, assuming all $v \in \eligibles$ are placed in VM. \\
$\deltaEread, \deltaEwrite$
  & Extra energy per byte for a read/write to NVM rather than VM. \\
$\Epro$
  & Region-start energy (fixed overhead, excluding restores). \\
$\Eepi$
  & Region-end energy (fixed overhead, excluding saves). \\
$\Eread$
  & Energy per byte to restore a variable from NVM to VM. \\
$\Ewrite$
  & Energy per byte to save a variable from VM to NVM. \\
$\Erst_v$ & Energy to restore $v$ from NVM to VM \newline
   ($\Eread \cdot \size(v)$) \\
$\Esv_v  $
    & Energy to save $v$ from VM to NVM \newline
    ($\Ewrite \cdot \size(v)$) \\
\bottomrule
\end{tabular}
\end{subtable}

\vspace{0.5em}

\begin{subtable}[t]{\linewidth}
\centering
\caption{Profiling data, obtained by profiling, symbolic execution, or other means}
\label{fig:params-profile}
\begin{tabular}{@{}p{0.25\linewidth}p{0.7\linewidth}@{}}
\toprule
\textbf{Symbol} & \textbf{Description} \\
\midrule
$F_b$
  & Expected dynamic number of times block $b$ is entered. \\
\bottomrule
\end{tabular}
\end{subtable}
\end{table}

\subsection{MILP Variables}

\begin{table*}[t]
\centering
\caption{MILP variables, which are comprised of decision variables and derived variables.}
\Description{Tables listing MILP decision variables and derived variables with their descriptions and definitions.}
\label{fig:milp-vars}
\renewcommand{\arraystretch}{1.3}
\begin{subtable}[t]{\linewidth}
\centering
\caption{Decision variables}
\label{fig:milp-decision}
\begin{tabular}{@{}p{0.15\linewidth}p{0.8\linewidth}@{}}
\toprule
\textbf{Symbol} & \textbf{Description} \\
\midrule
$r_b \in \{0,1\}$
  & $1$ iff a region boundary is placed at the start of block $b$. \\
$m_{b,v} \in \{0,1\}$
  & $1$ iff $v$ is placed in VM while executing $b$. \\
$\hat{r}_{b,v} \in \{0,1\}$
  &  $r_b \wedge m_{b,v} \wedge L_{b,v}$.
  If $\hat{r}_{b,v} = 1$, then $v$ is live-in, placed in VM, and a region starts
  at $b$, so $v$ must be restored from NVM to VM at the region start.
  \\
$d_{b,v} \in \{0,1\}$
  & $1$ iff $v$ is dirty in the current region containing $b$,
    i.e., $v \in \defs(b')$ for some $b'$ on a path
    from the last region boundary to $b$. \\
$\hat{d}_{b,v} \in \{0,1\}$
  & Linearization of $d_{b,v} \wedge m_{b,v}$:
    $v$ is dirty and placed in VM at $b$. \\
$s_{b,v} \in \{0,1\}$
  & $1$ iff $v$ must be saved at a region boundary preceding $b$. \\
$\eaccum_b \in [0, \ebuffer]$
  & Worst-case energy consumed from the start of the current region
    up to (just before) executing $b$. \\
\bottomrule
\end{tabular}
\end{subtable}

\vspace{0.5em}

\begin{subtable}[t]{\linewidth}
\centering
\caption{Derived variables}
\label{fig:milp-derived}
\begin{tabular}{@{}p{0.15\linewidth}p{0.3\linewidth}p{0.5\linewidth}@{}}
\toprule
\textbf{Symbol} & \textbf{Description} & \textbf{Definition} \\
\midrule
$\deltaE_{b,v}$
  & Extra energy in $b$ from placing $v$ in NVM
  & $\deltaEread \cdot \size(v) \cdot \numReads(b,v) + \deltaEwrite \cdot \size(v) \cdot \numWrites(b,v)$ \\
$\eblk_b$
  & Energy to execute block $b$
  & $\Ebase_b + \sum_{v \in \variables} \deltaE_{b,v}(1-m_{b,v})$ \\
$\estart_b$
  & Energy to start a region at $b$ \newline (fixed region-start cost + restoring live-in variables)
  & $\Epro \cdot r_b + \sum_{v \in \variables} \Erst_v \cdot \hat{r}_{b,v}$ \\
$\eend_b$
  & Energy to end a region before $b$ \newline (fixed region-end cost + saving dirty variables)
  & $\Eepi \cdot r_b + \sum_{v \in \variables} \Esv_v \cdot s_{b,v}$ \\
\bottomrule
\end{tabular}
\end{subtable}
\end{table*}

\subsection{Objective Function}

The objective function to minimize is given as follows.
We minimize the expected energy overhead incurred by memory placement and region boundary code.

\begin{equation*}
  \sum_{b\in B} F_b \cdot (\eblk_b + \estart_b + \eend_b)
\end{equation*}

\subsection{Constraints}

\begin{constraint}[Entry block is a region boundary.]
  \begin{equation*}
    r_{b_0} = 1
  \end{equation*}
\end{constraint}

\begin{constraint}[VM capacity constraint.]
  We cannot allocate more variables in VM than the VM capacity allows
  \begin{equation*}
    \sum_{v \in \eligibles} \size(v) \cdot m_{b,v}
    + \sum_{v \in \ineligibles} \size(v)
    \le \Svm
    \tag{$\forall b \in B$}
  \end{equation*}
\end{constraint}

\begin{constraint}
  \label{c:ineligible-always-in-vm}
  All ineligible variables are unconditionally placed in VM.
  \begin{equation*}
    m_{b,v} = 1 \tag{$\forall b \in B, \forall v\in \ineligibles$}
  \end{equation*}
\end{constraint}

\begin{constraint}
  Encoding of $\hat{r}_{b,v}$ as a logical conjunction of $r_b$ and $m_{b,v}$.
  \begin{equation*}
    \hat{r}_{b,v} = r_b \wedge m_{b,v} \wedge L_{b,v} \tag{$\forall b\in B, \forall v\in \variables$}
  \end{equation*}
\end{constraint}

\begin{constraint}
  Variables defined in a block are dirty.
  \begin{equation*}
    d_{b,v} \ge D_{b,v} \tag{$\forall b \in B, \forall v\in \variables$}
  \end{equation*}
  \end{constraint}

\begin{constraint}
  \label{c:dirtiness-propagation}
  Dirtiness propagates if we do not place a region boundary in between.
  \begin{equation*}
   d_{\succ, v} \ge d_{\pred, v}
    \quad \text{if } r_{\succ} = 0
   \tag{$\forall (\pred, \succ)\in E, \forall v\in \variables$}
  \end{equation*}

\end{constraint}

\begin{constraint}
  Region boundary resets dirty.
  \begin{equation*}
    d_{b,v} = D_{b,v}
    \quad \text{if } r_b = 1
    \tag{$\forall b \in B, \forall v\in \variables$}
  \end{equation*}
  In implementation, we use the following constraint instead:
  \begin{align*}
    d_{b,v} &\ge D_{b,v} + r_b - 1 \\
    d_{b,v} &\le D_{b,v} - r_b + 1 \\
    \tag{$\forall b \in B, \forall v\in \variables$}
  \end{align*}
  We also add the following LP tightening cut:
  \begin{equation*}
    d_{\succ,v} \le D_{\succ,v} + \sum_{(\pred, \succ)\in E} d_{\pred, v}
    \tag{$\forall (\pred, \succ)\in E, \forall v\in \variables$}
  \end{equation*}
  This is redundant for integer solutions
  but tightens the LP relaxation:
  $d_{\succ,v}$ can be $1$ only if $v$ is defined in $\succ$
  or some predecessor carries $d_{\pred,v} = 1$.
\end{constraint}

\begin{constraint}
  Encoding of $\hat{d}_{b,v}$ as a logical conjunction of $d_{b,v}$ and $m_{b,v}$.
  \begin{equation*}
    \hat{d}_{b,v} = d_{b,v} \wedge m_{b,v} \tag{$\forall b \in B, \forall v\in \variables$}
  \end{equation*}

\end{constraint}

\begin{constraint}
  All dirty variables placed in VM are saved at a region boundary.
  \begin{equation*}
    s_{\succ,v} = 1
    \quad \text{if } \hat{d}_{\pred,v} = 1 \wedge r_{\succ} = 1
    \tag{$\forall (\pred, \succ)\in E, \forall v\in \variables$}
  \end{equation*}
  In implementation, we use the following constraint instead:
  \begin{align*}
    s_{\succ,v} \ge \hat{d}_{\pred,v} + r_{\succ} - 1
    \tag{$\forall (\pred, \succ)\in E, \forall v\in \variables$}
  \end{align*}
  We also add the following LP tightening cuts:
  \begin{align*}
    s_{\succ,v} &\le r_{\succ} \\
    s_{\succ,v} &\le \sum_{(\pred, \succ)\in E} \hat{d}_{\pred, v}
    \tag{$\forall b \in B, \forall v\in \variables$}
  \end{align*}
  These are redundant for integer solutions
  but tighten the LP relaxation:
  $s_{\succ,v}$ can be $1$ only at a region boundary
  where some predecessor has $\hat{d}_{\pred,v} = 1$.
  Additionally, $s_{b,v}$ is only needed for $v \in \LiveIn(b)$,
  since dead variables do not require saving.
\end{constraint}

\begin{constraint}
  \label{c:accum-init}
  Energy accumulated at a region boundary is $\estart_b$.
  \begin{equation*}
    \eaccum_b = \estart_b
    \quad \text{if } r_b = 1
    \tag{$\forall b \in B$}
  \end{equation*}
\end{constraint}

\begin{constraint}
  \label{c:accum-prop}
  Accumulated energy propagates within a region.
  \begin{equation*}
    \eaccum_{\succ} \ge \eaccum_{\pred} + \eblk_{\pred}
    \quad \text{if } r_{\succ} = 0
    \tag{$\forall (\pred, \succ)\in E$}
  \end{equation*}
\end{constraint}

\begin{constraint}
  \label{c:energy-bound}
  For every edge $(\pred, \succ)$, the region has enough energy to execute
  $\pred$ and, if $\succ$ is a region start, also pay the
  cost of the region-end and saving all dirty variables.
  \begin{equation*}
    \eaccum_\pred +
    \eblk_\pred +
    \eend_\succ
    \le \ebuffer \tag{$\forall (\pred, \succ)\in E$}
  \end{equation*}
\end{constraint}

\begin{constraint}[Energy budget constraint for exit blocks.]
  \label{c:energy-bound-exit}
  For every exit block $b \in B_{\mathrm{exit}}$,
  there should be enough energy to finish $b$.
  \begin{equation*}
    \eaccum_b + \eblk_b \le \ebuffer \tag{$\forall b \in B_{\mathrm{exit}}$}
  \end{equation*}
\end{constraint}

\begin{constraint}
  \label{c:placement-propagation}
  A variable's allocation site can change only at region boundaries.
  \begin{equation*}
    m_{\pred,v} = m_{\succ,v}
    \quad \text{if } r_{\succ} = 0
    \tag{$\forall (\pred, \succ)\in E, \forall v\in \variables$}
  \end{equation*}
  In implementation, we use the following constraint instead:
  \begin{align*}
    m_{\pred,v} &\ge m_{\succ,v} - r_{\succ}  \\
    m_{\pred,v} &\le m_{\succ,v} + r_{\succ}
    \tag{$\forall (\pred, \succ)\in E, \forall v\in \variables$}
  \end{align*}
\end{constraint}

\begin{constraint}[Allocation agreement at join points.]
  \label{c:allocation-join}
  All predecessors of a block agree on where each variable is placed.
  \begin{equation*}
    m_{u,v}=m_{w,v}
    \tag{$\forall b\in B, \forall (u,b),(w,b)\in E,
      \forall v\in\eligibles$}
  \end{equation*}
  This is redundant when $r_b = 0$, since \cmark{c:placement-propagation}
  already equates every predecessor of $b$ with $b$ itself.
  It is required when $r_b = 1$, because the region end at $b$ saves each
  variable once for the block rather than once per incoming edge:
  if $v$ were placed in VM at $u$ and in NVM at $w$,
  the save forced by $\hat{d}_{u,v} = 1$ would also be executed on the path
  through $w$, overwriting the up-to-date NVM value with a stale VM copy.
  Ineligible variables need no equation, since
  \cmark{c:ineligible-always-in-vm} places them in VM at every block.

  In implementation, we do not emit one equation per pair of predecessors.
  Writing $u_1, \ldots, u_n$ for the predecessors of $b$,
  we equate each predecessor with the first,
  which is $n-1$ equations rather than $\binom{n}{2}$:
  \begin{equation*}
    m_{u_1,v} = m_{u_i,v}
    \tag{$\forall b\in B, 2\le i\le n, \forall v\in\eligibles$}
  \end{equation*}
\end{constraint}

\subsection{Safety Theorem}

\begin{definition}
We use $\sigma$ to denote an assignment to all MILP variables,
i.e., $\{r_b\}_{b\in B}$, $\{m_{b,v}\}_{b\in B, v\in \variables}$,
$\{d_{b,v}\}$, $\{\hat{d}_{b,v}\}$, $\{s_{b,v}\}$, and $\{\eaccum_b\}$
(indexed over all relevant $b$ and $v$).
$\sigma$ is \emph{feasible} if it satisfies all constraints.
It is \emph{optimal} if it is feasible and minimizes the objective function.
\end{definition}

\begin{definition}
For an MILP variable $x$, we use $x(\sigma)$
to denote the value of $x$ in the solution $\sigma$.
Similarly, for derived variables,
such as $\eblk_b$,
we use $\eblk_b(\sigma)$ to denote the value of $\eblk_b$ in the solution $\sigma$.
\end{definition}

\begin{definition}
\label{def:instrumented-program}
  Given a feasible MILP solution $\sigma$ for a program $P$,
  the instrumented program $P(\sigma)$
  is obtained by transforming $P$ as follows:
  \begin{description}
    \item[Memory Placement]
    For each $v \in \variables$,
    $v$ is allocated in VM at block $b$ iff $m_{b,v}(\sigma) = 1$.
    \item[Region Boundary]
    At each block $b$ with $r_b(\sigma) = 1$, we insert
    \begin{enumerate}
      \item Region end that saves every variable $v$ with $s_{b,v}(\sigma) = 1$
      to NVM and performs a fixed-cost action (e.g., power off).
      \item Region start that
      performs a fixed-cost action (e.g., power on) and then
      restores every variable $v$ with $\hat{r}_{b,v}(\sigma) = 1$ from NVM to VM.
    \end{enumerate}
  \end{description}
\end{definition}

Informally,
we say that $P(\sigma)$ is \emph{safe} if
\begin{enumerate*}[label=(\arabic*)]
\item the energy consumed within each region does not exceed the budget $\ebuffer$, and
\item all VM variables are saved and restored correctly at region boundaries,
\end{enumerate*}
ensuring memory consistency across power failures.
Our goal is to show that $P(\sigma)$ is safe
whenever $\sigma$ is a feasible MILP solution.

We say that a CFG path $\pi = b_0 \to b_1 \to \cdots \to b_k$ stays within the
same region if $r_{b_0}(\sigma) = 1$ and $r_{b_i}(\sigma) = 0$ for all
$1 \le i \le k$.

We say that a CFG path $\pi$ reaches the next region boundary if
$\pi = b_0 \to b_1 \to \cdots \to b_k \to b_{k+1}$,
$r_{b_0}(\sigma) = r_{b_{k+1}}(\sigma) = 1$, and
$r_{b_i}(\sigma) = 0$ for all $1 \le i \le k$.

We say that a CFG path $\pi$ reaches program exit without crossing another
region boundary if $\pi = b_0 \to b_1 \to \cdots \to b_k$,
$r_{b_0}(\sigma) = 1$, and
$r_{b_i}(\sigma) = 0$ for all $1 \le i \le k$.

\begin{lemma}
  \label{lem:accum-init}
  Let $\sigma$ be a feasible MILP solution.
  For all $b \in B$, if $r_b(\sigma) = 1$, then
  $\eaccum_b(\sigma) = \estart_b(\sigma)$ and $\estart_b(\sigma) \le \ebuffer$.
\end{lemma}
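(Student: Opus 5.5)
The first claim, $\eaccum_b(\sigma) = \estart_b(\sigma)$, should follow directly from the reset constraint \cmark{c:accum-init}. That constraint is a conditional equality triggered by $r_b = 1$. Since $r_b(\sigma) = 1$ by hypothesis, feasibility of $\sigma$ forces the equality at $b$. In the implementation the conditional is presumably encoded as an indicator or big-$M$ constraint. In either encoding, an integral solution with $r_b(\sigma)=1$ makes the constraint active, so no case analysis on the encoding is needed beyond noting this.

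For the bound $\estart_b(\sigma) \le \ebuffer$, I would avoid any path reasoning. Instead I would use the domain declaration of the accumulated-energy variable: every MILP variable $\eaccum_b$ is declared with domain $[0,\ebuffer]$ (see the decision-variable table). Feasibility includes these bound constraints, so $\eaccum_b(\sigma) \le \ebuffer$. Substituting the equality from the first step then gives $\estart_b(\sigma) = \eaccum_b(\sigma) \le \ebuffer$.

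The only point needing care is whether a block $b$ with $r_b(\sigma)=1$ has no successors and is not an exit block, so that \cmark{c:energy-bound} and \cmark{c:energy-bound-exit} say nothing about it. The domain bound on $\eaccum_b$ handles this case uniformly, which is why I would rely on it rather than on the edge constraints. As a remark, I would also note that for any outgoing edge $(b,w)$, constraint \cmark{c:energy-bound} gives the stronger bound $\estart_b(\sigma) + \eblk_b(\sigma) \le \ebuffer$. This uses $\eblk_b(\sigma),\eend_w(\sigma) \ge 0$, which holds because all input parameters are nonnegative and the binaries lie in $\{0,1\}$. I do not expect any step to be a real obstacle; the lemma is essentially bookkeeping that later serves as the base case for the induction along in-region paths in Theorem~\ref{thm:region-energy-bound}.
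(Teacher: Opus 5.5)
Your proof is correct. The first half, the equality from \cmark{c:accum-init}, is exactly the paper's argument. The second half takes a slightly different route.

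For the bound, the paper cites \cmark{c:energy-bound} instead of the variable domain. Taking an outgoing edge $(b,w)$, it reads off $\eaccum_b(\sigma) \le \eaccum_b(\sigma) + \eblk_b(\sigma) + \eend_w(\sigma) \le \ebuffer$. This silently uses the nonnegativity of $\eblk_b$ and $\eend_w$, which you spell out in your remark.

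Your use of the declared domain $\eaccum_b \in [0,\ebuffer]$ is legitimate. The same domain declarations are what make $r_b$ and $m_{b,v}$ binary, so they must be part of feasibility. Your route is also uniform. It covers a boundary block with no outgoing edge, such as an exit block with $r_b(\sigma)=1$, or a trivial single-block program. There \cmark{c:energy-bound} has no instance, and the paper's citation would have to be replaced by \cmark{c:energy-bound-exit}, again together with $\eblk_b \ge 0$.

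The paper's route, in turn, shows the bound is already implied by the energy constraints rather than by the variable's declared upper bound. It also yields the sharper $\estart_b(\sigma) + \eblk_b(\sigma) \le \ebuffer$ whenever $b$ has a successor.
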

\begin{proof}
  If $r_b(\sigma) = 1$, then \cmark{c:accum-init} gives
  $\eaccum_b(\sigma) = \estart_b(\sigma)$.
  Then \cmark{c:energy-bound} gives
  $\eaccum_b(\sigma) \le \ebuffer$,
  so we have $\estart_b(\sigma) \le \ebuffer$.
\end{proof}

\begin{lemma}
  \label{lem:accum-path}
  Let $\sigma$ be a feasible MILP solution.
  For any CFG path $\pi = b_0 \to b_1 \to \cdots \to b_k$ that stays within the
  same region,
  we have
  \[
  \estart_{b_0}(\sigma) + \sum_{i=0}^{k-1} \eblk_{b_i}(\sigma)
  \le \eaccum_{b_k}(\sigma).
  \]
\end{lemma}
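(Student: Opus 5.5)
We argue by induction on the path length $k$, establishing the stated inequality for every prefix $b_0 \to \cdots \to b_j$ of $\pi$ with $0 \le j \le k$. Each such prefix itself stays within the same region: $r_{b_0}(\sigma)=1$, and $r_{b_i}(\sigma)=0$ for $1 \le i \le j$. So it suffices to show
\[
\estart_{b_0}(\sigma) + \sum_{i=0}^{j-1} \eblk_{b_i}(\sigma) \le \eaccum_{b_j}(\sigma)
\]
for all $j$.

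\emph{Base case ($j=0$).} The sum is empty, so the claim is $\estart_{b_0}(\sigma) \le \eaccum_{b_0}(\sigma)$. Since $r_{b_0}(\sigma)=1$, Lemma~\ref{lem:accum-init} (equivalently \cmark{c:accum-init}) gives $\eaccum_{b_0}(\sigma) = \estart_{b_0}(\sigma)$, which yields the inequality with equality.

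\emph{Inductive step ($j-1 \to j$, with $1 \le j \le k$).} The edge $(b_{j-1}, b_j)$ lies in $E$ because $\pi$ is a CFG path, and $r_{b_j}(\sigma)=0$ because the path stays within the region. The conditional constraint \cmark{c:accum-prop} is therefore active and gives
\[
\eaccum_{b_j}(\sigma) \ge \eaccum_{b_{j-1}}(\sigma) + \eblk_{b_{j-1}}(\sigma).
\]
By the induction hypothesis, $\eaccum_{b_{j-1}}(\sigma)$ is at least $\estart_{b_0}(\sigma) + \sum_{i=0}^{j-2}\eblk_{b_i}(\sigma)$. Adding $\eblk_{b_{j-1}}(\sigma)$ to both sides and chaining with the inequality above gives the claim for $j$. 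Taking $j=k$ completes the proof.

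The only delicate point is that \cmark{c:accum-prop} is an indicator-style conditional constraint. We must note that "feasible" means its semantic meaning holds: whenever $r_{\succ}(\sigma)=0$, the inequality holds exactly. If the constraint is implemented as a big-$M$ or indicator constraint, we can rely on the fact that its integer-valued assignments enforce precisely that. We also note that nothing requires $\pi$ to be simple, since the induction never uses distinctness of the blocks.
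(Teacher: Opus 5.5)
Your proof is correct and follows the paper's own argument. The paper also starts from $\eaccum_{b_0}(\sigma)=\estart_{b_0}(\sigma)$ via Lemma~\ref{lem:accum-init}, then chains the inequalities from \cmark{c:accum-prop} along each edge $(b_i,b_{i+1})$ using $r_{b_{i+1}}(\sigma)=0$; your induction on prefixes simply makes that chaining explicit.
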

\begin{proof}
  By Lemma~\ref{lem:accum-init}, we have
  $\eaccum_{b_0}(\sigma) = \estart_{b_0}(\sigma)$.
  For each edge $(b_i, b_{i+1})$
  with $0 \le i \le k-1$,
  since $\pi$ stays within the same region, we have $r_{b_{i+1}}(\sigma) = 0$,
  so by \cmark{c:accum-prop}, we have
  $\eaccum_{b_i}(\sigma) + \eblk_{b_i}(\sigma) \le \eaccum_{b_{i+1}}(\sigma)$.
  Chaining these inequalities yields the desired bound.
\end{proof}

\begin{lemma}
  \label{lem:region-energy-bound}
  Let $\sigma$ be a feasible MILP solution.
  If $\pi$ reaches program exit without crossing another region boundary, then
  \[
  \estart_{b_0}(\sigma) +
  \sum_{i=0}^{k} \eblk_{b_i}(\sigma)
  \le \ebuffer.
  \]
  Furthermore, if $\pi'$ reaches the next region boundary, then
  we have
  \[
  \estart_{b_0}(\sigma) +
  \sum_{i=0}^{k} \eblk_{b_i}(\sigma)
  + \eend_{b_{k+1}}(\sigma)
  \le \ebuffer.
  \]
\end{lemma}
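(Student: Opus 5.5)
The plan is to reduce both inequalities to Lemma~\ref{lem:accum-path} plus one final budget constraint applied at the last block of the path.

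\emph{Exit case.} Let $\pi = b_0 \to \cdots \to b_k$ reach program exit without crossing another region boundary, so $\pi$ stays within the same region.
\begin{enumerate*}[label=(\roman*)]
\item Lemma~\ref{lem:accum-path} gives $\estart_{b_0}(\sigma) + \sum_{i=0}^{k-1}\eblk_{b_i}(\sigma) \le \eaccum_{b_k}(\sigma)$.
\item Add $\eblk_{b_k}(\sigma)$ to both sides.
\item Since $b_k \in B_{\mathrm{exit}}$ (this is what ``reaches program exit'' means), \cmark{c:energy-bound-exit} gives $\eaccum_{b_k}(\sigma) + \eblk_{b_k}(\sigma) \le \ebuffer$.
\end{enumerate*}
Chaining these yields the first inequality.

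For $k=0$, Lemma~\ref{lem:accum-path} degenerates to the equality $\eaccum_{b_0}(\sigma) = \estart_{b_0}(\sigma)$ from Lemma~\ref{lem:accum-init}, and the same argument applies.

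\emph{Boundary case.} Let $\pi' = b_0 \to \cdots \to b_k \to b_{k+1}$ with $r_{b_{k+1}}(\sigma)=1$.
\begin{enumerate*}[label=(\roman*)]
\item The prefix $b_0 \to \cdots \to b_k$ stays within the same region, so Lemma~\ref{lem:accum-path} applies to it exactly as before.
\item After adding $\eblk_{b_k}(\sigma)$, also add $\eend_{b_{k+1}}(\sigma)$ to both sides.
\item Apply \cmark{c:energy-bound} to the CFG edge $(b_k, b_{k+1})$, which gives $\eaccum_{b_k}(\sigma) + \eblk_{b_k}(\sigma) + \eend_{b_{k+1}}(\sigma) \le \ebuffer$.
\end{enumerate*}

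\emph{Main obstacle.} There is no real difficulty; the only care needed is bookkeeping. Two points need checking. First, \cmark{c:accum-prop} is applied only on edges whose target has $r=0$, which holds for every edge of the prefix by the definition of the path class. Second, the final edge into $b_{k+1}$ is handled solely by \cmark{c:energy-bound}, which holds for every edge regardless of $r_{b_{k+1}}$. I would state these two observations explicitly and then chain the inequalities.
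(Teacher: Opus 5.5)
Your proposal is correct and follows the paper's proof essentially step for step: Lemma~\ref{lem:accum-path} on the in-region prefix, then \cmark{c:energy-bound-exit} at the exit block or \cmark{c:energy-bound} on the final edge $(b_k, b_{k+1})$, and summing the two inequalities. Your explicit notes on $k=0$ and on $b_k \in B_{\mathrm{exit}}$ are helpful. The appendix's formal definition of ``reaches program exit'' omits the exit-block condition, yet the paper's proof also uses it implicitly.
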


\begin{proof}
  If $\pi = b_0 \to b_1 \to \cdots \to b_k$ reaches program exit without
  crossing another region boundary, then by Lemma~\ref{lem:accum-path},
  \[
  \estart_{b_0}(\sigma) + \sum_{i=0}^{k-1} \eblk_{b_i}(\sigma)
  \le \eaccum_{b_k}(\sigma).
  \]
  Since $b_k$ is an exit block, \cmark{c:energy-bound-exit} gives
  \[
  \eaccum_{b_k}(\sigma) + \eblk_{b_k}(\sigma) \le \ebuffer.
  \]
  Summing these inequalities yields the first claim.

  If $\pi' = b_0 \to b_1 \to \cdots \to b_k \to b_{k+1}$ reaches the next
  region boundary, then the prefix $b_0 \to b_1 \to \cdots \to b_k$ stays
  within the same region. By Lemma~\ref{lem:accum-path},
  \[
  \estart_{b_0}(\sigma) + \sum_{i=0}^{k-1} \eblk_{b_i}(\sigma)
  \le \eaccum_{b_k}(\sigma).
  \]
  By \cmark{c:energy-bound}, we have
  \[
  \eaccum_{b_k}(\sigma) + \eblk_{b_k}(\sigma) + \eend_{b_{k+1}}(\sigma)
  \le \ebuffer.
  \]
  Summing these inequalities yields the second claim.
\end{proof}

Now we prove the main energy-safety theorem,
which shows that the energy consumed within each region does not exceed the budget $\ebuffer$.

\getkeytheorem{regionEnergyBound}
\begin{proof}
  If $\pi = b_0 \to b_1 \to \cdots \to b_k$ reaches program exit, then
  $r_{b_0}(\sigma) = 1$ and $r_{b_i}(\sigma) = 0$ for all $1 \le i \le k$.
  By Lemma~\ref{lem:energy-equality-cost-semantics},
  \[
  \operatorname{cost}(\pi(\sigma)) =
  \estart_{b_0}(\sigma) +
  \sum_{i=0}^{k} \eblk_{b_i}(\sigma).
  \]
  By the first part of Lemma~\ref{lem:region-energy-bound}, this is at most
  $\ebuffer$.

  If instead $\pi = b_0 \to b_1 \to \cdots \to b_k \to b_{k+1}$ reaches the
  next region boundary, then $r_{b_0}(\sigma) = r_{b_{k+1}}(\sigma) = 1$ and
  $r_{b_i}(\sigma) = 0$ for all $1 \le i \le k$.
  By Lemma~\ref{lem:energy-equality-cost-semantics},
  \[
  \operatorname{cost}(\pi(\sigma)) =
  \estart_{b_0}(\sigma) +
  \sum_{i=0}^{k} \eblk_{b_i}(\sigma) +
  \eend_{b_{k+1}}(\sigma).
  \]
  By the second part of Lemma~\ref{lem:region-energy-bound}, this is at most
  $\ebuffer$.
\end{proof}

\section{\schematic Bugs}
\label{sec:schematic-bug}

We found three problems in \schematic while porting it.  The first is in
the reference implementation; the second and third are in the paper's
loop-handling algorithm (Algorithm~1) and are inherited by the reference
implementation.  They are independent of one another.  We fixed all three
for fair comparison.

\subsection{Order-dependent inner-loop cost}
\label{sec:schematic-bug-order}

\begin{lstlisting}[language=C]
for (i = 0; i < N; i++) {
  for (j = 0; j < M; j++) {
    if (rare_condition()) {
      // expensive
    }
  }
}
\end{lstlisting}

The inner loop fits within the energy budget and is thus left
checkpoint-free.  However, the worst-case per-iteration cost of the inner
loop is not propagated conservatively to the outer loop.  Consequently, the
outer loop assumes the nested computation is cheaper than it is in the
worst case and makes an overly optimistic checkpointing decision.

The bug is order-dependent.  \schematic analyzes inner-loop traces in
descending frequency order.  When the common inner-loop path is analyzed
first, the shared basic blocks on that path become fixed.  Subsequent
analysis of the rare expensive path is then restricted to the still-unfixed
subregion, so the rare path can no longer fully revise the cost summary
exported by the inner loop.  As a result, the outer loop receives an
optimistic inner-loop cost and chooses an overly large checkpoint interval.

\subsection{Loops absent from the profile}
\label{sec:schematic-bug-profile}

\schematic enumerates the paths of each loop from execution traces.  The
paper handles incomplete traces only for code that is \emph{never
executed}: such code is ``analyzed at the end of the algorithm to ensure
complete code coverage'' (Section~III-A).  The reference implementation
does exactly this: every basic block that no trace covered is placed on a
straight-line path through the CFG and analyzed as if it executed once.

This is unsound when the uncovered code is a loop that does execute.  The
situation arises whenever the compiler's loop structure differs from what
the profiler recorded---for example, when an optimization splits one
source loop into two natural loops and the profiler reports only the outer
one.  The inner loop's iterations are then charged once instead of up to
its trip count, and the enclosing region is sized for fewer iterations
than actually run.

We repair the profile before analysis: every natural loop known to the
compiler receives a loop entry, and for loops the profile did not cover we
synthesize header-to-latch paths from the CFG.  Every loop then goes through
the same loop analysis as a profiled loop, so its iteration count is
always charged.

\subsection{Restore cost omitted from the loop iteration budget}
\label{sec:schematic-bug-restore}

Algorithm~1 of the paper decides how many loop iterations may run between
two consecutive back-edge checkpoints:
\[
  \mathit{num\_it} \leftarrow \left\lfloor E_B / E_{\mathit{loop}} \right\rfloor
  \qquad\text{(Algorithm~1, line~6)}
\]
where $E_{\mathit{loop}}$ is the cost of one iteration.\footnote{The
paper defines $E_{\mathit{loop}} \leftarrow \mathit{header}.E_{\mathit{left}}
- \mathit{latch}.E_{\mathit{left}}$ (line~5), and $E_{\mathit{left}}$ as the
energy remaining \emph{after} a block executes, so this difference omits the
header block's own cost.  The reference implementation computes the iteration
cost from $E_{\mathit{to\_leave}}$ instead, which includes the header.  This
discrepancy is orthogonal to the omission discussed here.}  This budget
charges neither the save at the checkpoint that ends the interval nor the
restore at the checkpoint that begins it.  The paper's own
reachable-checkpoint-graph construction charges both: an edge's cost
comprises ``the energy to restore the volatile data,'' the energy to execute
the path, and the energy to save volatile data, and the edge exists only if
that cost fits in $E_B$ (Section~III-A).  The reference implementation subtracts the save
cost from the loop budget but still omits the restore cost.

The omission is harmless while one iteration is well below $E_B$.  It
produces a wrong decision when
\[
  E_B - E_{\mathit{restore}} - E_{\mathit{save}}
  \;<\; E_{\mathit{loop}}
  \;\le\; E_B - E_{\mathit{save}} .
\]
In this range the paper's formula yields $\mathit{num\_it} \ge 1$ and the
reference implementation's formula (which subtracts one) yields
$\mathit{num\_it} = 0$; both mean ``at least one iteration fits; checkpoint
on the back-edge,'' and neither places a checkpoint inside the loop body.
The region that results---restore, one full iteration, save---does not fit
in $E_B$.

We subtract the restore cost from the iteration budget, matching the
paper's RCG cost model.  When the corrected budget admits no full iteration,
a checkpoint is required inside the loop body.  Algorithm~1 has no such
case: its Step~1 analyzes the body before $\mathit{num\_it}$ is known, and
places body checkpoints only where a single RCG segment exceeds $E_B$.  We
therefore re-analyze the loop's complete header-to-latch path through the
RCG whenever the corrected budget admits no full iteration.

\section{Additional Evaluation Results}
\label{sec:appendix-evaluation}

\Cref{tab:milp-compile-stats-appendix} shows MILP compilation statistics
for all three capacitor sizes (\qty{5}{\micro\farad}, \qty{10}{\micro\farad},
and \qty{50}{\micro\farad}).
\Cref{fig:execution-time-5,fig:execution-time-10,fig:execution-time-50} show
normalized execution time, and
\cref{fig:runtime-region-boundary-calls-5,fig:runtime-region-boundary-calls-10,fig:runtime-region-boundary-calls-50}
show region boundary hits,
at \qty{5}{\micro\farad}, \qty{10}{\micro\farad}, and
\qty{50}{\micro\farad}, respectively.
\Cref{tab:intermittent-decomposition-appendix} breaks down execution
time under intermittent power into execution, checkpointing, and waiting.
\begin{table}[t]
\centering
\caption{MILP compilation statistics of \tool for all three capacitor sizes.
Blocks and edges refer to those in the summarized CFG.
The \qty{10}{\micro\farad} rows repeat \cref{tab:milp-compile-stats}.
All solves reached a guaranteed optimal solution.}
\label{tab:milp-compile-stats-appendix}
\begin{adjustbox}{max width=\linewidth}
\begin{tabular}{lrrrrr}
\toprule
Bench. & Blocks & Edges & Vars. & Constr. & Sol. Time (s) \\
\midrule
\multicolumn{1}{l}{\textit{5\,\textmu{}F}} \\
ar             & \num{8} & \num{9} & \num{533} & \num{1986} & \num{0.026} \\
aes            & \num{9} & \num{10} & \num{2475} & \num{10033} & \num{0.031} \\
bitcount       & \num{36} & \num{51} & \num{5750} & \num{25057} & \num{0.084} \\
chacha20       & \num{4} & \num{4} & \num{495} & \num{1907} & \num{0.004} \\
crc            & \num{8} & \num{9} & \num{157} & \num{573} & \num{0.003} \\
cuckoo\_filter & \num{9} & \num{11} & \num{576} & \num{2327} & \num{0.013} \\
dijkstra       & \num{9} & \num{12} & \num{975} & \num{3937} & \num{0.022} \\
poly1305       & \num{4} & \num{4} & \num{371} & \num{1400} & \num{0.004} \\
qsort          & \num{12} & \num{15} & \num{324} & \num{1264} & \num{0.007} \\
rsa            & \num{26} & \num{33} & \num{4980} & \num{20658} & \num{0.088} \\
sensor\_fusion & \num{174} & \num{263} & \num{24449} & \num{108470} & \num{77.378} \\
sha256         & \num{23} & \num{30} & \num{5827} & \num{23964} & \num{0.182} \\
stringsearch   & \num{18} & \num{24} & \num{955} & \num{3869} & \num{0.238} \\
\midrule
\multicolumn{1}{l}{\textit{10\,\textmu{}F}} \\
ar             & \num{8} & \num{9} & \num{533} & \num{1986} & \num{0.025} \\
aes            & \num{9} & \num{10} & \num{2475} & \num{10033} & \num{0.027} \\
bitcount       & \num{36} & \num{51} & \num{5750} & \num{25057} & \num{0.086} \\
chacha20       & \num{4} & \num{4} & \num{495} & \num{1907} & \num{0.004} \\
crc            & \num{6} & \num{6} & \num{117} & \num{414} & \num{0.003} \\
cuckoo\_filter & \num{7} & \num{8} & \num{432} & \num{1710} & \num{0.005} \\
dijkstra       & \num{9} & \num{12} & \num{975} & \num{3937} & \num{0.021} \\
poly1305       & \num{4} & \num{4} & \num{559} & \num{2135} & \num{0.004} \\
qsort          & \num{12} & \num{15} & \num{324} & \num{1264} & \num{0.006} \\
rsa            & \num{26} & \num{33} & \num{4980} & \num{20658} & \num{0.105} \\
sensor\_fusion & \num{4} & \num{4} & \num{431} & \num{1696} & \num{0.004} \\
sha256         & \num{14} & \num{18} & \num{4083} & \num{16818} & \num{0.079} \\
stringsearch   & \num{18} & \num{24} & \num{955} & \num{3869} & \num{0.225} \\
\midrule
\multicolumn{1}{l}{\textit{50\,\textmu{}F}} \\
ar             & \num{6} & \num{6} & \num{429} & \num{1568} & \num{0.018} \\
aes            & \num{9} & \num{10} & \num{2475} & \num{10033} & \num{0.038} \\
bitcount       & \num{36} & \num{51} & \num{5750} & \num{25057} & \num{0.088} \\
chacha20       & \num{4} & \num{4} & \num{495} & \num{1907} & \num{0.005} \\
crc            & \num{4} & \num{3} & \num{84} & \num{278} & \num{0.001} \\
cuckoo\_filter & \num{5} & \num{5} & \num{222} & \num{833} & \num{0.009} \\
dijkstra       & \num{7} & \num{9} & \num{684} & \num{2734} & \num{0.022} \\
poly1305       & \num{4} & \num{4} & \num{559} & \num{2135} & \num{0.005} \\
qsort          & \num{6} & \num{6} & \num{130} & \num{470} & \num{0.001} \\
rsa            & \num{22} & \num{27} & \num{3979} & \num{16305} & \num{0.373} \\
sensor\_fusion & \num{4} & \num{4} & \num{431} & \num{1696} & \num{0.003} \\
sha256         & \num{8} & \num{9} & \num{1995} & \num{7770} & \num{0.028} \\
stringsearch   & \num{16} & \num{21} & \num{778} & \num{3120} & \num{0.041} \\
\bottomrule
\end{tabular}
\end{adjustbox}
\end{table}

\begin{figure*}[t]
    \centering
    \includegraphics[width=\textwidth]{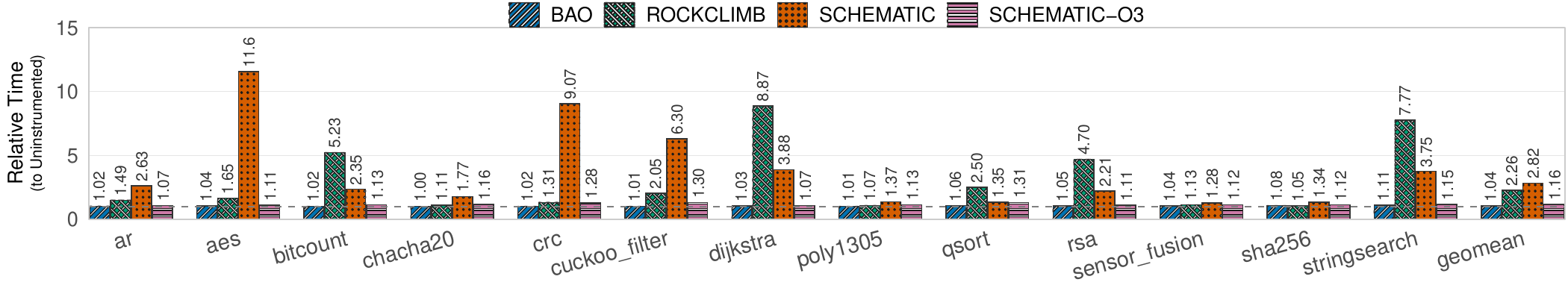}
    \caption{Normalized execution time with 5\,\textmu{}F capacitor.}
    \Description{Bar chart comparing normalized execution time across benchmarks with a 5 microfarad capacitor.}
    \label{fig:execution-time-5}
\end{figure*}

\begin{figure*}[t]
    \centering
    \includegraphics[width=\textwidth]{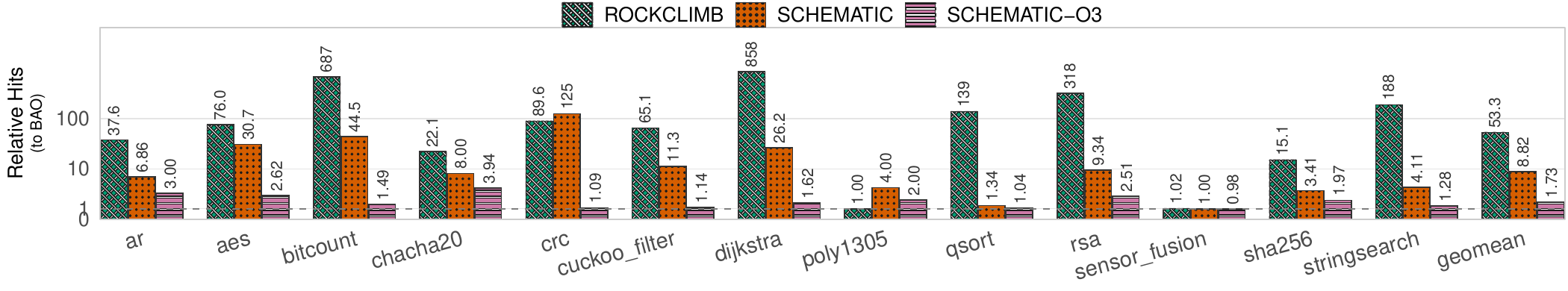}
    \caption{Normalized runtime region boundary hits with 5\,\textmu{}F capacitor.}
    \Description{Bar chart comparing normalized runtime region boundary hits across benchmarks with a 5 microfarad capacitor.}
    \label{fig:runtime-region-boundary-calls-5}
\end{figure*}

\begin{figure*}[t]
    \centering
    \includegraphics[width=\textwidth]{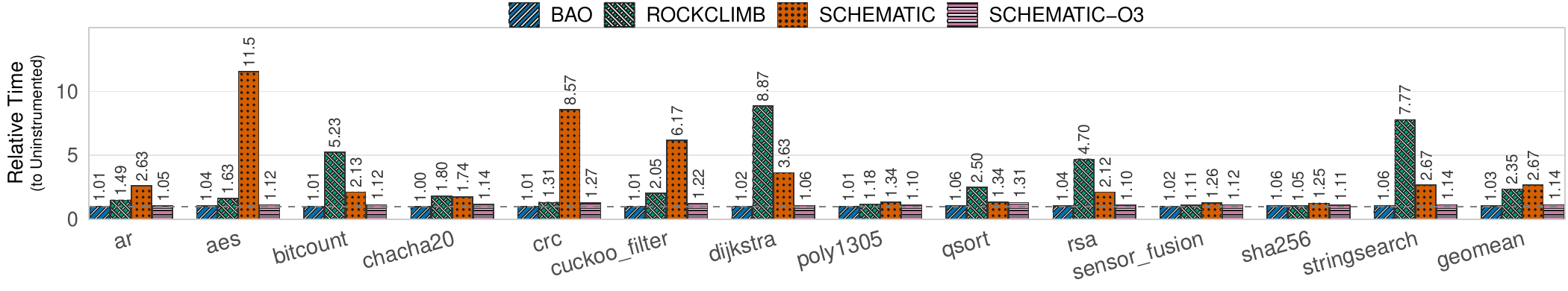}
    \caption{Normalized execution time with 10\,\textmu{}F capacitor.}
    \Description{Bar chart comparing normalized execution time across benchmarks with a 10 microfarad capacitor.}
    \label{fig:execution-time-10}
\end{figure*}

\begin{figure*}[t]
    \centering
    \includegraphics[width=\textwidth]{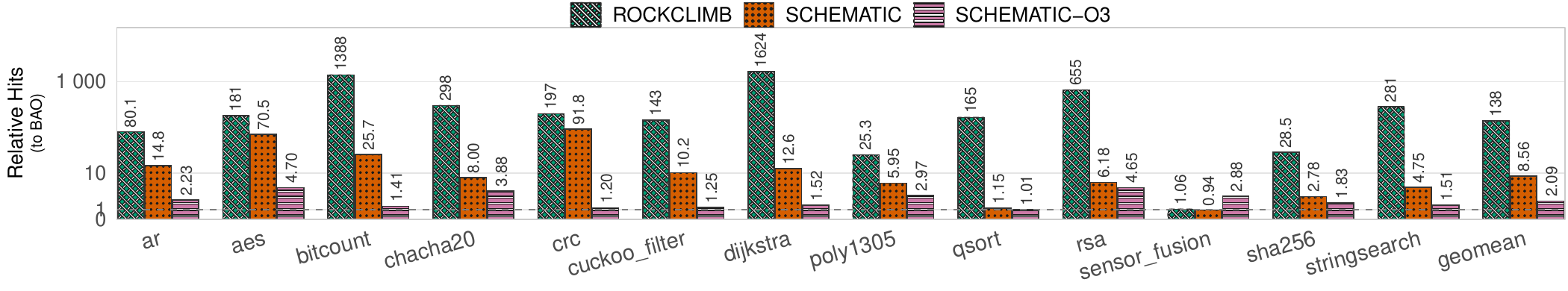}
    \caption{Normalized runtime region boundary hits with 10\,\textmu{}F capacitor.}
    \Description{Bar chart comparing normalized runtime region boundary hits across benchmarks with a 10 microfarad capacitor.}
    \label{fig:runtime-region-boundary-calls-10}
\end{figure*}

\begin{figure*}[t]
    \centering
    \includegraphics[width=\textwidth]{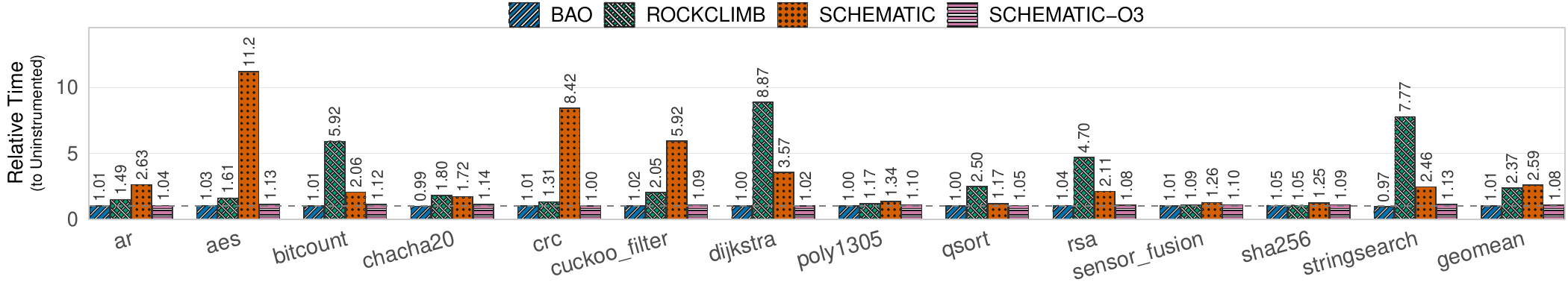}
    \caption{Normalized execution time with 50\,\textmu{}F capacitor.}
    \Description{Bar chart comparing normalized execution time across benchmarks with a 50 microfarad capacitor.}
    \label{fig:execution-time-50}
\end{figure*}

\begin{figure*}[t]
    \centering
    \includegraphics[width=\textwidth]{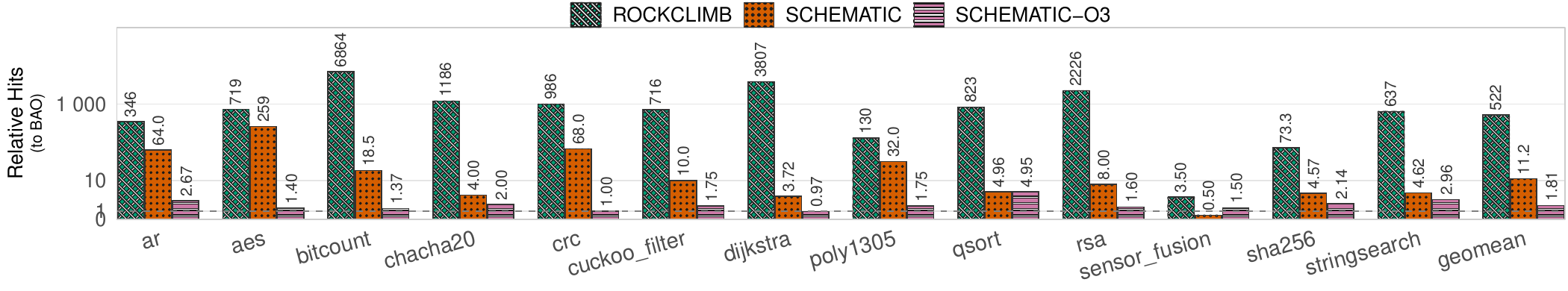}
    \caption{Normalized runtime region boundary hits with 50\,\textmu{}F capacitor.}
    \Description{Bar chart comparing normalized runtime region boundary hits across benchmarks with a 50 microfarad capacitor.}
    \label{fig:runtime-region-boundary-calls-50}
\end{figure*}

\begin{table*}[t]
    \centering
    \caption{Per-benchmark decomposition of execution time under
    intermittent power, as the mean share (\%) of each run's total spent in
    $T_{\mathrm{exec}}$ / $T_{\mathrm{ckpt}}$ / $T_{\mathrm{wait}}$.
    Boot and restore after power failures take the remainder, below 0.1\%
    everywhere.}
    \label{tab:intermittent-decomposition-appendix}
\begin{tabular*}{\textwidth}{l@{\extracolsep{\fill}}cccccccccccc}
\toprule
Benchmark & \multicolumn{3}{c}{\tool} & \multicolumn{3}{c}{\rockclimb} & \multicolumn{3}{c}{\schematic} & \multicolumn{3}{c}{\schematicO} \\ \cmidrule{2-4} \cmidrule{5-7} \cmidrule{8-10} \cmidrule{11-13}
~ & $T_{\mathrm{exec}}$ & $T_{\mathrm{ckpt}}$ & $T_{\mathrm{wait}}$ & $T_{\mathrm{exec}}$ & $T_{\mathrm{ckpt}}$ & $T_{\mathrm{wait}}$ & $T_{\mathrm{exec}}$ & $T_{\mathrm{ckpt}}$ & $T_{\mathrm{wait}}$ & $T_{\mathrm{exec}}$ & $T_{\mathrm{ckpt}}$ & $T_{\mathrm{wait}}$ \\
\midrule
aes & 5.9 & 0.4 & 93.7 & 0.4 & 1.6 & 98.0 & 0.2 & 1.9 & 98.0 & 2.0 & 0.5 & 97.6 \\
crc & 14.1 & 0.4 & 85.4 & 0.5 & 1.9 & 97.7 & 0.2 & 1.7 & 98.2 & 9.5 & 2.8 & 87.6 \\
rsa & 7.9 & 0.7 & 91.5 & 0.1 & 2.2 & 97.8 & 0.9 & 1.4 & 97.6 & 2.7 & 0.8 & 96.5 \\
dijkstra & 8.2 & 0.7 & 91.1 & 0.0 & 2.1 & 97.8 & 0.5 & 1.8 & 97.7 & 7.3 & 1.2 & 91.5 \\
qsort & 7.9 & 0.4 & 91.6 & 0.2 & 2.1 & 97.7 & 3.3 & 1.9 & 94.8 & 5.0 & 2.3 & 92.8 \\
activity\_recognition & 8.3 & 0.6 & 91.1 & 0.5 & 1.7 & 97.9 & 0.8 & 1.9 & 97.3 & 6.5 & 1.2 & 92.4 \\
bitcount & 9.8 & 0.6 & 89.6 & 0.1 & 2.2 & 97.8 & 1.1 & 2.2 & 96.8 & 8.6 & 0.8 & 90.6 \\
chacha20 & 11.6 & 0.2 & 88.2 & 1.6 & 1.4 & 97.0 & 1.9 & 1.7 & 96.4 & 3.7 & 0.8 & 95.6 \\
sensor\_fusion & 8.0 & 0.7 & 91.2 & 6.8 & 1.3 & 91.9 & 5.0 & 1.9 & 93.1 & 5.6 & 1.9 & 92.7 \\
poly1305 & 3.2 & 0.1 & 96.7 & 1.0 & 0.7 & 98.3 & 1.5 & 0.7 & 97.9 & 2.0 & 0.3 & 97.6 \\
cuckoo\_filter & 8.7 & 0.8 & 90.5 & 0.2 & 2.0 & 97.7 & 0.3 & 1.9 & 97.8 & 3.3 & 1.3 & 95.4 \\
sha256 & 7.0 & 0.8 & 92.2 & 1.2 & 0.8 & 98.1 & 2.7 & 1.0 & 96.3 & 3.5 & 0.5 & 95.9 \\
stringsearch & 6.1 & 1.9 & 92.2 & 0.0 & 2.2 & 97.8 & 0.6 & 2.1 & 97.3 & 3.9 & 1.7 & 94.4 \\
\bottomrule
\end{tabular*}
\end{table*}

\clearpage

\section{Microbenchmarks for Energy Model Validation}
\label{sec:appendix-microbenchmarks}

\Cref{lst:microbenchmarks} shows the four synthetic microbenchmarks,
each repeating a small basic block with a different addressing mode.
\Cref{tab:microbenchmark-error} reports the energy estimation error
on each.
\begin{listing}[!h]
\caption{Synthetic Microbenchmarks (Excerpted).}
\begin{lstlisting}[language=C, label={lst:microbenchmarks}]
void bench_register_only(void) {
  // .rept REPT
  //   mov.w %[src], %[dst]
  //   add.w %[step], %[dst]
  //   cmp.w %[expected], %[dst]
  //   jnz 1f
  // .endr
}

void bench_load_compute(void) {
  // .rept REPT
  //   mov.w 0(%[base_src]), %[dst]
  //   add.w #3, %[dst]
  //   cmp.w #0x1114, %[dst]
  //   jnz 1f
  // .endr
}

void bench_load_compute_store(void) {
  // .rept REPT
  //   mov.w 0(%[base_src]), %[tmp]
  //   add.w %[addend], %[tmp]
  //   mov.w %[tmp], 0(%[base_dst])
  //   cmp.w #0x3333, %[tmp]
  // .endr
}

void bench_pointer_walk(void) {
  // .rept REPT
  //   mov.w @%[stream_a]+, %[tmp0]
  //   mov.w @%[stream_b]+, %[tmp1]
  //   add.w %[tmp0], %[acc]
  //   cmp.w #0x4000, %[acc]
  // .endr
}
\end{lstlisting}
\end{listing}

\begin{table}[h]
\centering
\caption{Estimation error on synthetic microbenchmarks.}
\label{tab:microbenchmark-error}
\small
\begin{tabular*}{\linewidth}{@{}llr@{}}
\toprule
Microbenchmark & Addressing mode & Error \\
\midrule
reg.~only  & Register-only arithmetic       & $-0.46$\% \\
ld+comp    & Indexed load, compute          & $+0.07$\% \\
ld+comp+st & Indexed load, compute, store   & $-0.76$\% \\
ptr~walk   & Auto-increment pointer walk    & $-1.19$\% \\
\midrule
\multicolumn{2}{@{}l}{Mean absolute error} & $0.62$\% \\
\bottomrule
\end{tabular*}
\end{table}

}

\end{document}